\theoremstyle{definition}
\theoremstyle{plain}
\newtheorem{theorem}{Theorem}
\newtheorem{lemma}{Lemma}
\newtheoremstyle{bfnote}
{}{}
{}{}
{\bfseries}{.}
{ }
{\thmname{#1}\thmnumber{ #2}\thmnote{ (#3)}}
\theoremstyle{bfnote}
\newtheorem{remark}{Remark}
\newcommand{\rd}{\ensuremath{\mathit{RD}}\xspace}
\newcommand{\Tau}{\mathrm{T}}
\theoremstyle{plain}
\title[Strategyproof Social Decision Schemes]{Relaxed Notions of Condorcet-Consistency and Efficiency for Strategyproof Social Decision Schemes}
\author{Felix Brandt}
\affiliation{
  \institution{Technical University of Munich}
  \city{Munich}
  \country{Germany}}
\email{brandtf@in.tum.de}
\author{Patrick Lederer}
\affiliation{
  \institution{Technical University of Munich}
  \city{Munich}
  \country{Germany}}
\email{ledererp@in.tum.de}
\author{Ren\'e Romen}
\affiliation{
  \institution{Technical University of Munich}
  \city{Munich}
  \country{Germany}}
\email{rene.romen@tum.de}
\begin{abstract}
Social decision schemes (SDSs) map the preferences of a group of voters over some set of $m$ alternatives to a probability distribution over the alternatives.
A seminal characterization of strategyproof SDSs by \citeauthor{Gibb77a} implies that there are no strategyproof Condorcet extensions and that only random dictatorships satisfy \emph{ex post} efficiency and strategyproofness. The latter is known as the \emph{random dictatorship theorem}. We relax Condorcet-consistency and \emph{ex post} efficiency by introducing a lower bound on the probability of Condorcet winners and an upper bound on the probability of Pareto-dominated alternatives, respectively. 
We then show that the SDS that assigns probabilities proportional to Copeland scores is the only anonymous, neutral, and strategyproof SDS that can guarantee the Condorcet winner a probability of at least $2/m$. Moreover, no strategyproof SDS can exceed this bound, even when dropping anonymity and neutrality. Secondly, we prove a continuous strengthening of Gibbard's random dictatorship theorem: the less probability we put on Pareto-dominated alternatives, the closer to a random dictatorship is the resulting SDS.
Finally, we show that the only anonymous, neutral, and strategyproof SDSs that maximize the probability of Condorcet winners while minimizing the probability of Pareto-dominated alternatives are mixtures of the uniform random dictatorship and the randomized Copeland rule.

\end{abstract}
\keywords{Randomized Social Choice; Social Decision Schemes; Strategyproofness; Condorcet-consistency; \emph{ex post} efficiency}
\newcommand{\BibTeX}{\rm B\kern-.05em{\sc i\kern-.025em b}\kern-.08em\TeX}
\begin{document}

\pagestyle{fancy}
\fancyhead{}

\maketitle 

\section{Introduction}

Multi-agent systems are often faced with problems of collective decision making: how to find a group decision given the preferences of multiple individual agents. These problems, which have been traditionally studied by economists and mathematicians, are of increasing interest to computer scientists who employ the formalisms of social choice theory to analyze computational multi-agent systems \citep[see, e.g.,][]{NRTV07a,ShLe08a,BCE12a,BCE+14a}.

A pervasive phenomenon in collective decision making is strategic manipulation: voters may be better off by lying about their preferences than reporting them truthfully. This is problematic since all desirable properties of a voting rule are in doubt when voters act dishonestly. 
Thus, it is important that voting rules incentivize voters to report their true preferences.
Unfortunately, \citet{Gibb73a} and \citet{Satt75a} have shown independently that dictatorships are the only non-imposing voting rules that are immune to strategic manipulations.
However, these voting rules are unacceptable for most applications because they invariably return the most preferred alternative of a fixed voter. A natural question is whether more positive results can be obtained when allowing for randomization. \citet{Gibb77a} hence introduced \emph{social decision schemes} (SDSs), which map the preferences of the voters to a lottery over the alternatives and defined SDSs to be \emph{strategyproof} if no voter can obtain more expected utility for any utility representation that is consistent with his ordinal preference relation. He then gave a complete characterization of strategyproof SDSs in terms of convex combinations of two types of restricted SDSs, so-called unilaterals and duples. An important consequence of this result is the \emph{random dictatorship theorem}: random dictatorships are the only \emph{ex post} efficient and strategyproof SDSs. Random dictatorships are convex combinations of dictatorships, i.e., each voter is selected with some fixed probability and the top choice of the chosen voter is returned.
In contrast to deterministic dictatorships, the uniform random dictatorship, in which every agent is picked with the same probability, enjoys a high degree of fairness and is in fact used in many subdomains of social choice
\citep[see, e.g.,][]{AbSo98a,ChKo10a}. As a consequence of these observations, \citeauthor{Gibb77a}'s theorem has been the point of departure for a lot of follow-up work. In addition to several alternative proofs of the theorem \citep[e.g.,][]{Dugg96a,Nand97a,Tana03a},
there have been extensions with respect to manipulations by groups \citep{Barb79a}, cardinal preferences \citep[e.g.,][]{Hyll80a,DPS07a,Nand12a}, weaker notions of strategyproofness \citep[e.g.,][]{Beno02a,Sen11a,ABBB15a,BBEG16a}, and restricted domains of preference \citep[e.g.,][]{DPS02a,CSZ14a}.

Random dictatorships suffer from the disadvantage that they do not allow for compromise. For instance, suppose that voters strongly disagree on the best alternative, but have a common second best alternative. In such a scenario, it seems reasonable to choose the second best alternative but random dictatorships do not allow for this compromise.
On a formal level, this observation is related to the fact that random dictatorships violate \emph{Condorcet-consistency}, which demands that an alternatives that beats all other alternatives in pairwise majority comparisons should be selected.
Motivated by this observation, we analyze the limitations of strategyproof SDSs by relaxing two classic conditions: Condorcet-consistency and \emph{ex post} efficiency. To this end, we say that an SDS is \emph{$\alpha$-Condorcet-consistent} if a Condorcet winner always receives a probability of at least $\alpha$ and \emph{$\beta$-{ex post} efficient} if a Pareto-dominated alternative always receives a probability of at most $\beta$. Moreover, we say a strategyproof SDS is \emph{$\gamma$-randomly dictatorial} if it can be represented as a convex combination of two strategyproof SDSs, one of which is a random dictatorship that will be selected with probability $\gamma$. All of these axioms are discussed in more detail in \Cref{subSec:Relax}.

Building on an alternative characterization of strategyproof SDSs by \citeauthor{Barb79b} \citep{Barb79b}, we then show the following results ($m$ is the number of alternatives and $n$ the number of voters):

\begin{itemize}
\item Let $m,n\ge 3$. There is no strategyproof SDS that satisfies $\alpha$-Condorcet-consistency for $\alpha > \nicefrac{2}{m}$. Moreover, the \emph{randomized Copeland rule}, which assigns probabilities proportional to Copeland scores, is the only strategyproof SDS that satisfies anonymity, neutrality, and $\nicefrac{2}{m}$-Condorcet-consistency.
\item Let $0 \le \epsilon \le 1$ and $m\geq 3$. Every strategyproof SDS that is $\frac{1 - \epsilon}{m}$-\emph{ex post} efficient is $\gamma$-randomly dictatorial for $\gamma \ge \epsilon$. If we require additionally anonymity, neutrality, and $m \ge 4$, then only mixtures of the uniform random dictatorship and the uniform lottery satisfy this bound tightly. 
\item Let $m \ge 4$ and $n \ge 5$. No strategyproof SDS that is $\alpha$-Condorcet-consistent is $\beta$-\emph{ex post} efficient for $\beta< \frac{m-2}{m-1}\alpha$. If we additionally require anonymity and neutrality, then only mixtures of the uniform random dictatorship and the randomized Copeland rule satisfy $\beta = \frac{m-2}{m-1}\alpha$.
\end{itemize}

The first statement characterizes the randomized Copeland rule as the ``most Condorcet-consistent'' SDS that satisfies strategyproofness, anonymity, and neutrality. In fact, no strategyproof SDS can guarantee more than $\nicefrac{2}{m}$ probability to the Condorcet winner, even when dropping anonymity and neutrality.
The second point can be interpreted as a continuous strengthening of \citeauthor{Gibb77a}'s random dictatorship theorem: the less probability we put on Pareto-dominated alternatives, the more randomly dictatorial is the resulting SDS. In particular,
this theorem indicates that we cannot find appealing strategyproof SDSs by allowing that Pareto-dominated alternatives gain a small probability since the resulting SDS will be very similar to random dictatorships. 
The last statement identifies a tradeoff between $\alpha$-Condorcet-consistency and $\beta$-\emph{ex post} efficiency: the more probability a strategyproof SDS guarantees to the Condorcet winner, the less efficient it is. Thus, we can either only maximize the $\alpha$-Condorcet-consistency or the $\beta$-\emph{ex post} efficiency of a strategyproof SDS, which again highlights the central roles of the randomized Copeland rule and random dictatorships.

\section{The model}

Let $N = \{1, 2, \dots, n\}$ be a finite set of voters and let $A = \{a, b, \dots\}$ be a finite set of $m$ alternatives. Every voter $i$ has a \emph{preference relation}~$\succ_i$, which is an anti-symmetric, complete, and transitive binary relation on $A$. We write $x\succ_i y$ if voter $i$ prefers $x$ strictly to $y$ and $x\succeq_i y$ if $x \succ_i y$ or $x = y$. The set of all preference relations is denoted by $\mathcal{R}$. A \emph{preference profile} $R \in \mathcal{R}^n$ contains the preference relation of each voter $i \in N$.
We define the \emph{supporting size} for $x$ against $y$ in the preference profile $R$ by $n_{x y}(R)=|\{i \in N: x \succ_i y\}|$.

Given a preference profile, we are interested in the winning chance of each alternative. We therefore analyze social decision schemes (SDSs), which map each preference profile to a lottery over the alternatives. A \emph{lottery} $p$ is a probability distribution over the set of alternatives $A$, i.e., it assigns each alternative $x$ a probability $p(x) \ge 0$ such that $\sum_{x \in A} p(x) = 1$. The set of all lotteries over $A$ is denoted by $\Delta(A)$. Formally, a \emph{social decision scheme (SDS)} is a function $f:\mathcal{R}^n\rightarrow \Delta(A)$. We denote with $f(R, x)$ the probability assigned to alternative $x$ by $f$ for the preference profile $R$. 

Since there is a huge number of SDSs, we now discuss axioms formalizing desirable properties of these functions. Two basic fairness conditions are anonymity and neutrality. Anonymity requires that voters are treated equally. Formally, an SDS $f$ is \emph{anonymous} if $f(R)=f(\pi(R))$ for all preference profiles $R$ and permutations $\pi:N\rightarrow N$. Here, $R'=\pi(R)$ denotes the profile with ${\succ_{\pi(i)}'} = {\succ_i}$ for all voters $i\in N$. \emph{Neutrality} guarantees that alternatives are treated equally and formally requires for an SDS $f$ that $f(R, x)=f(\tau(R), \tau(x))$ for all preference profiles $R$ and permutations $\tau:A\rightarrow A$. This time, $R'=\tau(R)$ is the profile derived by permuting the alternatives in $R$ according to $\tau$, i.e, $\tau(x)\succ_i' \tau(y)$ if and only if $x\succ_i y$ for all alternatives $x, y \in A$ and voters $i\in N$.

\subsection{Stochastic Dominance and Strategyproofness}\label{subsec:strategyproofness}

This paper is concerned with strategyproof SDSs, i.e., social decision schemes in which voters cannot benefit by lying about their preferences. In order to make this formally precise, we need to specify how voters compare lotteries. To this end, we leverage the well-known notion of stochastic dominance: a voter $i$ (weakly) prefers a lottery $p$ to another lottery $q$, written as $p \succeq_i q$, if $\sum_{y\in A:y \succ_i x} p(y) \geq\sum_{y\in A:y \succ_i x} q(y)$ for every alternative $x\in A$. Less formally, a voter prefers a lottery $p$ weakly to a lottery $q$ if, for every alternative $x\in A$, $p$ returns a better alternative than $x$ with as least as much probability as $q$. Stochastic dominance does not induce a complete order on the set of lotteries, i.e., there are lotteries $p$ and $q$ such that a voter $i$ neither prefers $p$ to $q$ nor $q$ to $p$.

Based on stochastic dominance, we can now formalize strategyproofness. An SDS $f$ is \emph{strategyproof} if $f(R)\succeq_i f(R')$ for all preference profiles $R$ and $R'$ and voters $i\in N$ such that ${\succ_j}={\succ_j'}$ for all $j\in N\setminus \{i\}$. Less formally, strategyproofness requires that every voter prefers the lottery obtained by voting truthfully to any lottery that he could obtain by voting dishonestly. Conversely, we call an SDS $f$ \emph{manipulable} if it is not strategyproof. While there are other ways to compare lotteries with each other, stochastic dominance is the most common one \citep[see, e.g,][]{Gibb77a,Barb79b,BoMo01a,EPS02a,ABBB15a}. This is mainly due to the fact that $p \succeq_i q$ implies that the expected utility of $p$ is at least as high as the expected utility of $q$ for every vNM utility function that is ordinally consistent with voter $i$'s preferences. Hence, if an SDS is strategyproof, no voter can manipulate regardless of his exact utility function \citep[see, e.g.,][]{Sen11a,BBEG16a}. This observation immediately implies that the \emph{convex combination} $h=\lambda f+(1-\lambda)g$ (for some $\lambda\in[0,1]$) of two strategyproof SDSs $f$ and $g$ is again strategyproof: a manipulator who obtains more expected utility from $h(R')$ than $h(R)$ prefers $f(R')$ to $f(R)$ or $g(R')$ to $g(R)$. 

\citet{Gibb77a} shows that every strategyproof SDS can be represented as convex combinations of unilaterals and duples.\footnote{In order to simplify the exposition, we slightly modified \citeauthor{Gibb77a}'s terminology by requiring that duples and unilaterals have to be strategyproof.} The terms ``unilaterals'' and ``duples'' refer here to special classes of SDSs: a \emph{unilateral} is a strategyproof SDS that only depends on the preferences of a single voter $i$, i.e., $f(R)=f(R')$ for all preference profiles $R$ and $R'$ such that ${\succ_i}={\succ_i'}$. A \emph{duple}, on other hand, is a strategyproof SDS that only chooses between two alternatives $x$ and $y$, i.e., $f(R, z)=0$ for all preference profiles $R$ and alternatives $z\in A\setminus \{x, y\}$.

\begin{theorem}[\citet{Gibb77a}]\label{thm:Gibbard1}
	An SDS is strategyproof if and only if it can be represented as a convex combination of unilaterals and duples.
\end{theorem}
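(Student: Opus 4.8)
\textbf{The ``if'' direction} needs nothing new: unilaterals and duples are strategyproof by definition, and the discussion preceding the theorem already records that a convex combination $h=\lambda f+(1-\lambda)g$ of strategyproof SDSs is strategyproof (a voter who gains expected utility from $h(R')$ over $h(R)$ gains it from $f(R')$ over $f(R)$ or from $g(R')$ over $g(R)$, contradicting strategyproofness of $f$ or $g$). So all the work is in the ``only if'' direction: given a strategyproof $f$, produce the convex decomposition.

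\textbf{Step 1 (localization).} I would first reduce strategyproofness to its local form: $f$ is strategyproof if and only if no voter gains by an \emph{adjacent transposition}, i.e., by swapping two alternatives $x,y$ that are consecutive in his reported ranking. One direction is trivial; for the converse, write any misreport $\succ_i'$ as the end of a bubble-sort path $\succ_i=P_0,P_1,\dots,P_k=\succ_i'$ of adjacent transpositions in which every swapped pair is already inverted relative to the true preference $\succ_i$, and check that the per-step conditions compose to $f(R)\succeq_i f(R')$ (each step only moves probability ``downward'' with respect to $\succ_i$, so every $\succ_i$-upper-set loses mass). A direct unpacking of stochastic dominance shows the per-step condition says exactly: swapping adjacent $x\succ_i y$ to $y\succ_i x$ leaves $f(\cdot,z)$ unchanged for every $z\notin\{x,y\}$, weakly decreases $f(\cdot,x)$, and weakly increases $f(\cdot,y)$ — in particular the mass $f(\cdot,x)+f(\cdot,y)$ on the pair is untouched. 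These ``no leakage across a pair, no perverse transfer within it'' identities are the only tools used afterwards. A second, immediate observation is that fixing all but one voter's preferences turns $f$ into a strategyproof map of the remaining voter's preference alone — a unilateral in that voter — but one whose description still depends on the fixed profile; removing that dependence is the crux.

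\textbf{Step 2 (decomposition).} The plan is to induct on the number $m$ of alternatives, the case $m\le 2$ being vacuous since there $f$ is itself a duple. For the step, fix an alternative $a$. The local identities imply that $f(\cdot,a)$ behaves like a strategyproof scheme over the two ``options'' $\{a,\lnot a\}$: moving $a$ one position down in any voter's ranking can only weakly decrease $f(\cdot,a)$, with the freed mass going entirely to the alternative $a$ was swapped past. From this one aims to extract, for each other alternative $y$, a nonnegative weight $\mu_{ay}$ measuring the rate at which probability flows between $a$ and $y$, so that the corresponding pieces of $f$ assemble into duples $d_{ay}$ supported on $\{a,y\}$. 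Subtracting $\sum_{y}\mu_{ay}d_{ay}$ from $f$, the remainder — after verifying it is nonnegative and (a non-trivial point) still strategyproof — puts no probability on $a$, hence is, up to normalization, a strategyproof SDS on $A\setminus\{a\}$, to which the induction hypothesis applies and yields further unilaterals and duples. One finally checks that all extracted weights are nonnegative and sum to $1$.

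\textbf{The main obstacle} is pushing the decomposition of Step 2 through. Two points are delicate. First, subtracting the extracted duple components need not leave a strategyproof residual for free — strategyproofness is not preserved under arbitrary ``subtraction'' of strategyproof SDSs — so the duple flows must be chosen exactly right. Second, and more importantly, once each $a$ has been eliminated one must show that what remains is genuinely a convex combination of \emph{unilaterals}, i.e., that the residual scheme's dependence on the voters factorizes voter by voter rather than mixing several voters' preferences irreducibly; this requires applying the local pairwise conditions to several voters simultaneously and is the combinatorial heart of the theorem. Since this bookkeeping is intricate and has been carried out carefully elsewhere, I would not reproduce it: I would cite \citet{Gibb77a} (or one of the streamlined proofs, e.g.\ \citet{Dugg96a,Nand97a,Tana03a}) and use \Cref{thm:Gibbard1} as a black box, exactly as the rest of the paper does.
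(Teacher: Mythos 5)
Your treatment matches the paper's: the paper does not prove this statement at all but imports it from \citet{Gibb77a}, and your proposal likewise handles only the easy direction (convex combinations of strategyproof unilaterals and duples are strategyproof, as already noted in the text) before deferring the decomposition argument to Gibbard, with your Step 1 local conditions being exactly the non-perversity and localizedness of \Cref{thm:Gibbard2}. Since the hard direction is explicitly left to the cited source rather than claimed, this is consistent with how the paper itself uses the result as a black box.
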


Since duples and unilaterals are by definition strategyproof, \Cref{thm:Gibbard1} only states that strategyproof SDSs can be decomposed into a mixture of strategyproof SDSs, each of which must be of a special type. In order to circumvent this restriction, Gibbard proves another characterization of strategyproof SDSs. 

\begin{theorem}[\citet{Gibb77a}]\label{thm:Gibbard2}
	An SDS is strategyproof if and only if it is non-perverse and localized.
\end{theorem}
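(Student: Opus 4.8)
The plan is to prove both implications directly via the ``atomic'' deviations in which a single voter transposes two alternatives that are consecutive in her ranking, since localizedness and non-perversity are by definition conditions on exactly such pairs of profiles: if $R$ and $R'$ agree for all voters $j\neq i$ and $\succ_i'$ is obtained from $\succ_i$ by swapping two $\succ_i$-consecutive alternatives $x$ (just above) and $y$ (just below), then localizedness demands $f(R,z)=f(R',z)$ for all $z\notin\{x,y\}$, and non-perversity demands $f(R',x)\le f(R,x)$ (pushing $x$ down must not raise its probability). Fix such a pair and let $U$ be the set of alternatives ranked by voter $i$ above both $x$ and $y$, common to both rankings. The one elementary fact that drives everything is that, as $z$ ranges over $A$, the strict upper contour set $\{w:w\succ_i z\}$ ranges exactly over the initial segments (``prefixes'') of the linear order $\succ_i$, and that the prefixes of $\succ_i$ and of $\succ_i'$ are identical except that $\succ_i$ has $U\cup\{x\}$ as a prefix while $\succ_i'$ has $U\cup\{y\}$.

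For ``strategyproof $\Rightarrow$ non-perverse and localized'' I would invoke strategyproofness for the pair $(R,R')$ in both directions. Treating $\succ_i$ as the true preference gives $f(R)\succeq_i f(R')$, i.e.\ $\sum_{w\in V}f(R,w)\ge\sum_{w\in V}f(R',w)$ for every prefix $V$ of $\succ_i$; treating $\succ_i'$ as the true preference gives the reverse inequality on every prefix of $\succ_i'$. On the prefixes common to both orders these two families of inequalities coincide and hence become equalities; telescoping along them yields $f(R,z)=f(R',z)$ for all $z\notin\{x,y\}$, which is localizedness (and forces $f(R,x)+f(R,y)=f(R',x)+f(R',y)$). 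The single surviving strict inequality, $\sum_{w\in U\cup\{x\}}f(R,w)\ge\sum_{w\in U\cup\{x\}}f(R',w)$, then reduces via $\sum_{w\in U}f(R,w)=\sum_{w\in U}f(R',w)$ to $f(R,x)\ge f(R',x)$, which is non-perversity. (The forward direction could alternatively be read off \Cref{thm:Gibbard1} after checking that unilaterals and duples are non-perverse and localized and that these properties survive convex combinations, but the direct argument is shorter.)

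For the converse I would link the reported ranking $\succ_i'$ to the true ranking $\succ_i$ by a sequence of adjacent transpositions chosen to ``bubble sort'' towards $\succ_i$: whenever the current order still differs from $\succ_i$, some pair consecutive in it is inverted relative to $\succ_i$ (otherwise, by transitivity, the current order would equal $\succ_i$), and transposing that pair strictly lowers the number of $\succ_i$-inversions, so after finitely many steps we reach $\succ_i$. Let $R^0=R',R^1,\dots,R^L=R$ be the corresponding profiles. At each step, localizedness confines the change between $f(R^t)$ and $f(R^{t+1})$ to the two swapped alternatives $w$ (moved down) and $z$ (moved up), with their combined probability preserved, and non-perversity gives $f(R^{t+1},w)\le f(R^t,w)$; hence the step transfers some mass $\delta\ge 0$ from $w$ to $z$. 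Since the swap was inverted with respect to $\succ_i$ we have $z\succ_i w$, so every prefix of $\succ_i$ that contains $w$ also contains $z$, and the transfer can only weakly increase each sum $\sum_{w\succ_i u}\cdot$; thus $f(R^{t+1})\succeq_i f(R^t)$ with respect to the \emph{fixed} truthful order. Stochastic dominance relative to a fixed preference is transitive, so chaining the $L$ steps gives $f(R)\succeq_i f(R')$, i.e.\ strategyproofness.

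The first direction is essentially bookkeeping once the prefix observation is in place. I expect the real subtlety to lie in the converse: one must recognize that the chain of transpositions has to be directed toward the \emph{truthful} ranking (not along an arbitrary path between the two rankings), so that at every intermediate step the non-perverse mass transfer points the right way for stochastic dominance relative to $\succ_i$, and one must use transitivity of single-preference stochastic dominance to glue the local improvements into the global inequality.
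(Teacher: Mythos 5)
Your proof is correct. Note that the paper itself does not prove \Cref{thm:Gibbard2} at all; it is stated as a cited result of Gibbard (1977), so there is no in-paper argument to compare against. Judged on its own, your two directions are sound: since non-perversity and localizedness are, by the paper's definition of $R^{i:yx}$, conditions on exactly the adjacent-swap pairs of profiles, applying strategyproofness in both directions to such a pair and cancelling on the prefixes common to $\succ_i$ and $\succ_i'$ does yield $f(R,z)=f(R',z)$ for $z\notin\{x,y\}$ and $f(R,x)\ge f(R',x)$; the only cosmetic remark is that the paper's non-perversity is phrased as $f(R^{i:yx},y)\ge f(R,y)$, which is equivalent to your ``demoted alternative loses probability'' version only after combining with the localizedness you have just established (via $f(R,x)+f(R,y)=f(R',x)+f(R',y)$), so it is worth stating that one extra line. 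Your converse correctly handles the two genuine subtleties: the chain of adjacent transpositions must be directed toward the truthful order $\succ_i$ (so that each swap promotes the $\succ_i$-preferred alternative and the transferred mass can only increase each upper-contour sum), and the local improvements are glued by transitivity of stochastic dominance with respect to the fixed order $\succ_i$. This is essentially the standard elementary proof of Gibbard's equivalence, and an alternative (which you mention) would be to deduce the forward direction from \Cref{thm:Gibbard1}, though your direct argument avoids relying on that decomposition.
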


Non-perversity and localizedness are two axioms describing the behavior of an SDS. 
For defining these axioms, we denote with $R^{i:yx}$ the profile derived from $R$ by only reinforcing $y$ against $x$ in voter $i$'s preference relation. Note that this requires that $x\succ_i y$ and that there is no alternative $z\in A$ such that $x\succ_i z\succ_iy$. Then, an SDS $f$ is \emph{non-perverse} if $f(R^{i:yx},y)\geq f(R,y)$ for all preference profiles $R$, voters $i\in N$, and alternatives $x,y\in A$. Moreover, an SDS is \emph{localized} 
if $f(R^{i:yx},z)=f(R,z)$ for all preference profiles $R$, voters $i\in N$, and distinct alternatives $x,y,z\in A$. Intuitively, non-perversity---which is now often referred to as monotonicity---requires that the probability of an alternative only increases if it is reinforced, and localizedness that the probability of an alternative does not depend on the order of the other alternatives. Together, \Cref{thm:Gibbard1} and \Cref{thm:Gibbard2} show that each strategyproof SDS can be represented as a mixture of unilaterals and duples, each of which is non-perverse and localized.

Since Gibbard's results can be quite difficult to work with, we now state another characterization of strategyproof SDSs due to \citet{Barb79b}. This author has shown that every strategyproof SDS that satisfies anonymity and neutrality can be represented as a convex combination of a supporting size SDS and a point voting SDS. A \emph{point voting SDS} is defined by a scoring vector $(a_1, a_2, \dots, a_m)$ that satisfies $a_1 \ge a_2 \ge \dots \ge a_m \ge 0 $ and $\sum_{i \in \{1, \dots, m\}} a_i = \frac{1}{n}$. The probability assigned to an alternative $x$ by a point voting SDS $f$ is $f(R, x) = \sum_{i \in N} a_{|\{y \in A:y \succeq_i x \}|}$.
Furthermore, \emph{supporting size SDSs} also rely on a scoring vector $(b_n, b_{n-1}, \dots, b_0)$ with $b_n \ge b_{n-1} \ge \dots \ge b_0 \ge 0 $ and $b_i + b_{n-i} = \frac{2}{m(m-1)}$ for all $i\in \{0,\dots,n\}$ to compute the outcome. The probability assigned to an alternative $x$ by a supporting size SDS $f$ is then $f(R, x) = \sum_{y \in A \setminus \{x\}} b_{n_{xy}(R)}$.
Note that point voting SDSs can be seen as a generalization of (deterministic) positional scoring rules and supporting size SDSs can be seen as a variant of Fishburn's C2 functions \citep{Fish77a}.

\begin{theorem}[\citet{Barb79b}]\label{thm:Barbera}
	An SDS is anonymous, neutral, and strategyproof if and only if it can be represented as a convex combination of a point voting SDS and a supporting size SDS.
\end{theorem}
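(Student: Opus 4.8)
The plan is to deduce Barberà's characterization from Gibbard's two characterizations above. For the \emph{sufficiency} direction I would first check that point voting SDSs and supporting size SDSs are anonymous, neutral, and strategyproof, and then invoke that convex combinations preserve strategyproofness (as noted in the text) as well as anonymity and neutrality, together with the easy observation that a convex combination of point voting SDSs is again a point voting SDS and likewise for supporting size SDSs. Anonymity of the two base families is immediate because they depend only on the multiset of votes (through the rank statistics, resp.\ through the supporting sizes $n_{xy}$), and neutrality follows from the equivariance of these statistics under relabeling alternatives; the normalization conditions $\sum_i a_i = \tfrac1n$ and $b_i + b_{n-i} = \tfrac{2}{m(m-1)}$ are exactly what forces the outputs to be lotteries. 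Strategyproofness I would establish through \Cref{thm:Gibbard2}, i.e.\ by verifying non-perversity and localizedness: reinforcing $y$ against $x$ in one vote raises $y$'s rank, resp.\ the count $n_{yx}$, by one and leaves the data of every third alternative untouched, so monotonicity of the scoring vectors gives the claim.

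For the \emph{necessity} direction, let $f$ be strategyproof. By \Cref{thm:Gibbard1} write $f = \sum_j \lambda_j g_j$ with $\lambda_j > 0$, $\sum_j \lambda_j = 1$, and each $g_j$ a strategyproof unilateral or duple. The engine of the proof is symmetrization: for an SDS $h$ set $\bar h(R,x) = \frac{1}{n!\,m!} \sum_{\pi,\tau} h(\tau \pi R, \tau x)$, where $\pi$ ranges over voter permutations and $\tau$ over alternative permutations (which commute). Relabeling voters or alternatives sends a strategyproof SDS to a strategyproof SDS, so $\bar h$ is strategyproof whenever $h$ is; and since $f$ is anonymous and neutral, $f = \bar f = \sum_j \lambda_j \bar g_j$. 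Hence it suffices to show that the symmetrization of a strategyproof unilateral is a point voting SDS and that the symmetrization of a strategyproof duple is a supporting size SDS.

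To carry this out, consider first a unilateral $g$ that depends only on voter $i_0$, say $g(R) = \phi(\succ_{i_0})$. Averaging over voter permutations first gives $\bar g(R,x) = \tfrac1n \sum_{j\in N} \psi(\succ_j)(x)$, where $\psi$ is the alternative-symmetrization of $\phi$; now $\psi$ is a strategyproof single-voter SDS that is also neutral, and on a strict vote neutrality forces $\psi$ to be positional while \Cref{thm:Gibbard2} (non-perversity) forces the resulting weights to be non-increasing and nonnegative summing to one, so $\bar g$ is a point voting SDS. For a duple $g$ supported on $\{x_0,y_0\}$, I would again use \Cref{thm:Gibbard2}: localizedness together with $g(\cdot,x_0)+g(\cdot,y_0)=1$ shows that moving $x_0$ past any third alternative cannot change $g(\cdot,x_0)$, so after averaging over voter permutations $g(R,x_0)$ becomes a non-decreasing function $\eta(n_{x_0y_0}(R))$; a direct count over alternative permutations — each ordered pair of distinct alternatives plays the role of $(x_0,y_0)$ equally often — then turns $\bar g$ into $\bar g(R,x) = \sum_{y\neq x} b_{n_{xy}(R)}$ with $b_k = \tfrac{1}{m(m-1)}\bigl(\eta(k) + 1 - \eta(n-k)\bigr)$, and one checks directly that $b_n \ge \dots \ge b_0 \ge 0$ and $b_k + b_{n-k} = \tfrac{2}{m(m-1)}$. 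Splitting $\sum_j \lambda_j \bar g_j$ into its unilateral part and its duple part and using that each part is itself a point voting, resp.\ supporting size, SDS then finishes the argument.

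The main obstacle is the structural analysis of unilaterals and duples that feeds the symmetrization: extracting from ``non-perverse and localized'' that a neutralized strategyproof single-voter rule is a monotone positional scheme, and that a strategyproof duple depends monotonically on a single supporting size. One must also be careful that the voter- and alternative-relabelings commute, that the symmetrization operator genuinely fixes $f$, and that the bookkeeping in the duple count uses consistent permutation conventions. The remaining ingredients — the normalization arithmetic and the fact that the admissible scoring vectors form convex sets, so that convex combinations stay within each family — are routine.
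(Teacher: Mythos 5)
The paper does not prove this statement at all---it is imported verbatim from \citet{Barb79b}---so there is no internal proof to compare against; what you propose is a self-contained derivation from \Cref{thm:Gibbard1,thm:Gibbard2}, and it is sound. The sufficiency direction is routine exactly as you say: both families are anonymous and neutral, non-perversity and localizedness follow from monotonicity of the scoring vectors, and each family is a convex set. For necessity, your symmetrization engine works: $\bar h$ is strategyproof because relabelings preserve strategyproofness and mixtures of strategyproof SDSs are strategyproof, and $f=\bar f$ follows directly from the paper's definitions of anonymity and neutrality. The two structural claims you flag as the main obstacle do go through: for a unilateral, neutrality of the alternative-symmetrized single-voter rule forces $\psi(\succ)(x)$ to depend only on the rank of $x$ (any two equal-rank pairs are related by a relabeling, using strictness of preferences), and non-perversity applied to adjacent swaps yields the monotone nonnegative weights; for a duple on $\{x_0,y_0\}$, localizedness plus $g(\cdot,x_0)+g(\cdot,y_0)=1$ shows invariance under any swap not involving both $x_0$ and $y_0$, so after voter-averaging the rule depends only on $n_{x_0y_0}$, monotonically by non-perversity, and your count over the $(m-2)!$ permutations per ordered pair gives exactly $b_k=\frac{1}{m(m-1)}\bigl(\eta(k)+1-\eta(n-k)\bigr)$, which satisfies $b_k+b_{n-k}=\frac{2}{m(m-1)}$, monotonicity, and nonnegativity since $\eta\in[0,1]$. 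The only small points to make explicit in a full write-up are the final regrouping step (collect the unilateral and duple parts, renormalize within each, and use convexity of each family, handling the degenerate case where one part has total weight zero) and the verification that voter- and alternative-relabelings commute under the paper's conventions; both are straightforward. This route is essentially the classical one: Barber\`a's original argument likewise builds on Gibbard's decomposition, with anonymity and neutrality collapsing the unilaterals to a point voting SDS and the duples to a supporting size SDS.
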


Many well-known SDSs can be represented as point voting SDSs or supporting size SDSs. For example, the \emph{uniform random dictatorship} $f_{\mathit{RD}}$, which chooses one voter uniformly at random and returns his best alternative, is the point voting SDS defined by the scoring vector $\left(\frac{1}{n}, 0, \dots, 0\right)$. An instance of a supporting size SDS is the \emph{randomized Copeland rule} $f_{C}$, which assigns probabilities proportional to the Copeland scores $c(x,R)=|\{y\in A\setminus \{x\}\colon n_{xy}(R)>n_{yx}(R)\}| + \frac{1}{2}|\{y\in A\setminus \{x\}\colon n_{xy}(R)=n_{yx}(R)\}|$. This SDS is the supporting size SDS defined by the vector $b=\left(b_n, b_{n-1}, \dots, b_0\right)$, where $b_i=\frac{2}{m(m-1)}$ if $i>\frac{n}{2}$, $b_i=\frac{1}{m(m-1)}$ if $i=\frac{n}{2}$, and $b_i=0$ otherwise. Furthermore, there are SDSs that can be represented both as point voting SDSs and supporting size SDSs. An example is the \emph{randomized Borda rule} $f_B$, which randomizes proportional to the Borda scores of the alternatives. This SDS is the point voting SDS defined by the vector $\left(\frac{2(m-1)}{nm(m-1)}, \frac{2(m-2)}{nm(m-1)}, \cdots, \frac{2}{nm(m-1)},0\right)$ and equivalently the supporting size SDS defined by the vector $ \left(\frac{2n}{nm(m-1)}, \frac{2(n-1)}{nm(m-1)}, \cdots, \frac{2}{nm(m-1)}, 0\right) $. Both the randomized Copeland rule and the randomized Borda rule were rediscovered several times by authors who were apparently unaware of \citeauthor{Barb79b}'s work \citep[see][]{Heck03a,CoSa06c,Proc10a,HeCh13a}.

\subsection{Relaxing Classic Axioms}\label{subSec:Relax}

The goal of this paper is to identify attractive strategyproof SDSs other than random dictatorships by relaxing classic axioms from social choice theory. In more detail, we investigate how much probability can be guaranteed to Condorcet winners and how little probability must be assigned to Pareto-dominated alternatives by strategyproof SDSs. In the following we formalize these ideas using $\alpha$-Condorcet-consistency and $\beta$-\emph{ex post} efficiency.

Let us first consider $\beta$-\emph{ex post} efficiency, which is based on Pareto-dominance. An alternative $x$ \emph{Pareto-dominates} another alternative $y$ in a preference profile $R$ if $x\succ_i y$ for all $i\in N$. 
The standard notion of \emph{ex post efficiency} then formalizes that Pareto-dominated alternatives should have no winning chance, i.e., $f(R,x)=0$ for all preference profiles $R$ and alternatives $x$ that are Pareto-dominated in $R$. As first shown by Gibbard, random dictatorships are the only strategyproof SDSs that satisfy \emph{ex post} efficiency. These SDSs choose each voter with a fixed probability and return his best alternative as winner. However, this result breaks down once we allow that Pareto-dominated alternatives can have a non-zero chance of winning $\beta>0$. For illustrating this point, consider a random dictatorship $d$ and another strategyproof SDS $g$. Then, the SDS $f^*=(1-\beta) d + \beta g$ is strategyproof for every $\beta\in (0,1]$ and no random dictatorship, but assigns a probability of at most $\beta$ to Pareto-dominated alternatives. We call the last property $\beta$-\emph{ex post} efficiency: an SDS $f$ is \emph{$\beta$-ex post efficient} if $f(R,x)\le\beta$ for all preference profiles $R$ and alternatives $x$ that are Pareto-dominated in $R$. 

A natural generalization of the random dictatorship theorem is to ask which strategyproof SDSs satisfy $\beta$-\emph{ex post} efficiency for small values of $\beta$. If $\beta$ is sufficiently small, $\beta$-\emph{ex post} efficiency may be quite acceptable. As we show, the random dictatorship theorem is quite robust in the sense that all SDSs that satisfy $\beta$-\emph{ex post} efficiency for $\beta<\frac{1}{m}$ are similar to random dictatorships. In order to formalize this observation, we introduce $\gamma$-randomly dictatorial SDSs: a strategyproof SDS $f$ is \emph{$\gamma$-randomly dictatorial} if $\gamma\in [0,1]$ is the maximal value such that $f$ can be represented as $f=\gamma d + (1-\gamma) g$, where $d$ is a random dictatorship and $g$ is another strategyproof SDS. In particular, we require that $g$ is strategyproof as otherwise, SDSs that seem ``non-randomly dictatorial'' are not $0$-randomly dictatorial. For instance, the uniform lottery $f_U$, which always assigns probability $\frac{1}{m}$ to all alternatives, is not $0$-randomly dictatorial if $g$ is not required to be strategyproof because it can be represented as $f_U=\frac{1}{m}d_i+\frac{m-1}{m}g$, where $d_i$ is the dictatorial SDS of voter $i$ and $g$ is the SDS that randomizes uniformly over all alternatives but voter $i$'s favorite one. Moreover, it should be mentioned that the maximality of $\gamma$ implies that $g$ is $0$-randomly dictatorial if $\gamma<1$. Otherwise, we could also represent $g$ as a mixture of a random dictatorship and some other strategyproof SDS $h$, which means that $f$ is $\gamma'$-randomly dictatorial for $\gamma'>\gamma$.

For a better understanding of $\gamma$-randomly dictatorial SDSs, we provide next a characterization of these SDSs. Recall for the following lemma that $R^{i:yx}$ denotes the profile derived from $R$ by only reinforcing $y$ against $x$ in voter $i$'s preference relation. 

\begin{restatable}{lemma}{zerodict}\label{lem:zerodict}
	A strategyproof SDS $f$ is $\gamma$-randomly dictatorial if and only if there are non-negative values $\gamma_1,\dots, \gamma_n$ such that: 
	\begin{enumerate}[label=\roman*), itemsep=0ex,parsep=1ex]
		\item $\sum_{i\in N}\gamma_i=\gamma$.
		\item $f(R^{i:yx},y)-f(R,y)\geq \gamma_i$ for all alternatives $x,y\in A$, voters $i\in N$, and preference profiles $R$ in which voter $i$ prefers $x$ the most and $y$ the second most.
		\item for every voter $i\in N$, there are alternatives $x,y\in A$ and a profile $R$ such that voter $i$ prefers $x$ the most and $y$ the second most in $R$, and $f(R^{i:yx},y)-f(R,y)= \gamma_i$.
	\end{enumerate} 
\end{restatable}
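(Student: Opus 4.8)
The plan is to pin down canonical candidate values for $\gamma_1,\dots,\gamma_n$ and show they are forced. For each voter $i$ I would set
\[
\gamma_i^{*}:=\min\bigl\{\,f(R^{i:yx},y)-f(R,y)\ :\ x,y\in A,\ R\text{ a profile in which }i\text{ ranks }x\text{ first and }y\text{ second}\,\bigr\}
\]
and $\gamma^{*}:=\sum_{i\in N}\gamma_i^{*}$; the minimum exists because there are finitely many profiles, and $\gamma_i^{*}\ge 0$ by non-perversity (\Cref{thm:Gibbard2}). The lemma then splits into two claims: (a) $f$ is $\gamma^{*}$-randomly dictatorial, and (b) every family of non-negative reals $(\gamma_i)$ meeting conditions (ii) and (iii) equals $(\gamma_i^{*})$ --- indeed (ii) forces every member of the set defining $\gamma_i^{*}$ to be at least $\gamma_i$, so $\gamma_i^{*}\ge\gamma_i$, while (iii) exhibits one member equal to $\gamma_i$, so $\gamma_i^{*}\le\gamma_i$. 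Granting (a) and (b), the equivalence follows: if $f$ is $\gamma$-randomly dictatorial then $\gamma=\gamma^{*}$ by (a) and $(\gamma_i^{*})$ witnesses the right-hand side, since (i) reads $\sum_i\gamma_i^{*}=\gamma^{*}=\gamma$ and (ii)--(iii) hold by definition of the minimum and by its attainment; conversely a witnessing family equals $(\gamma_i^{*})$ by (b), so $\gamma=\sum_i\gamma_i=\gamma^{*}$ and $f$ is $\gamma$-randomly dictatorial by (a).

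The heart of claim (a) is the inequality
\[
f(R,x)\ \ge\ \sum_{i\,:\,x\text{ is }i\text{'s top choice in }R}\gamma_i^{*}\qquad\text{for every profile }R\text{ and alternative }x .
\]
To prove it I would enumerate the voters $i_1,\dots,i_k$ who rank $x$ first in $R$ and build profiles $R=R^{(0)},R^{(1)},\dots,R^{(k)}$, obtaining $R^{(j)}$ from $R^{(j-1)}$ by swapping $x$ with the alternative voter $i_j$ currently ranks second. At step $j$ voter $i_j$'s ballot still equals the one in $R$ (only the ballots of $i_1,\dots,i_{j-1}$ have changed), so $x$ is $i_j$'s top choice there and the swap has the admissible form $R^{i_j:yx}$: by definition of $\gamma_{i_j}^{*}$ the swapped-up alternative gains at least $\gamma_{i_j}^{*}$ probability, and by localizedness every alternative other than $x$ and that one is unaffected, so $f(\cdot,x)$ loses at least $\gamma_{i_j}^{*}$. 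Telescoping and using $f(R^{(k)},x)\ge 0$ gives the inequality. Evaluating it at the profile in which all voters rank $x$ first yields $\gamma^{*}\le 1$; summing it over all $x$ at an arbitrary $R$ and comparing totals shows that if $\gamma^{*}=1$ all these inequalities are tight, so $f$ is the random dictatorship $\sum_i\gamma_i^{*}d_i$.

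For $0<\gamma^{*}<1$ I would put $d:=\sum_i(\gamma_i^{*}/\gamma^{*})\,d_i$, a random dictatorship, and $g:=\tfrac{1}{1-\gamma^{*}}(f-\gamma^{*}d)$. The inequality above makes $g$ non-negative, it sums to one by a one-line computation, it is localized because $f$ and $d$ are, and it is non-perverse: for a reinforcement $R\mapsto R^{i:yx}$ one has $\gamma^{*}\bigl(d(R^{i:yx},y)-d(R,y)\bigr)=\gamma_i^{*}$ when $i$ ranks $x$ first and $y$ second in $R$ and $=0$ otherwise, and in these two cases $f(R^{i:yx},y)-f(R,y)$ is respectively $\ge\gamma_i^{*}$ (definition of $\gamma_i^{*}$) and $\ge 0$ (non-perversity of $f$). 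Hence $g$ is strategyproof by \Cref{thm:Gibbard2}, and $f=\gamma^{*}d+(1-\gamma^{*})g$ shows $f$ is $\gamma$-randomly dictatorial for some $\gamma\ge\gamma^{*}$. For the matching upper bound, if $f=\gamma d+(1-\gamma)g$ with $d=\sum_i p_i d_i$ and $g$ strategyproof, then on a profile where $i$ ranks $x$ first and $y$ second one has $f(R^{i:yx},y)-f(R,y)=\gamma p_i+(1-\gamma)\bigl(g(R^{i:yx},y)-g(R,y)\bigr)\ge\gamma p_i$, so $\gamma_i^{*}\ge\gamma p_i$ and therefore $\gamma^{*}\ge\gamma$; thus $\gamma^{*}$ is exactly the maximal mixing weight. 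This upper bound also disposes of $\gamma^{*}=0$ (every mixing weight is then $0$, while $f=0\cdot d+1\cdot f$ shows $f$ is $0$-randomly dictatorial), and the previous paragraph disposes of $\gamma^{*}=1$.

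The step I expect to be the main obstacle is the key inequality of the second paragraph: one must process the voters ranking $x$ first one at a time so that every transposition stays of the admissible top-with-second form $R^{i:yx}$, and must invoke localizedness carefully so that only the probability of $x$, and of no other alternative, is charged at each step. The affine decomposition, the verification that $g$ is localized and non-perverse, and the derivation of (b) and of the two implications from (a) and (b) are then straightforward.
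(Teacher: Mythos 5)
Your proposal is correct and follows essentially the same route as the paper's proof: the same key inequality $f(R,x)\ge\sum_{i\in S_x}\gamma_i$ obtained by letting the voters who top-rank $x$ swap it down one at a time and invoking localizedness, the same affine decomposition $g=\frac{1}{1-\gamma}\bigl(f-\sum_i\gamma_i d_i\bigr)$ verified to be non-perverse and localized via \Cref{thm:Gibbard2}, and the same observation that any decomposition's dictatorial weights are dominated by the probability gains at top swaps. The only difference is organizational: you define the canonical values $\gamma_i^{*}$ as explicit minima and get maximality and the tightness condition (iii) directly from attainment of the minimum, whereas the paper establishes these points by contradiction arguments; the mathematical content is the same.
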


The proof of this lemma can be found in the appendix. \Cref{lem:zerodict} gives an intuitive interpretation of $\gamma$-randomly dictatorial SDSs: this axiom only requires that there are voters who always increase the winning probability of an alternative by at least $\gamma_i$ if they reinforce it to the first place. Hence, for small values of $\gamma$, this axiom is desirable as it only formulates a variant of strict monotonicity. However, for larger values of $\gamma$, $\gamma$-randomly dictatorial SDSs become more similar to random dictatorships. Furthermore, the proof of \Cref{lem:zerodict} shows that the decomposition of $\gamma$-randomly dictatorial SDSs is completely determined by the values $\gamma_1,\dots,\gamma_n$: given these values for an strategyproof SDS $f$, it can be represented as $f=\sum_{i\in N}\gamma_i d_i + (1-\sum_{i\in N} \gamma_i)g$, where $g$ is a strategyproof SDS and $d_i$ the dictatorial SDS of voter $i$.

\begin{figure}
	\[
	\begin{array}{c c c}
	1 & 1 & 1 \\
	\midrule
	a & b & c \\
	c & c & a \\
	b & a & b \\[1em]
	& R &
	\end{array}
	\qquad\qquad
	\begin{array}{c c c}
	1 & 1 & 1 \\
	\midrule
	a & b & c \\
	b & c & a \\
	c & a & b \\[1em]
	& R' &
	\end{array}
	\]
	\caption{Condorcet-consistent SDSs violate strategyproofness when $m=n=3$. Due to the symmetry of $R'$, we may assume without loss of generality that $f(R',a) > 0$. Since $f$ is Condorcet-consistent, it holds that $f(R,c) = 1$. Thus, voter 1 can manipulate by swapping $c$ and $b$ in $R$.}
	\label{fig:exampleProfile}
	\Description{Two Preference profiles R and R' with 3 voters and alternatives. Profile R' is the Condorcet cycle profile. The first voter has the preferences a, b, c, the second b, c, a, and the third c, a, b. Profile R is the same except that the first voter swaps his second and third ranked alternative so his new preferences are a, c, b. Alternative c is Condorcet winner in R.}
\end{figure}

Finally, we introduce $\alpha$-Condorcet-consistency. To this end, we first define the notion of a Condorcet winner. A \emph{Condorcet winner} is an alternative $x$ that wins every majority comparison according to preference profile $R$, i.e., $n_{xy}(R) > n_{yx}(R)$ for all $y \in A \setminus \{x\}$. \emph{Condorcet-consistency} demands that $f(R,x)=1$ for all preference profiles $R$ and alternatives $x$ such that $x$ is the Condorcet winner in $R$.
Unfortunately, Condorcet-consistency is in conflict with strategyproofness, which can easily be derived from Gibbard's random dictatorship theorem. 
A simple two-profile proof for this fact when $m=n=3$ is given in \Cref{fig:exampleProfile}. 
To circumvent this impossibility, we relax Condorcet-consistency: instead of requiring that the Condorcet winner always obtains probability $1$, we only require that it receives a probability of at least $\alpha$. This idea leads to \emph{$\alpha$-Condorcet-consistency}: an SDS $f$ satisfies this axiom if $f(R,x)\geq \alpha$ for all profiles $R$ and alternatives $x\in A$ such that $x$ is the Condorcet winner in $R$. For small values of $\alpha$, this axiom is clearly compatible with strategyproofness and therefore, we are interested in the maximum value of $\alpha$ such that there are $\alpha$-Condorcet-consistent and strategyproof SDSs. 

\begin{table}[t]
	\caption{Values of $\alpha$, $\beta$, and $\gamma$ for which specific SDSs are $\alpha$-Condorcet-consistent, $\beta$-\emph{ex post} efficient, and $\gamma$-randomly dictatorial. Each row shows the values of $\alpha$, $\beta$, and $\gamma$ for which a specific SDS satisfies the corresponding axioms. $f_\mathit{RD}$ abbreviates the uniform random dictatorship, $f_U$ the uniform lottery, $f_B$ the randomized Borda rule, and $f_C$ the randomized Copeland rule.}
	\label{tab:examples}
	\centering
	\renewcommand{\arraystretch}{1.4}
	\setlength{\tabcolsep}{1em}	
	\begin{tabular}{lccc}
		\toprule
		SDS & \makecell{$\alpha$-Condorcet\\-consistency} & \makecell{$\beta$-\emph{ex post}\\efficiency} & \makecell{$\gamma$-random\\dictatorship} \\ 
		\midrule
		$f_\mathit{RD}$ \hspace{2cm}& $0$ & $0$ & $1$ \\
		$f_\mathit{U}$ & $\frac{1}{m}$ & $\frac{1}{m}$ & $0$\\
		$f_\mathit{B}$ & $\frac{1}{m}+\frac{2 - (n \mod 2)}{mn}$ & $\frac{2(m-2)}{m(m-1)}$ & $\frac{2}{m(m-1)}$\\
		$f_\mathit{C}$ & $\frac{2}{m}$ & $\frac{2(m-2)}{m(m-1)}$ & 0\\
		\bottomrule
	\end{tabular}
	\renewcommand{\arraystretch}{1}
\end{table}

For a better understanding of $\alpha$-Condorcet-consistency, $\beta$-\emph{ex post} efficiency, and $\gamma$-random dictatorships, we discuss some of the values in \Cref{tab:examples} as examples. The uniform random dictatorship is $1$-randomly dictatorial and $0$-\emph{ex post} efficient by definition. Moreover, it is $0$-Condorcet-consistent because a Condorcet winner may not be top-ranked by any voter. The randomized Borda rule is $\frac{2(m-2)}{m(m-1)}$-\emph{ex post} efficient because it assigns this probability to an alternative that is second-ranked by every voter. Moreover, it is $\frac{2}{m(m-1)}$-randomly dictatorial as we can represent it as $\frac{2}{m(m-1)} f_\rd + \left(1-\frac{2}{m(m-1)}\right) g$, where $f_\rd$ is the uniform random dictatorship and $g$ is the point voting SDS defined by the scoring vector $\left(\frac{2(m-2)}{n(m(m-1) - 2)}, \frac{2(m-2)}{n(m(m-1) - 2)}, \frac{2(m-3)}{n(m(m-1) - 2)}, \dots, 0\right)$. Finally, the randomized Copeland rule is $0$-randomly dictatorial because there is for every voter a profile in which he can swap his two best alternatives without affecting the outcome. Moreover, it is $\frac{2}{m}$-Condorcet-consistent because a Condorcet winner $x$ satisfies that $n_{xy}(R)>\frac{n}{2}$ for all $y\in A\setminus \{x\}$ and hence, $f_C(R,x)=\sum_{y\in A\setminus\{x\}} b_{n_{xy}(R)}=(m-1)\frac{2}{m(m-1)}=\frac{2}{m}$. Note that \Cref{tab:examples} also contains a row corresponding to the uniform lottery. We consider this SDS as a threshold with respect to $\alpha$-Condorcet-consistency and $\beta$-\emph{ex post} efficiency because we can compute the uniform lottery without knowledge about the voters' preferences. Hence, if an SDS performs worse than the uniform lottery with respect to $\alpha$-Condorcet-consistency or $\beta$-\emph{ex post} efficiency, we could also dismiss the voters' preferences.

\section{Results}

In this section, we present our results about the $\alpha$-Condorcet-consistency and the $\beta$-\emph{ex post} efficiency of strategyproof SDSs.  
First, we prove that no strategyproof SDS satisfies $\alpha$-Condorcet-consistency for $\alpha>\frac{2}{m}$ and that the randomized Copeland rule $f_C$ is the only anonymous, neutral, and strategyproof SDS that satisfies $\alpha$-Condorcet-consistency for $\alpha= \frac{2}{m}$. Moreover, we show that every $\frac{1-\epsilon}{m}$-\emph{ex post} efficient and strategyproof SDS is $\gamma$-randomly dictatorial for $\gamma\geq \epsilon$. This statement can be seen as a continuous generalization of the random dictatorship theorem and implies, for instance, that every $0$-randomly dictatorial and strategyproof SDS can only satisfy $\beta$-\emph{ex post} efficiency for $\beta\geq\frac{1}{m}$, i.e., such SDSs are at least as inefficient as the uniform lottery. Even more, when additionally imposing anonymity and neutrality, we prove that only mixtures of the uniform random dictatorship and the uniform lottery satisfy this bound tightly, which shows that relaxing \emph{ex post} efficiency does not allow for appealing SDSs. In the last theorem, we identify a tradeoff between Condorcet-consistency and \emph{ex post} efficiency: no strategyproof SDS that satisfies $\alpha$-Condorcet consistency is $\beta$-\emph{ex post} efficient for $\beta < \frac{m - 2}{m - 1}\alpha$.
We derive these results through a series of lemmas. The proofs of all lemmas and \Cref{thm:contrd} are deferred to the appendix and we only present short proof sketches instead.

\subsection{$\alpha$-Condorcet-consistency}

As discussed in \Cref{subSec:Relax}, Condorcet-consistent SDSs violate strategyproofness. Therefore, we analyze the maximal $\alpha$ such that $\alpha$-Condorcet-consistency and strategyproofness are compatible. Our results show that strategyproofness only allows for a small degree of Condorcet-consistency: we prove that no strategyproof SDS satisfies $\alpha$-Condorcet-consistency for $\alpha>\frac{2}{m}$. This bound is tight as the randomized Copeland rule $f_C$ is $\frac{2}{m}$-Condorcet-consistent, which means that it is one of the ``most Condorcet-consistent'' strategyproof SDSs. Even more, we can turn this observation in a characterization of $f_C$ by additionally requiring anonymity and neutrality: the randomized Copeland rule is the only strategyproof SDS that satisfies $\frac{2}{m}$-Condorcet-consistency, anonymity, and neutrality.

For proving these results, we derive next a number of lemmas. As first step, we show in \Cref{lem:alphaAnonNeutral} that we can use a strategyproof and $\alpha$-Condorcet-consistent SDS to construct another strategyproof SDS that satisfies anonymity, neutrality, and $\alpha$-Condorcet-consistency for the same $\alpha$. 

\begin{restatable}{lemma}{alphaAnonNeutral}\label{lem:alphaAnonNeutral}
	If a strategyproof SDS satisfies $\alpha$-Condorcet-consistency for some $\alpha\in [0,1]$, there is also a strategyproof SDS that satisfies anonymity, neutrality, and $\alpha$-Condorcet-consistency for the same $\alpha$.
\end{restatable}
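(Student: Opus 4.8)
The plan is a standard symmetrization argument. Let $f$ be a strategyproof SDS that is $\alpha$-Condorcet-consistent. For every permutation $\pi\colon N\to N$ of the voters and every permutation $\tau\colon A\to A$ of the alternatives, define an SDS $f_{\pi,\tau}$ by $f_{\pi,\tau}(R,x)=f(\pi(\tau(R)),\tau(x))$; since relabeling voters and relabeling alternatives act on independent parts of a profile, the operations $\pi(\cdot)$ and $\tau(\cdot)$ commute and $f_{\pi,\tau}$ is well defined. I then claim that the SDS
\[
\bar f \;=\; \frac{1}{n!\,m!}\sum_{\pi\colon N\to N}\ \sum_{\tau\colon A\to A} f_{\pi,\tau}
\]
is strategyproof, anonymous, neutral, and $\alpha$-Condorcet-consistent, which proves the lemma.

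For strategyproofness, note that each $f_{\pi,\tau}$ arises from $f$ by relabeling voters and alternatives, and this preserves strategyproofness: stochastic dominance is equivariant under relabeling, so a profitable deviation of some voter for $f_{\pi,\tau}$ at a profile $R$ translates into a profitable deviation of the corresponding voter for $f$ at $\pi(\tau(R))$, contradicting the strategyproofness of $f$. As $\bar f$ is a convex combination of the strategyproof SDSs $f_{\pi,\tau}$, it is itself strategyproof, because convex combinations of strategyproof SDSs are strategyproof (as observed in \Cref{subsec:strategyproofness}). Anonymity and neutrality follow by the usual reindexing of the two sums: for a voter permutation $\sigma$, the substitution $\pi\mapsto\pi\sigma$ leaves the index set of the inner double sum invariant and yields $\bar f(\sigma(R))=\bar f(R)$; for an alternative permutation $\rho$, the substitution $\tau\mapsto\tau\rho$ yields $\bar f(\rho(R),\rho(x))=\bar f(R,x)$.

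It remains to verify $\alpha$-Condorcet-consistency. Suppose $x$ is the Condorcet winner in $R$. Permuting voters does not change any supporting size, so $x$ is also the Condorcet winner in $\pi(R)$ for every $\pi$. Permuting alternatives by $\tau$ maps $x$ to the Condorcet winner $\tau(x)$ of $\tau(R)$, since $n_{\tau(x)\tau(y)}(\tau(R))=n_{xy}(R)>n_{yx}(R)=n_{\tau(y)\tau(x)}(\tau(R))$ for every $y\in A\setminus\{x\}$ and $\tau$ is a bijection. Combining both facts, $\tau(x)$ is the Condorcet winner in $\pi(\tau(R))$, so $f_{\pi,\tau}(R,x)=f(\pi(\tau(R)),\tau(x))\ge\alpha$ for all $\pi$ and $\tau$ by the $\alpha$-Condorcet-consistency of $f$. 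Since a convex combination of numbers each at least $\alpha$ is at least $\alpha$, we conclude $\bar f(R,x)\ge\alpha$.

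The argument is largely routine bookkeeping; the only two points that need genuine care are the equivariance of strategyproofness under relabeling of \emph{both} voters and alternatives (which reduces to the equivariance of stochastic dominance under relabeling) and the observation that being a Condorcet winner is preserved under both group actions. I do not expect any deeper obstacle.
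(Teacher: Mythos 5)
Your proposal is correct and follows essentially the same route as the paper: symmetrize $f$ by averaging $f(\tau(\pi(R)),\tau(x))$ over all voter permutations $\pi$ and alternative permutations $\tau$, using that each relabeled copy is strategyproof and $\alpha$-Condorcet-consistent and that convex combinations preserve both properties. The only (immaterial) difference is the order in which you apply the two commuting group actions and the slightly more explicit reindexing argument for anonymity and neutrality.
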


The central idea in the proof of \Cref{lem:alphaAnonNeutral} is the following: if there is a strategyproof and $\alpha$-Condorcet-consistent SDS $f$, then the SDS $f^{\pi\tau}(R,x)=f(\tau(\pi(R)), \tau(x))$ is also strategyproof and $\alpha$-Condorcet-consistent for all permutations $\pi:N\rightarrow N$ and $\tau:A\rightarrow A$. Since mixtures of strategyproof and $\alpha$-Condorcet-consistent SDSs are also strategyproof and $\alpha$-Condorcet-consistent, we can therefore construct an SDS that satisfies all requirements of the lemma by averaging over all permutations on $N$ and $A$. More formally, the SDS $f^*=\frac{1}{m!n!} \sum_{\pi\in \Pi} \sum_{\tau \in \Tau} f^{\pi\tau}$ (where $\Pi$ denotes the set of all permutations on $N$ and $\Tau$ the set of all permutations on $A$) meets all criteria of the lemma.

Due to \Cref{lem:alphaAnonNeutral}, we investigate next the $\alpha$-Condorcet-consistency of strategyproof SDSs that satisfy anonymity and neutrality. The reason for this is that this lemma turns an upper bound on $\alpha$ for these SDSs into an upper bound for all strategyproof SDSs. Since \Cref{thm:Barbera} shows that every strategyproof, anonymous, and neutral SDS can be decomposed in a point voting SDS and a supporting size SDS, we investigate these two classes separately in the following two lemmas. First, we bound the $\alpha$-Condorcet-consistency of point voting SDSs.

\begin{restatable}{lemma}{alphaPoint}\label{lem:alphaPoint}
	No point voting SDS is $\alpha$-Condorcet-consistent for $\alpha \ge \frac{2}{m}$ if $n \ge 3$ and $m \ge 3$.
\end{restatable}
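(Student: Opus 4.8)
The plan is to exhibit, for each point voting SDS, a profile in which some Condorcet winner receives strictly less than $\nicefrac2m$. The key reformulation is the following. Let $f$ be the point voting SDS with scoring vector $(a_1,\dots,a_m)$, let $c$ be the Condorcet winner of a profile $R$, and write $P_j(R)$ for the number of voters who rank $c$ among their top $j$ alternatives. Then Abel summation gives
\[
f(R,c)=\sum_{i\in N}a_{|\{y\in A:\,y\succeq_i c\}|}=\sum_{j=1}^{m}\delta_j\,P_j(R),
\]
where $\delta_j:=a_j-a_{j+1}\ge 0$ (with $a_{m+1}:=0$) and $\sum_{j=1}^{m}j\,\delta_j=\sum_{k=1}^m a_k=\nicefrac1n$. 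Hence, if $R_1,\dots,R_t$ all have $c$ as a Condorcet winner and satisfy $\tfrac1t\sum_{\ell=1}^t P_j(R_\ell)<\tfrac{2nj}{m}$ for every $j\in\{1,\dots,m\}$, then $\tfrac1t\sum_\ell f(R_\ell,c)=\sum_j\delta_j\bigl(\tfrac1t\sum_\ell P_j(R_\ell)\bigr)<\tfrac{2n}{m}\sum_j j\,\delta_j=\nicefrac2m$, so $f(R_\ell,c)<\nicefrac2m$ for at least one $\ell$. Since this bound is independent of $(a_1,\dots,a_m)$, producing such profiles for all $m\ge3$, $n\ge3$ proves that no point voting SDS is $\alpha$-Condorcet-consistent for any $\alpha\ge\nicefrac2m$.

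To build the profiles I first record which ``multiplicity vectors'' of $c$ are attainable: a tuple of non-negative integers $(p_1,\dots,p_m)$ with $\sum_k p_k=n$ occurs as $\bigl(|\{i\in N:\ c\text{ is ranked $k$-th by }i\}|\bigr)_k$ in some profile with $c$ as Condorcet winner if and only if $\sum_k(k-1)p_k\le(m-1)\lfloor\tfrac{n-1}{2}\rfloor$. Necessity holds because $c$ being a Condorcet winner forces $n_{yc}(R)\le\lfloor\tfrac{n-1}{2}\rfloor$ for every $y\ne c$ and $\sum_{y\ne c}n_{yc}(R)=\sum_i(\text{rank of $c$ for $i$}-1)$; sufficiency follows from a greedy (degree-constrained bipartite) assignment of the ``alternatives placed above $c$'' to the voters, the orderings among the alternatives other than $c$ being irrelevant. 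It therefore suffices to find a (in general fractional) vector $\bar p\ge0$ with $\sum_k\bar p_k=n$, $\sum_k(k-1)\bar p_k\le(m-1)\lfloor\tfrac{n-1}{2}\rfloor$, and $\bar P_j:=\sum_{k\le j}\bar p_k<\tfrac{2nj}{m}$ for all $j$, and to write $\bar p$ as a convex combination of integer vectors of the above kind — the profiles realizing those integer vectors are the required $R_1,\dots,R_t$. The natural candidate ranks $c$ in rank $k$ by a $\nicefrac2m$-fraction of the voters for every $k\le\lceil\tfrac{m-1}{2}\rceil$ and then pushes a small mass $\varepsilon>0$ one rank further down; this makes every inequality $\bar P_j<\tfrac{2nj}{m}$ strict, the inequality being trivial for $j\ge m/2$ where $\bar P_j=n$.

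The crux is the arithmetic showing that this $\bar p$ respects the budget $\sum_k(k-1)\bar p_k\le(m-1)\lfloor\tfrac{n-1}{2}\rfloor$: for small $\varepsilon$ it reduces — separately for $m$ even or odd and $n$ even or odd — to inequalities such as $mn\ge2(m-1)$ and $mn\ge4(m-1)$, which hold precisely because $n\ge3$, and which fail for $n\le2$. This is exactly why the hypothesis $n\ge3$ is needed: for $n\le2$ one has $\lfloor\tfrac{n-1}{2}\rfloor=0$, so $c$ must be top-ranked by every voter, and then the scoring vector with $a_1=\nicefrac1m$ is genuinely $\nicefrac2m$-Condorcet-consistent. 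For large $n$ the budget is far from tight — already the single integral profile in which $c$ occupies each of the ranks $1,\dots,m-1$ about equally often yields $P_j\lesssim\tfrac{nj}{m-1}$, hence $f(R,c)\le\tfrac1{m-1}<\tfrac2m$ for \emph{every} point voting SDS — so all the delicacy sits in the small cases ($n\in\{3,4\}$, $m\in\{3,4\}$), where the floor nearly saturates the budget and where more than one profile is genuinely needed (e.g., for $m=n=4$ no single profile works, but the average of two does). The remaining steps — checking the bipartite realizability of the chosen integer multiplicity vectors and writing down the convex decomposition of $\bar p$ — are routine but must be handled with some care in these tight cases.
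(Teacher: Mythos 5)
Your approach is correct but genuinely different from the paper's. The paper fixes a \emph{single} profile containing $\lceil\nicefrac{m}{2}\rceil$ ``Condorcet winner candidates''---alternatives whose rank vector is left untouched while the remaining alternatives are reordered to make them Condorcet winners---so that $\alpha$-Condorcet-consistency forces every candidate to receive probability at least $\alpha$ in that one profile, giving $\alpha\lceil\nicefrac{m}{2}\rceil\le 1$; the boundary case $\alpha=\nicefrac{2}{m}$ (even $m$) is then excluded by noting that some voter's top-ranked alternative would have to get probability $0$, contradicting $a_1>0$, and the construction is done for base cases $n\in\{3,4\}$ and extended by adding pairs of voters with inverse preferences. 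You instead keep a single Condorcet winner $c$ and average over a \emph{family} of profiles, via the Abel-summation identity $f(R,c)=\sum_j\delta_jP_j(R)$ with $\sum_jj\delta_j=\nicefrac1n$, combined with an exact characterization of the realizable rank-multiplicity vectors of a Condorcet winner (necessity from $n_{yc}(R)\le\lfloor\frac{n-1}{2}\rfloor$, sufficiency by a greedy degree-constrained assignment, which indeed works since balanced filling keeps every alternative's load within $\lceil\sum_i(k_i-1)/(m-1)\rceil\le\lfloor\frac{n-1}{2}\rfloor$). Both proofs rest on the same underlying fact---a point voting SDS sees only the rank vector of the alternative in question---but yours replaces the paper's explicit base profiles by a budget computation that is uniform in $n$ and makes visible exactly where $n\ge3$ enters (the budget $(m-1)\lfloor\frac{n-1}{2}\rfloor$ vanishes for $n\le 2$), while the paper's argument is more elementary and produces the witnessing profiles explicitly.

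Three points still need to be nailed down, though none is a conceptual obstacle. First, for even $m$ the inequality at $j=\nicefrac{m}{2}$ is not trivial: there $\frac{2nj}{m}=n$, so strictness comes precisely from the $\varepsilon$-shift; and for odd $m$ your $\bar p$ as literally described sums to $n-\nicefrac nm$, so the leftover mass must be parked at rank $\frac{m+1}{2}$ before shifting (the budget arithmetic still goes through: the required totals are roughly $\frac{n(m-2)}{4}$ for even $m$ and $\frac{n(m-1)^2}{4m}$ for odd $m$, both below $(m-1)\lfloor\frac{n-1}{2}\rfloor$ for all $n\ge 3$, with slack absorbing $\varepsilon$). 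Second, the polytope cut out by the budget constraint is not integral, so the decomposition of $\bar p$ into realizable integer vectors cannot be dispatched by a general integrality argument; it must be exhibited in the tight cases, which is the analogue of the paper's base-case profiles. It does exist: for $m=n=4$, where indeed no single admissible integer vector has $P_1\le1$ and $P_2\le3$ (any such vector forces budget at least $4>3$), the average of $(1,3,0,0)$ and $(2,1,1,0)$ works, as does $(1-2\varepsilon)(2,2,0,0)+\varepsilon(2,1,1,0)+\varepsilon(1,3,0,0)$. Third, strictness of the averaged bound uses that some $\delta_j>0$, guaranteed by $\sum_jj\delta_j=\nicefrac1n>0$; this deserves a sentence. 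With these details written out, your argument is a complete and valid alternative proof.
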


The proof of this lemma relies on the observation that there can be $\lceil \frac{m}{2} \rceil$ Condorcet winner candidates, i.e., alternatives $x$ that can be made into the Condorcet winner by keeping $x$ at the same position in the preferences of every voter and only reordering the other alternatives. Since reordering the other alternatives does not affect the probability of $x$ in a point voting SDS, it follows that every Condorcet winner candidate has a probability of at least $\alpha$. Hence, we derive that $\alpha\leq \frac{1}{\lceil \frac{m}{2} \rceil} \le \frac{2}{m}$ and a slightly more involved argument shows that the inequality is strict. 

The last ingredient for the proof of \Cref{thm:Cond} is that no supporting size SDS can assign a probability of more than $\frac{2}{m}$ to any alternative. This immediately implies that no supporting size SDS satisfies $\alpha$-Condorcet-consistency for $\alpha>\frac{2}{m}$.

\begin{restatable}{lemma}{alphaSup}\label{lem:alphaSup}
	No supporting size SDS can assign more than $\frac{2}{m}$ probability to an alternative.
\end{restatable}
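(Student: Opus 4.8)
The plan is to bound the probability $f(R,x)=\sum_{y\in A\setminus\{x\}} b_{n_{xy}(R)}$ termwise by the largest entry of the scoring vector, and then use the balance condition $b_i+b_{n-i}=\frac{2}{m(m-1)}$ to control that entry. Concretely, I would first observe that since the scoring vector satisfies $b_n\ge b_{n-1}\ge\dots\ge b_0\ge 0$ and every supporting size obeys $0\le n_{xy}(R)\le n$, each summand satisfies $b_{n_{xy}(R)}\le b_n$. As the sum ranges over the $m-1$ alternatives $y\neq x$, this already gives $f(R,x)\le (m-1)\,b_n$.

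The second step is to bound $b_n$ itself. Applying the balance condition with $i=n$ yields $b_n+b_0=\frac{2}{m(m-1)}$, and since $b_0\ge 0$ we obtain $b_n\le\frac{2}{m(m-1)}$. Combining the two steps gives
\[
f(R,x)\;\le\;(m-1)\,b_n\;\le\;(m-1)\cdot\frac{2}{m(m-1)}\;=\;\frac{2}{m},
\]
which holds for every preference profile $R$ and every alternative $x$, as claimed.

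There is essentially no obstacle here; the argument is a one-line consequence of the definition of a supporting size SDS. The only point requiring a little care is the indexing convention of the scoring vector---the coefficient $b_j$ is applied when exactly $j$ voters support $x$ against $y$, and the entries are listed in decreasing order from $b_n$ down to $b_0$---so that ``the maximal coefficient equals $b_n$'' is justified. If one wishes, one can additionally remark that equality $f(R,x)=\frac{2}{m}$ forces $b_{n_{xy}(R)}=\frac{2}{m(m-1)}$ for every $y\neq x$, which is exactly the situation of a Condorcet winner whose pairwise majorities are large enough; this is not needed for the statement but explains why the randomized Copeland rule attains the bound.
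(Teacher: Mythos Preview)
Your proof is correct and follows essentially the same approach as the paper: both bound each summand $b_{n_{xy}(R)}$ by $\frac{2}{m(m-1)}$ using the balance condition $b_i+b_{n-i}=\frac{2}{m(m-1)}$ together with nonnegativity, and then multiply by the $m-1$ terms in the sum. The only cosmetic difference is that the paper states directly that every $b_i\le\frac{2}{m(m-1)}$, whereas you route this through $b_{n_{xy}(R)}\le b_n\le\frac{2}{m(m-1)}$.
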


The proof of this lemma follows straightforwardly from the definition of supporting size SDSs. Each such SDS is defined by a scoring vector $(b_n, \dots, b_0)$ such that $b_i+b_{n-i}=\frac{2}{m(m-1)}$ for all $i\in \{0, \dots, n\}$ and $b_n\geq b_{n-1}\geq \dots \geq b_0\geq 0$. The probability of an alternative $x$ in a supporting size SDS $f$ is therefore bounded by $f(R, x) = \sum_{y\in A\setminus \{x\}} b_{n_{xy}(R)} \le (m-1)\frac{2}{m(m-1)} = \frac{2}{m}$. 

Finally, we have all necessary lemmas for the proof of our first theorem. 

\begin{theorem}\label{thm:Cond}
	The randomized Copeland rule is the only strategyproof SDS that satisfies anonymity, neutrality, and $\frac{2}{m}$-Condorcet-consistency if $m \ge 3$ and $n \ge 3$. Moreover, no strategyproof SDS satisfies $\alpha$-Condorcet-consistency for $\alpha > \frac{2}{m}$ if $n \ge 3$.
\end{theorem}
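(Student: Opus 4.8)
The plan is to reduce both claims, via \Cref{lem:alphaAnonNeutral}, to strategyproof SDSs that are anonymous and neutral, use Barberá's decomposition (\Cref{thm:Barbera}) to split such an SDS into a point voting part $g$ and a supporting size part $h$, and then combine the bounds of \Cref{lem:alphaPoint} and \Cref{lem:alphaSup}. Concretely, let $f$ be strategyproof and $\alpha$-Condorcet-consistent; the case $m\le 2$ is vacuous since then $\alpha\le 1=\nicefrac{2}{m}$, so assume $m\ge 3$. By \Cref{lem:alphaAnonNeutral} we may take $f$ anonymous and neutral, hence $f=\lambda g+(1-\lambda)h$ for some $\lambda\in[0,1]$ by \Cref{thm:Barbera}. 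For any profile $R$ with Condorcet winner $x$, \Cref{lem:alphaSup} gives $h(R,x)\le\nicefrac{2}{m}$, so $\alpha\le f(R,x)\le\lambda g(R,x)+(1-\lambda)\nicefrac{2}{m}$. If $\lambda=0$, this already yields $\alpha\le\nicefrac{2}{m}$. If $\lambda>0$, rearranging gives $g(R,x)\ge\nicefrac{2}{m}+(\alpha-\nicefrac{2}{m})/\lambda$; as this holds for every Condorcet winner profile, $g$ would be $\alpha'$-Condorcet-consistent for $\alpha'=\nicefrac{2}{m}+(\alpha-\nicefrac{2}{m})/\lambda$, so \Cref{lem:alphaPoint} (using $m,n\ge 3$) forces $\alpha'<\nicefrac{2}{m}$ and hence $\alpha<\nicefrac{2}{m}$. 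Either way $\alpha\le\nicefrac{2}{m}$, which proves the ``moreover'' part; moreover, the same reasoning with $\alpha=\nicefrac{2}{m}$ shows that every strategyproof, anonymous, neutral, $\nicefrac{2}{m}$-Condorcet-consistent SDS must have $\lambda=0$, i.e.\ must be a supporting size SDS.

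For the characterization itself, the easy direction is already contained in \Cref{subSec:Relax}: $f_C$ is a supporting size SDS, hence strategyproof, anonymous, and neutral by \Cref{thm:Barbera}, and $f_C(R,x)=\nicefrac{2}{m}$ for every Condorcet winner $x$. For uniqueness, let $f$ be strategyproof, anonymous, neutral, and $\nicefrac{2}{m}$-Condorcet-consistent; by the previous paragraph $f$ is a supporting size SDS with scoring vector $(b_n,\dots,b_0)$ satisfying $b_n\ge\dots\ge b_0\ge 0$ and $b_i+b_{n-i}=\frac{2}{m(m-1)}$. The decisive point is tightness: for a Condorcet winner $x$, $f(R,x)=\sum_{y\ne x}b_{n_{xy}(R)}$ is a sum of $m-1$ terms, each at most $b_n\le\frac{2}{m(m-1)}$, so $f(R,x)\ge\nicefrac{2}{m}$ forces $b_{n_{xy}(R)}=\frac{2}{m(m-1)}$ for every $y\ne x$. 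Since the smallest majority margin a Condorcet winner can have is $\lfloor\frac{n}{2}\rfloor+1$, and this value is realized by an explicit Condorcet winner profile ($\lfloor\frac{n}{2}\rfloor+1$ voters ranking $x$ first and the remaining voters ranking $y$ first and $x$ second), we obtain $b_{\lfloor n/2\rfloor+1}=\frac{2}{m(m-1)}$. Monotonicity then forces $b_i=\frac{2}{m(m-1)}$ for all $i>\frac{n}{2}$, and the identity $b_i+b_{n-i}=\frac{2}{m(m-1)}$ forces $b_i=0$ for all $i<\frac{n}{2}$ and $b_{n/2}=\frac{1}{m(m-1)}$ when $n$ is even. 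This is exactly the scoring vector defining $f_C$, so $f=f_C$.

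The step I expect to be the main obstacle is the interplay between the two components of the Barberá decomposition. \Cref{lem:alphaSup} makes the supporting size component ``cheap'' for Condorcet-consistency, as its value at a Condorcet winner never exceeds $\nicefrac{2}{m}$; hence any guarantee strictly above $\nicefrac{2}{m}$ would have to be supplied by the point voting component \emph{for every} Condorcet winner profile simultaneously, which \Cref{lem:alphaPoint} rules out. This single argument delivers the tight bound $\alpha\le\nicefrac{2}{m}$ and forces $\lambda=0$ in the extremal case at once; once $f$ is known to be a supporting size SDS, reading off its scoring vector from the constraints $b_n\ge\dots\ge b_0\ge 0$, $b_i+b_{n-i}=\frac{2}{m(m-1)}$, and the tightness observation is routine bookkeeping.
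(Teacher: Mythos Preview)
Your proposal is correct and follows essentially the same approach as the paper: reduce to the anonymous and neutral case via \Cref{lem:alphaAnonNeutral}, decompose via \Cref{thm:Barbera}, combine the bounds from \Cref{lem:alphaPoint} and \Cref{lem:alphaSup} to obtain $\alpha\le\nicefrac{2}{m}$ and force $\lambda=0$ in the extremal case, and then pin down the scoring vector by tightness. The only cosmetic difference is that the paper applies \Cref{lem:alphaPoint} by fixing a single witness profile and evaluating both components there, whereas you rearrange to show the point voting component would itself be $\alpha'$-Condorcet-consistent with $\alpha'\ge\nicefrac{2}{m}$ and invoke \Cref{lem:alphaPoint} as a black box; the two formulations are equivalent.
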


\begin{proof}
	The theorem consists of two claims: the characterization of the randomized Condorcet rule $f_C$ and the fact that no other strategyproof SDS can attain $\alpha$-Condorcet-consistency for a larger $\alpha$ than $f_C$. We prove these claims separately.\medskip 
	
	\textbf{Claim 1: The randomized Copeland rule is the only strategyproof SDS that satisfies $\frac{2}{m}$-Condorcet-consistency, anonymity, and neutrality if $m,n\geq 3$.}
	
	The randomized Copeland rule $f_C$ is a supporting size SDS and satisfies therefore anonymity, neutrality, and strategyproofness. Furthermore, it satisfies also $\frac{2}{m}$-Condorcet-consistency because a Condorcet winner $x$ wins every pairwise majority comparison in $R$. Hence, $n_{xy}(R)> \frac{n}{2}$ for all $y\in A\setminus \{x\}$, which implies that $f_C(R,x) = \sum_{y \in A \setminus \{x\}}b_{n_{xy}(R)} = (m - 1)\frac{2}{m(m-1)} = \frac{2}{m}$.
	
	Next, let $f$ be an SDS satisfying anonymity, neutrality, strategyproofness, and $\frac{2}{m}$-Condorcet-consistency. We show that $f$ is the randomized Copeland rule. Since $f$ is anonymous, neutral, and strategyproof, we can apply \Cref{thm:Barbera} to represent $f$ as $f=\lambda f_{\mathit{point}} + (1-\lambda) f_{\mathit{sup}}$, where $\lambda \in [0,1]$, $f_{\mathit{point}}$ is a point voting SDS, and $f_{\mathit{sup}}$ is a supporting size SDS. \Cref{lem:alphaPoint} states that there is a profile $R$ with Condorcet winner $x$ such that $f_{\mathit{point}}(R,x)<\frac{2}{m}$, and it follows from \Cref{lem:alphaSup} that $f_{\mathit{sup}}(R,x)\leq \frac{2}{m}$. Hence, $f(R,x)=\lambda f_{\mathit{point}}(R,x)+f_{\mathit{sup}}(R,x)<\frac{2}{m}$ if $\lambda >0$. Therefore, $f$ is a supporting size SDS as it satisfies $\frac{2}{m}$-Condorcet-consistency. 
	
	Next, we show that $f$ has the same scoring vector as the randomized Copeland rule. Since $f$ is a supporting size SDS, there is a scoring vector $b=(b_n, \dots, b_0)$ with $b_n \ge b_{n-1} \ge \dots \ge b_0 \ge 0$ and $b_i + b_{n - i} = \frac{2}{m(m-1)}$ for all $i\in \{1,\dots, n\}$ such that $f(R, x) = \sum_{y \in A\setminus \{x\} } b_{n_{xy}(R)}$. Moreover, $f(R, x) = \frac{2}{m}$ if $x$ is the Condorcet winner in $R$ because of $\frac{2}{m}$-Condorcet-consistency and \Cref{lem:alphaSup}. We derive from the definition of supporting size SDSs that the Condorcet winner $x$ can only achieve this probability if $b_{n_{xy(R)}}=\frac{2}{m(m-1)}$ for every other alternatives $y\in A\setminus \{x\}$. Moreover, observe that the Condorcet winner needs to win every majority comparison but is indifferent about the exact supporting sizes. Hence, it follows that $b_{i} = \frac{2}{m(m-1)}$ for all $i>\frac{n}{2}$ as otherwise, there is a profile in which the Condorcet winner does not receive a probability of $\frac{2}{m}$. We also know that $b_{i} + b_{n-i} = \frac{2}{m(m-1)}$, so $b_{i} = 0$ for all $i<\frac{n}{2}$. If $n$ is even, then $b_{\frac{n}{2}} = \frac{1}{m(m-1)}$ is required by the definition of supporting size SDSs as $\frac{n}{2}=n-\frac{n}{2}$. Hence, the scoring vector of $f$ is equivalent to the scoring vector of the randomized Copeland rule, which proves that $f$ is $f_C$.\medskip
	
	\textbf{Claim 2: No strategyproof SDS satisfies $\alpha$-Condorcet-consistency for $\alpha>\frac{2}{m}$ if $n\geq 3$.}
	
	The claim is trivially true if $m \leq 2$ because $\alpha$-Condorcet consistency for $\alpha>1$ is impossible. Hence, let $f$ denote a strategyproof SDS for $m\geq 3$ alternatives. We show in the sequel that $f$ cannot satisfy $\alpha$-Condorcet-consistency for $\alpha>\frac{2}{m}$. As a first step, we use \Cref{lem:alphaAnonNeutral} to construct a strategyproof SDS $f^*$ that satisfies anonymity, neutrality, and $\alpha$-Condorcet-consistency for the same $\alpha$ as $f$. Since $f^*$ is anonymous, neutral, and strategyproof, it follows from \Cref{thm:Barbera} that $f^*$ can be represented as a mixture of a point voting SDS $f_{\mathit{point}}$ and a supporting size SDS $f_{\mathit{sup}}$, i.e., $f^* = \lambda f_{\mathit{point}} + (1 - \lambda) f_{\mathit{sup}}$ for some $\lambda\in[0,1]$.
	
	Next, we consider $f_{\mathit{point}}$ and $f_{\mathit{sup}}$ separately. \Cref{lem:alphaPoint} implies for $f_{\mathit{point}}$ that there is a profile $R$ with a Condorcet winner $a$ such that $f_{\mathit{point}}(R,a)<\frac{2}{m}$. Moreover, \Cref{lem:alphaSup} shows that $f_{\mathit{sup}}(R,a)\leq \frac{2}{m}$ because supporting size SDSs never return a larger probability than $\frac{2}{m}$. Thus, we derive the following inequality, which shows that $f^*$ fails $\alpha$-Condorcet-consistency for $\alpha>\frac{2}{m}$. Hence, no strategyproof SDS satisfies $\alpha$-Condorcet-consistency for $\alpha > \frac{2}{m}$ when $n \ge 3$.
	\begin{equation*}
	\alpha \le f^*(R, a) =
	\lambda f_{\mathit{point}}(R, a) + (1 - \lambda) f_{\mathit{sup}}(R, a) \le
	\lambda \frac{2}{m} + (1 - \lambda)\frac{2}{m} =
	\frac{2}{m}
	\end{equation*}
	
\end{proof}

\begin{remark}
	 \Cref{lem:alphaAnonNeutral} can be applied to properties other than $\alpha$-Condorcet-consistency, too. For example, given a strategyproof and $\beta$-\emph{ex post} efficient SDS, we can construct another SDS that satisfies these axioms as well as anonymity and neutrality.
\end{remark}

\begin{remark}\label{rmk:independencecopeland}
	All axioms in the characterization of the randomized Copeland rule are independent of each other. The SDS that picks the Condorcet winner with probability $\frac{2}{m}$ if one exists and distributes the remaining probability uniformly between the other alternatives only violates strategyproofness. The randomized Borda rule satisfies all axioms of \Cref{thm:Cond} but $\frac{2}{m}$-Condorcet-consistency. An SDS that satisfies anonymity, strategyproofness, and $\frac{2}{m}$-Condorcet-consistency can be defined based on an arbitrary order of alternatives $x_0,\dots, x_{m-1}$. Then, we pick an index $i\in \{0, \dots, m-1\}$ uniformly at random and return the winner of the majority comparison between $x_i$ and $x_{i+1\bmod m}$ (if there is a majority tie, a fair coin toss decides the winner). 
	Finally, we can use the randomized Copeland rule $f_C$ to construct an SDS that fails only anonymity for even $n$: we just ignore one voter when computing the outcome of $f_C$. If $n$ is even and $x$ is the Condorcet winner in $R$, then $n_{xy}(R)\geq \frac{n+2}{2}$ for all $y\in N\setminus \{x\}$. Hence, the Condorcet winner remains a Condorcet winner after removing a single voter, which means that this SDS only fails anonymity. 
	
	Moreover, the impossibility in \Cref{thm:Cond} does not hold when there are only $n=2$ voters because random dictatorships are strategyproof and Condorcet-consistent in this case. The reason for this is that a Condorcet winner needs to be the most preferred alternative of both voters and is therefore chosen with probability $1$.
\end{remark}

\begin{remark}
	The randomized Copeland rule has multiple appealing interpretations. Firstly, it can be defined as a supporting size SDS as shown in \Cref{subsec:strategyproofness}. Alternatively, it can be defined as the SDS that picks two alternatives uniformly at random and then picks the majority winner between them; majority ties are broken by a fair coin toss. 
	Next, \Cref{thm:Cond} shows that the randomized Copeland rule is the SDS that maximizes the value of $\alpha$ for $\alpha$-Condorcet-consistency among all anonymous, neutral, and strategyproof SDSs.
	Finally, the randomized Copeland rule is the only strategyproof SDS that satisfies anonymity, neutrality, and assigns $0$ probability to a Condorcet loser whenever it exists. 
\end{remark}

\subsection{$\beta$-\emph{ex post} Efficiency}

According to Gibbard's random dictatorship theorem, random dictatorships are the only strategyproof SDSs that satisfy \emph{ex post} efficiency. In this section, we show that this result is rather robust by identifying a tradeoff between $\beta$-\emph{ex post} efficiency and $\gamma$-random dictatorships. More formally, we prove that for every $\epsilon\in [0,1]$, all strategyproof and $\frac{1-\epsilon}{m}$-\emph{ex post} efficient SDSs are $\gamma$-randomly dictatorial for $\gamma \ge \epsilon$. If we set $\epsilon=1$, we obtain the random dictatorship theorem. On the other hand, we derive from this theorem that every $0$-randomly dictatorial and strategyproof SDS is $\beta$-\emph{ex post} efficient for $\beta\geq \frac{1}{m}$, i.e., every such SDS is at least as inefficient as the uniform lottery. Moreover, we prove for every $\epsilon\in [0,1]$ that mixtures of the uniform random dictatorship and the uniform lottery are the only $\epsilon$-randomly dictatorial SDSs that satisfy anonymity, neutrality, strategyproofness, and $\frac{1-\epsilon}{m}$-\emph{ex post} efficiency. In summary, these results demonstrate that relaxing \emph{ex post} efficiency does not lead to particularly appealing strategyproof SDSs. 
Furthermore, we also identify a tradeoff between $\alpha$-Condorcet-consistency and $\beta$-\emph{ex post} efficiency: every $\alpha$-Condorcet consistent and strategyproof SDS fails $\beta$-\emph{ex post} efficiency for $\beta < \frac{m - 1}{m - 2}\alpha$. Under the additional assumption of anonymity and neutrality, we characterize the strategyproof SDSs that maximize the ratio between $\alpha$ and $\beta$: all these SDSs are mixtures of the randomized Copeland rule and the uniform random dictatorship.

For proving the tradeoff between $\beta$-\emph{ex post} efficiency and $\gamma$-random dictatorships, we first investigate the efficiency of $0$-randomly dictatorial strategyproof SDSs. In more detail, we prove next that every such SDS fails $\beta$-\emph{ex post} efficiency for $\beta< \frac{1}{m}$.

\begin{restatable}{lemma}{expost}\label{lem:expost}
		No strategyproof SDS that is $0$-randomly dictatorial satisfies $\beta$-\emph{ex post} efficiency for $\beta < \frac{1}{m}$ if $m \ge 3$.
\end{restatable}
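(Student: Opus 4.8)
Suppose, for contradiction, that $f$ is strategyproof, $0$-randomly dictatorial, and $\beta$-\emph{ex post} efficient for some $\beta<\nicefrac{1}{m}$. The plan is to reach a contradiction by summing the probability that $f$ assigns to Pareto-dominated alternatives over a whole family of unanimous profiles: the upper bound on this sum will come from $\beta$-\emph{ex post} efficiency, and the lower bound from structural features of $f$ that $0$-random-dictatoriality forces.

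First I would put $f$ into a normal form. By \Cref{thm:Gibbard1} and \Cref{thm:Gibbard2}, $f$ is a convex combination of strategyproof unilaterals and duples; aggregating all unilaterals of a fixed voter $i$ into one and invoking non-perversity and localizedness, every strategyproof unilateral of voter $i$ is a fixed lottery plus a genuinely positional unilateral with non-increasing scoring vector $(p_1,\dots,p_m)$. The fixed-lottery summand is a mixture of (degenerate) duples, so it can be moved into the duple part; and if a positional part had $p_1>p_2$, we could split off a positive multiple of the dictatorship $d_i$ while leaving a strategyproof remainder, contradicting $0$-random-dictatoriality. Hence $f=\sum_{i\in N}\mu_i\bar u_i+\sum_k\lambda_k g_k$, where each $\bar u_i$ is a \emph{flat-top} positional unilateral, i.e.\ its scoring vector has the form $(p^i_1,p^i_1,p^i_3,\dots,p^i_m)$, and each $g_k$ is a duple. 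Since this vector is non-increasing and sums to $1$, we get $1=2p^i_1+p^i_3+\dots+p^i_m\le mp^i_1$, so $p^i_1\ge\nicefrac{1}{m}$. Write $M=\sum_i\mu_i$ and $\Lambda=\sum_k\lambda_k=1-M$.

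Next, for each ordered pair of distinct alternatives $(a,b)$, let $R^{ab}$ be the profile in which every voter ranks $a$ first, $b$ second, and the remaining alternatives in an arbitrary fixed order. Then $b$ is Pareto-dominated in $R^{ab}$, so $f(R^{ab},b)\le\beta<\nicefrac{1}{m}$. Since $b$ is second-ranked everywhere, $\bar u_i$ contributes exactly $p^i_1$ to it, so the unilateral part contributes $Q:=\sum_i\mu_ip^i_1\ge\nicefrac{M}{m}$, independently of $a$ and $b$. For a duple between $\{c,d\}$, its output depends only on the pattern of pairwise $c$-vs-$d$ comparisons and is monotone in it; writing $h_k(S)$ for its probability on $d$ when $S$ is the set of voters preferring $d$, a short case analysis (whether $a$ equals $c$, equals $d$, or neither) gives $\sum_{(a,b)}g_k(R^{ab},b)=(m-3)\bigl(h_k(N)-h_k(\emptyset)\bigr)+(m-1)\ge m-1$, using $m\ge3$ and monotonicity. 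Summing $f(R^{ab},b)=Q+\sum_k\lambda_k g_k(R^{ab},b)$ over all $m(m-1)$ pairs then yields $m(m-1)Q+(m-1)\Lambda\le\sum_{(a,b)}f(R^{ab},b)\le m(m-1)\beta<m-1$, hence $mQ<1-\Lambda=M$, i.e.\ $Q<\nicefrac{M}{m}$, contradicting $Q\ge\nicefrac{M}{m}$.

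The step I expect to be the main obstacle is the normal form: one has to notice that strategyproof unilaterals are positional only \emph{up to an additive constant lottery}, that this constant part must be reclassified as (degenerate) duples, and that $0$-random-dictatoriality is exactly what forces the surviving positional pieces to be flat-top — which is what guarantees every second-ranked alternative probability at least $\nicefrac{1}{m}$. Once the representation and the uniform per-duple bound $\sum_{(a,b)}g_k(R^{ab},b)\ge m-1$ are available, the double counting over the profiles $R^{ab}$ finishes the argument routinely.
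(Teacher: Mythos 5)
Your proof is correct, but it takes a genuinely different route from the paper's. The paper splits the claim into two separate lemmas---\Cref{lem:dupleExPost} for mixtures of duples (an averaging argument over unanimous profiles) and \Cref{lem:expostUnilateral} for $0$-randomly dictatorial mixtures of unilaterals, the latter resting on the elaborate symmetrization construction of \Cref{lem:expostconst}, which builds an auxiliary SDS on $\binom{m}{2}$ voters---and then, because the unilateral and duple parts of an arbitrary decomposition may violate efficiency at different profiles, constructs yet another symmetrized SDS on $m!\,n$ voters to align the violations at the profiles $R^{x,y}$. You avoid all of this voter-cloning machinery: you first derive a normal form for strategyproof unilaterals (constant lottery plus positional scheme---this is the one step you assert rather than prove, but it is correct and follows from a short adjacent-swap argument: localizedness makes each alternative's probability a function $h_x$ of its rank only, feasibility of swapped rankings forces $h_x(k)-h_x(k+1)$ to be independent of $x$, and non-perversity makes these differences nonnegative), you push the constant parts into the duple side, and you use $0$-random dictatoriality directly on the decomposition to force every surviving positional part to satisfy $p_1=p_2\ge\nicefrac{1}{m}$. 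A single double count over all $m(m-1)$ unanimous profiles $R^{ab}$ then treats the unilateral and duple contributions simultaneously (your per-duple identity $(m-1)+(m-3)\bigl(h_k(N)-h_k(\emptyset)\bigr)\ge m-1$ checks out), so the ``different profiles, different alternatives'' obstacle that motivates the paper's symmetrizations never arises. What each approach buys: yours is shorter, stays within the original electorate, and makes transparent the mechanism forcing at least $\nicefrac{1}{m}$ probability onto a unanimously second-ranked (hence Pareto-dominated) alternative; the paper's route avoids having to characterize strategyproof unilaterals explicitly, working only with the black-box statements of \Cref{thm:Gibbard1,thm:Gibbard2} at the price of the heavier constructions. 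If you write this up, spell out the unilateral normal form and the fact that a duple's output depends monotonically only on the set of voters preferring one of its two alternatives; both are easy but are the load-bearing claims of your argument.
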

The proof of this result is quite similar to the one for the upper bound on $\alpha$-Condorcet-consistency in \Cref{thm:Cond}. In particular, we first show that all $0$-randomly mixtures of duples and all $0$-randomly dictatorial mixtures of unilaterals violate $\beta$-\emph{ex post} efficiency for $\beta< \frac{1}{m}$. Next, we consider an arbitrary $0$-randomly dictatorial SDS $f$ and aim to show that there are a profile $R$ and a Pareto-dominated alternative $x\in A$ such that $f(R,x)\geq \beta$. Even though \Cref{thm:Gibbard1} allows us to represent $f$ as the convex combination of a $0$-randomly dictatorial mixture of unilaterals $f_\mathit{uni}$ and a mixture of duples $f_\mathit{duple}$, our previous observations have unfortunately no direct consequences for the $\beta$-\emph{ex post} efficiency of $f$. The reason for this is that $f_\mathit{uni}$ and $f_\mathit{duple}$ might violate $\beta$-\emph{ex post} efficiency for different profiles or alternatives. We solve this problem by transforming $f$ into a $0$-randomly dictatorial SDS $f^*$ that is $\beta$-\emph{ex post} efficient for the same $\beta$ as $f$ and satisfies additional properties. In particular, $f^*$ can be represented as a convex combination of a $0$-randomly dictatorial mixture of unilaterals $f_\mathit{uni}^*$ and a $0$-randomly dictatorial mixture of duples $f_\mathit{duple}^*$ such that $f_\mathit{uni}^*(R,x)\geq \frac{1}{m}$ and $f_\mathit{duple}^*(R,x)\geq \frac{1}{m}$ for some profile $R$ in which alternative $x$ is Pareto-dominated. Consequently, $f^*$ fails $\beta$-\emph{ex post} efficiency for $\beta<\frac{1}{m}$, which implies that also $f$ violates this axiom. 

Based on \Cref{lem:expost}, we can now show the tradeoff between \emph{ex post} efficiency and the similarity to a random dictatorship.

\begin{restatable}{theorem}{contrd}\label{thm:contrd}
	For every $\epsilon\in [0,1]$, every strategyproof and $\frac{1 - \epsilon}{m}$-\emph{ex post} efficient SDS is $\gamma$-randomly dictatorial for $\gamma \ge \epsilon$ if $m\geq 3$. Moreover, if $\gamma = \epsilon$, $m \ge 4$, and the SDS satisfies additionally anonymity and neutrality, it is a mixture of the uniform random dictatorship and the uniform lottery.
\end{restatable}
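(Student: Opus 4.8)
The plan is to treat the two claims of \Cref{thm:contrd} separately, in each case working with the $0$-randomly dictatorial ``remainder'' obtained after splitting off a random dictatorship. For the first claim, let $f$ be strategyproof and $\frac{1-\epsilon}{m}$-\emph{ex post} efficient, and let $\gamma$ be the parameter for which $f$ is $\gamma$-randomly dictatorial (well-defined by \Cref{lem:zerodict}). If $\gamma=1$ we are done, since $\epsilon\le 1$. Otherwise, maximality of $\gamma$ lets us write $f=\gamma d+(1-\gamma)g$ with $d$ a random dictatorship and $g$ a strategyproof and $0$-randomly dictatorial SDS. By \Cref{lem:expost}, there are a profile $R$ and an alternative $x$ that is Pareto-dominated in $R$ with $g(R,x)\ge\frac 1m$; since $d(R,x)\ge 0$, this gives $f(R,x)\ge(1-\gamma)g(R,x)\ge\frac{1-\gamma}{m}$, and $\frac{1-\epsilon}{m}$-\emph{ex post} efficiency forces $\frac{1-\gamma}{m}\le\frac{1-\epsilon}{m}$, i.e.\ $\gamma\ge\epsilon$.

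For the second claim, assume in addition that $f$ is anonymous and neutral, $m\ge 4$, and $\gamma=\epsilon$. If $\epsilon=1$, then $f$ is a random dictatorship and, by anonymity, equals $f_{\mathit{RD}}=1\cdot f_{\mathit{RD}}+0\cdot f_U$, as required; so assume $\epsilon<1$. In the maximal decomposition of \Cref{lem:zerodict}, each coefficient $\gamma_i$ is the minimum of $f(R^{i:yx},y)-f(R,y)$ over profiles in which voter $i$ ranks $x$ first and $y$ second, and anonymity of $f$ makes all these minima equal; hence $\gamma_i=\frac\epsilon n$ for all $i$ and $f=\epsilon f_{\mathit{RD}}+(1-\epsilon)g$ with $g=\frac{1}{1-\epsilon}(f-\epsilon f_{\mathit{RD}})$. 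This $g$ is strategyproof and $0$-randomly dictatorial by maximality of $\gamma$, and it inherits anonymity and neutrality from $f$ and $f_{\mathit{RD}}$. Moreover, a Pareto-dominated alternative is never ranked first, so $f_{\mathit{RD}}$ assigns it probability $0$; thus whenever $x$ is Pareto-dominated in $R$ we have $(1-\epsilon)g(R,x)=f(R,x)\le\frac{1-\epsilon}{m}$, i.e.\ $g$ is $\frac 1m$-\emph{ex post} efficient. It now suffices to show $g=f_U$, for then $f=\epsilon f_{\mathit{RD}}+(1-\epsilon)f_U$.

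To prove $g=f_U$, decompose $g=\lambda f_{\mathit{point}}+(1-\lambda)f_{\mathit{sup}}$ via \Cref{thm:Barbera}, with scoring vectors $(a_1,\dots,a_m)$ and $(b_n,\dots,b_0)$. Reinforcing a second-ranked alternative $y$ to the top changes $f_{\mathit{point}}(\cdot,y)$ by exactly $a_1-a_2$ and $f_{\mathit{sup}}(\cdot,y)$ by a nonnegative amount; since $g$ is $0$-randomly dictatorial, \Cref{lem:zerodict} yields a profile where this change is zero, whence $\lambda(a_1-a_2)=0$. Now consider the profile $R$ in which every voter reports $z\succ x\succ w_1\succ\dots\succ w_{m-2}$: here $x$ is Pareto-dominated by $z$, $f_{\mathit{point}}(R,x)=na_2$, and $f_{\mathit{sup}}(R,x)=b_0+(m-2)b_n$. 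From $b_n\ge b_0$ and $b_0+b_n=\frac{2}{m(m-1)}$ we get $b_n\ge\frac{1}{m(m-1)}$, and since $m\ge 4$ makes $m-3\ge 1$ we obtain $b_0+(m-2)b_n=\frac{2}{m(m-1)}+(m-3)b_n\ge\frac 1m$, with equality only if $b_n=b_0=\frac{1}{m(m-1)}$, i.e.\ only if all $b_i$ coincide and $f_{\mathit{sup}}=f_U$. Similarly, if $\lambda>0$ then $a_2=a_1=\max_i a_i\ge\frac{1}{mn}$, so $na_2\ge\frac 1m$, with equality only if all $a_i$ coincide and $f_{\mathit{point}}=f_U$. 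Hence $g(R,x)=\lambda na_2+(1-\lambda)(b_0+(m-2)b_n)\ge\frac{\lambda}{m}+\frac{1-\lambda}{m}=\frac 1m$, and since $g$ is $\frac 1m$-\emph{ex post} efficient both summands must attain their lower bounds; this forces $f_{\mathit{point}}=f_U$ when $\lambda>0$ and $f_{\mathit{sup}}=f_U$ when $\lambda<1$, so $g=f_U$ in all cases.

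The main obstacle is the bookkeeping in the reduction step --- checking that the remainder $g$ is still strategyproof, anonymous, neutral, and $0$-randomly dictatorial, which relies on reading the coefficients $\gamma_i$ off \Cref{lem:zerodict} and on their being equal under anonymity. The other delicate point is selecting the extremal profile $R$ and the inequality to make tight; the assumption $m\ge 4$ is genuinely needed here, because for $m=3$ the term $(m-3)b_n$ vanishes and the supporting-size bound becomes vacuous --- indeed $f_C$ is then another anonymous, neutral, strategyproof, $0$-randomly dictatorial, and $\frac 1m$-\emph{ex post} efficient SDS. The remaining steps are routine and parallel the proof of \Cref{thm:Cond}.
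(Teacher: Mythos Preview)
Your proof is correct and follows essentially the same route as the paper: split off the random-dictatorship part, identify it with $f_{\mathit{RD}}$ via anonymity and \Cref{lem:zerodict}, reduce to showing the $0$-randomly dictatorial remainder $g$ equals $f_U$, decompose $g$ via \Cref{thm:Barbera}, and analyze the unanimous profile. Your final step is in fact slightly cleaner than the paper's: instead of first arguing that both Barber\`a components are separately $0$-randomly dictatorial and separately $\frac{1}{m}$-\emph{ex post} efficient (and invoking \Cref{lem:expost} again), you derive $\lambda(a_1-a_2)=0$ directly from one application of \Cref{lem:zerodict} to $g$ and then bound both components from below at the same profile, which sidesteps that bookkeeping.
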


The proof of the first claim follows easily from \Cref{lem:expost}: we consider a strategyproof SDS $f$ and use the definition of $\gamma$-randomly dictatorial SDSs to represent $f$ as a mixture of a random dictatorship and another strategyproof SDS $g$. Unless $f$ is a random dictatorship, the maximality of $\gamma$ entails that $g$ is $0$-randomly dictatorial. Hence, \Cref{lem:expost} implies that $g$ can only be $\beta$-\emph{ex post} efficient for $\beta \geq \frac{1}{m}$. Consequently, $\gamma\geq \epsilon$ must be true if $f$ satisfies $\frac{1-\epsilon}{m}$-\emph{ex post} efficiency. For the second claim, we observe first that every anonymous, neutral, and strategyproof SDS $f$ can be represented as a mixture of the uniform random dictatorship and another strategyproof, anonymous, and neutral SDS $g$. Moreover, unless $f$ is $1$-randomly dictatorial, $g$ is $0$-randomly dictatorial. Thus, \Cref{lem:expost} and the assumption that $\gamma=\epsilon$ require that $g$ is exactly $\frac{1}{m}$-\emph{ex post} efficient. Finally, the claim follows by proving that the uniform lottery is the only $0$-randomly dictatorial and strategyproof SDS that satisfies anonymity, neutrality, and $\frac{1}{m}$-\emph{ex post} efficiency if $m \ge 4$. For $m = 3$ the randomized Copeland rule also satisfies all required axioms and the uniform rule is thus not the unique choice.

	\Cref{thm:contrd} represents a continuous strengthening of Gibbard's random dictatorship theorem: the more \emph{ex post} efficiency is required, the closer a strategyproof SDS gets to a random dictatorship. Conversely, our result also entails that $\gamma$-randomly dictatorial SDSs can only satisfy $\frac{1 - \epsilon}{m}$-\emph{ex post} efficiency for $\epsilon \le \gamma$. Moreover, the second part of the theorem indicates that relaxing \emph{ex post} efficiency does not allow for particularly appealing strategyproof SDSs.
	
The correlation between $\beta$-\emph{ex post} efficiency and $\gamma$-randomly dictatorships also suggests a tradeoff between $\alpha$-Condorcet-consistency and $\beta$-\emph{ex post} efficiency because all random dictatorships are $0$-Condorcet-consistent for sufficiently large $m$ and $n$. Perhaps surprisingly, we show next that $\alpha$-Condorcet consistency and $\beta$-\emph{ex post} efficiency are in relation with each other for strategyproof SDSs. As a consequence of this insight, two strategyproof SDSs are particularly interesting: random dictatorships because they are the most \emph{ex post} efficient SDSs, and the randomized Copeland rule because it is the most Condorcet-consistent SDS.

\begin{restatable}{theorem}{alphabeta}\label{thm:alphabeta}
	Every strategyproof SDS that satisfies anonymity, neutrality, $\alpha$-Condorcet consistency, and $\beta$-\emph{ex post} efficiency with $\beta = \frac{m-2}{m-1}\alpha$ is a mixture of the uniform random dictatorship and the randomized Copeland rule if $m\geq 4$, $n\geq 5$. Furthermore, there is no strategyproof SDS with $\beta < \frac{m-2}{m-1}\alpha$ if $m\geq 4$, $n\geq 5$.
\end{restatable}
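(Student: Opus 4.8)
The plan is to reduce to anonymous, neutral SDSs, invoke Barberà's characterization (\Cref{thm:Barbera}), and then exploit two explicit profiles that pin down the point voting part and the supporting size part of the decomposition at the same time. For the impossibility part (no strategyproof SDS has $\beta<\tfrac{m-2}{m-1}\alpha$), observe that the averaging construction behind \Cref{lem:alphaAnonNeutral} applies verbatim to $\beta$-\emph{ex post} efficiency as well (the remark after \Cref{thm:Cond}), so from any strategyproof SDS that is $\alpha$-Condorcet-consistent and $\beta$-\emph{ex post} efficient we obtain one with the same $\alpha$ and $\beta$ that is additionally anonymous and neutral. It therefore suffices to treat $f=\lambda f_{\mathit{point}}+(1-\lambda)f_{\mathit{sup}}$ as in \Cref{thm:Barbera}, with scoring vectors $a_1\ge\dots\ge a_m\ge 0$ and $b_n\ge\dots\ge b_0\ge 0$, $b_i+b_{n-i}=\tfrac{2}{m(m-1)}$.

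Fix alternatives $a,z,c_1,\dots,c_{m-2}$. Let $R_1$ be the profile in which every voter reports $a\succ z\succ c_1\succ\dots\succ c_{m-2}$; then $z$ is Pareto-dominated, $f_{\mathit{point}}(R_1,z)=n a_2$, and $f_{\mathit{sup}}(R_1,z)=b_0+(m-2)b_n$. Let $R_0$ be a profile in which every voter ranks $a$ in second place and the top choices are spread as evenly as possible over $A\setminus\{a\}$; since $m\ge 4$ and $n\ge 5$, every alternative other than $a$ is top-ranked by fewer than $n/2$ voters, so $a$ is the Condorcet winner, $f_{\mathit{point}}(R_0,a)=n a_2$, and $f_{\mathit{sup}}(R_0,a)=\sum_{y\ne a}b_{n_{ay}(R_0)}\le (m-1)b_n$. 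Hence $f_{\mathit{point}}(R_0,a)\le\tfrac{m-1}{m-2}f_{\mathit{point}}(R_1,z)$ (in fact with equality) and $f_{\mathit{sup}}(R_0,a)\le (m-1)b_n\le\tfrac{m-1}{m-2}\bigl(b_0+(m-2)b_n\bigr)=\tfrac{m-1}{m-2}f_{\mathit{sup}}(R_1,z)$, the last step reducing to $b_0\ge 0$. Combining these with weights $\lambda$ and $1-\lambda$ yields $\alpha\le f(R_0,a)\le\tfrac{m-1}{m-2}f(R_1,z)\le\tfrac{m-1}{m-2}\beta$, i.e.\ $\beta\ge\tfrac{m-2}{m-1}\alpha$.

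For the characterization I assume $\beta=\tfrac{m-2}{m-1}\alpha$, so the whole chain collapses to equalities. Equality in the point voting step forces $a_2=0$, hence $a_1=\tfrac1n$ and $f_{\mathit{point}}=f_{\mathit{RD}}$ (vacuous if $\lambda=0$); equality in the supporting size step forces $b_0=0$, hence $b_n=\tfrac{2}{m(m-1)}$, and $b_{n_{ay}(R_0)}=b_n$ for all $y$ (vacuous if $\lambda=1$). It remains to pin down the middle of the vector $b$, for which I would rerun the same argument with profiles $R_0^{(y)}$ in which again every voter ranks $a$ second, but now a fixed alternative $y$ is top-ranked by exactly $n-\lceil\tfrac{n+1}{2}\rceil$ voters, so that $a$ beats $y$ by the slimmest possible majority while still beating every other alternative. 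Using $b_0=0$ together with the values of $\alpha$ and $\beta$ already forced, the analogous chain of inequalities collapses and gives $b_{\lceil(n+1)/2\rceil}=\tfrac{2}{m(m-1)}$; monotonicity of $b$ and $b_i+b_{n-i}=\tfrac{2}{m(m-1)}$ then yield $b_i=\tfrac{2}{m(m-1)}$ for $i>n/2$, $b_i=0$ for $i<n/2$, and $b_{n/2}=\tfrac{1}{m(m-1)}$ for even $n$, which is exactly the Copeland scoring vector. Hence $f_{\mathit{sup}}=f_C$ and $f=\lambda f_{\mathit{RD}}+(1-\lambda)f_C$.

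The crux, and the reason $R_0$ must be chosen so carefully, is that a point voting SDS and a supporting size SDS are worst for Condorcet-consistency on structurally incompatible profiles: the former when the Condorcet winner sits low on every ballot, the latter when it wins each of its pairwise comparisons by the thinnest margin. No ``generic'' profile witnesses the bound for a nontrivial mixture of the two. Fixing the Condorcet winner at a single position (second) makes the point voting contribution independent of the scoring vector and equal to what the Pareto-dominated alternative of $R_1$ receives, while the freedom in distributing the remaining top choices lets us steer the pairwise margins; checking that this still produces a genuine Condorcet winner with admissible supporting sizes is precisely where the hypotheses $m\ge 4$ and $n\ge 5$ enter.
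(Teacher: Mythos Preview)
Your proof is correct and follows essentially the same approach as the paper: reduce to the anonymous/neutral case via \Cref{lem:alphaAnonNeutral}, apply \Cref{thm:Barbera}, and compare a profile where a Pareto-dominated alternative sits in second place against one where the Condorcet winner is never top-ranked. The only difference is that the paper builds a single Condorcet profile $R'$ in which the Condorcet winner additionally wins every pairwise comparison with the minimal margin $\lceil\tfrac{n+1}{2}\rceil$, so that $f_{\mathit{sup}}(R',x)=(m-1)b_{\lceil(n+1)/2\rceil}$ and the equality case immediately forces $a_2=0$, $b_0=0$, and $b_n=b_{\lceil(n+1)/2\rceil}$ in one pass, whereas your $R_0$ gives only $f_{\mathit{sup}}(R_0,a)\le(m-1)b_n$ and you recover $b_{\lceil(n+1)/2\rceil}=b_n$ via the second profile $R_0^{(y)}$; both routes are valid.
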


\begin{proof}
	Let $f$ be a strategyproof SDS that satisfies $\alpha$-Condorcet consistency for some $\alpha\in [0,\frac{2}{m}]$ and let $\beta\in [0,1]$ denote the minimal value such that $f$ is $\beta$-\emph{ex post} efficient. We first show that $\beta\geq \frac{m-2}{m-1}\alpha$ and hence apply \Cref{lem:alphaAnonNeutral} to construct an SDS $f'$ that satisfies strategyproofness, anonymity, neutrality, $\alpha'$-Condorcet consistency for $\alpha' \ge \alpha$, and $\beta'$-\emph{ex post} efficiency for $\beta' \le \beta$. In particular, if $f'$ is only $\beta'$-\emph{ex post} efficient for $\beta'\geq \frac{m-2}{m-1}\alpha'$, then $f$ can only satisfy $\beta$-\emph{ex post} efficiency for $\beta\geq \beta'\geq \frac{m-2}{m-1}\alpha'\geq  \frac{m-2}{m-1}\alpha$.
	
	Since $f'$ satisfies anonymity, neutrality, and strategyproofness, we can apply \Cref{thm:Barbera} to represent it as a mixture of a supporting size SDS and a point voting SDS, i.e., $f' = \lambda f_\mathit{point} + (1 - \lambda) f_\mathit{sup}$ for some $\lambda\in[0,1]$. Let $(a_1, \dots, a_m)$ and $(b_0, \dots, b_n)$ denote the scoring vectors describing $f_\mathit{point}$ and $f_\mathit{sup}$, respectively. Next, we a derive lower bound for $\alpha'$ and an upper bound for $\beta'$ by considering specific profiles. First, consider the profile $R$ in which every voter reports $a$ as his best alternative and $b$ as his second best alternative; the remaining alternatives can be ordered arbitrarily. It follows from the definition of point voting SDSs that $f_\mathit{point}(R,b)=na_2$ and from the definition of supporting size SDS that $f_\mathit{sup}(R,b)=(m-2)b_n+b_0$. Since $a$ Pareto-dominates $b$ in $R$, it follows that $\beta'\geq f(R,b)=\lambda na_2 + (1-\lambda)((m-2)b_n+ b_0)$. 
	
	For the upper bound on $\alpha$, consider the following profile $R'$ where alternative $x$ is never ranked first, but it is the Condorcet winner and wins every pairwise comparison only with minimal margin. We denote for the definition of $R'$ the alternatives as $A = \{ x, x_1, \dots, x_{m - 1} \}$. In $R'$, the voters $i \in \{1, 2, 3\}$ ranks alternatives $X_i := \{ x_k \in A \setminus \{x\}: k \mod 3 = i-1 \}$ above $x$ and all other alternatives below. Since $m \ge 4$, none of them ranks $x$ first. If the number of voters $n$ is even, we duplicate voters $1$, $2$, and $3$. As last step, we add pairs of voters with inverse preferences such that no voter prefers $x$ the most until $R'$ consists of $n$ voters. Since alternative $x$ is never top-ranked in $R'$, it follows that $f_\mathit{point}(R',x) \le na_2$. Furthermore, $n_{xy}(R')=\lceil\frac{n + 1}{2}\rceil$ for all $y\in A\setminus \{x\}$ and therefore $f_\mathit{sup}(R',x) = (m-1) b_{\lceil\frac{n + 1}{2}\rceil}$. Finally, we derive that $\alpha'\leq f(R',x)\leq\lambda na_2+(1-\lambda)(m-1)b_{\lceil\frac{n + 1}{2}\rceil}$ because $x$ is by construction the Condorcet winner in $R'$.
	
	Using these bounds, we show next that $f'$ is only $\beta'$-\emph{ex post} efficiency for $\beta'\geq \frac{m-2}{m-1}\alpha'$, which proves the second claim of the theorem. In the subsequent calculation, the first and last inequality follow from our previous analysis. The second inequality is true since $\frac{m-2}{m-1}\leq 1$ and $\frac{m-2}{m-1}(m-1)=(m-2)$. The third inequality uses the definition of supporting size SDSs.
	\begin{align*}
		\beta'&\geq \lambda na_2 + (1-\lambda)((m-2)b_n+ b_0)\\
		& \geq \frac{m-2}{m-1}\lambda na_2 + \frac{m-2}{m-1}(1-\lambda)((m-1)b_n + b_0)\\
		& \geq \frac{m-2}{m-1}\lambda na_2 + \frac{m-2}{m-1}(1-\lambda)(m-1)b_{\lceil\frac{n + 1}{2}\rceil}\\
		& \geq \frac{m-2}{m-1}\alpha'
	\end{align*}
		
	Finally, note that, if $\beta' = \frac{m-2}{m-1}\alpha'$, all inequalities must be tight. If the second inequality is tight $a_2 = 0$ and $b_0 = 0$, and when the third inequality is tight $b_n = b_{\lceil\frac{n + 1}{2}\rceil}$. These observations fully specify the scoring vectors of $f_\mathit{point}$ and $f_\mathit{sup}$. For the point voting SDS, $a_2 = 0$ implies $a_i = 0$ for all $i\geq 2$ and $a_1 = \frac{1}{n}$, i.e., $f_\mathit{point}$ is the uniform random dictatorship. Next, $b_0 = 0$ and $b_n = b_{\lceil\frac{n + 1}{2}\rceil}$ imply that $b_i =\frac{2}{m(m-1)}$ for all $i\in \{\lceil\frac{n + 1}{2}\rceil,\dots, b_n\}$ and $b_i = 0$ for all $i \in \{0, \dots, \lfloor\frac{n-1}{2}\rfloor \}$. Moreover, if $n$ is even, the definition of supporting size SDSs requires that $b_{\frac{n}{2}} = \frac{1}{m(m-1)}$. This shows that $f_\mathit{sup}$ is the randomized Copeland rule. Consequently, the SDS $f'$ is a mixture of the uniform random dictatorship and the randomized Copeland rule if $\beta' = \frac{m-2}{m-1}\alpha'$. This proves that every strategyproof SDS that satisfies anonymity, neutrality, $\alpha$-Condorcet consistency, and $\beta$-\emph{ex post} efficiency with $\beta = \frac{m-2}{m-1}\alpha$ is a mixture of the uniform random dictatorship and the randomized Copeland rule.
\end{proof}

\begin{remark}
	All axioms of the characterization in \Cref{thm:alphabeta} are independent of each other. Every mixture of random dictatorships other than the uniform one and the randomized Copeland rule only violates anonymity. An SDS that violates only neutrality can be constructed by using a variant of the randomized Copeland rule that does not split the probability equally if there is a majority tie.
	Finally, the correlation between $\alpha$-Condorcet-consistency and $\beta$-\emph{ex post} efficiency is required since the uniform lottery satisfies all other axioms. 
	Moreover, all bounds on $m$ and $n$ in \Cref{thm:alphabeta} are tight. If there are only $n=2$ voters, $m=3$ alternatives, or $m=4$ alternatives and $n=4$ voters, the uniform random dictatorship is not $0$-Condorcet consistent since a Condorcet winner is always ranked first by at least one voter. Hence, the bound on $\beta$ does not hold in these cases. In contrast, our proof shows that \Cref{thm:alphabeta} is also true when $n = 3$.
\end{remark}

\section{Conclusion}

\begin{figure}[t]
	\begin{tikzpicture}
	    \fill[color=gray!20] (0,3) -- (3,3) -- (3,2) -- (5/3, 2) -- (0,0);

		\draw[->] (0,0) -- (3,0) node[right] {$\beta$};
		\draw[->] (0,0) -- (0,3) node[above] {$\alpha$};
		\node[below] (Z) at (0, 0) {$0$};
		\draw[shift={(1,0)}] (0pt,2pt) -- (0pt,-2pt) node[, below] {$\frac{\vphantom{(}1}{\vphantom{(}m}$};
		\draw[shift={(1,0)}, densely dotted] (0,0) -- (0,3);
		\draw[shift={(1.66,0)}] (0pt,2pt) -- (0pt,-2pt) node[shift={(0pt, 0)}, below] {$\frac{2(m - 2)}{m(m - 1)}$};
		\draw[shift={(1.66,0)}, densely dotted] (0,0) -- (0,3);
		\draw[shift={(0,0)}] (2pt,0pt) -- (-2pt,0pt);
		\draw[shift={(0,0)}] (0pt,2pt) -- (0pt,-2pt);
		\draw[shift={(0,1)}] (2pt,0pt) -- (-2pt,0pt) node[left] {$\frac{1}{m}$};
		\draw[shift={(0,1)}, densely dotted] (0,0) -- (3, 0);
		\draw[shift={(0,2)}] (2pt,0pt) -- (-2pt,0pt) node[left] {$\frac{2}{m}$};
		\draw[shift={(0,2)}, densely dotted] (0,0) -- (3, 0);
		\draw[shift={(0,0)}, densely dotted] (0,0) -- (1.66, 2);
		
		\draw[shift={(0, 0)}, line width=0.6pt] (-2pt,-2pt) -- (2pt,2pt) (-2pt,2pt) -- (2pt,-2pt) node[above right] {$d$};
		\draw[shift={(1.66, 2)}, line width=0.6pt] (-2pt,-2pt) -- (2pt,2pt) (-2pt,2pt) -- (2pt,-2pt) node[right] {$c$};
		\draw[shift={(1.66, 1.2)}, line width=0.6pt] (-2pt,-2pt) -- (2pt,2pt) (-2pt,2pt) -- (2pt,-2pt) node[right] {$b$};
		\draw[shift={(1, 1)}, line width=0.6pt] (-2pt,-2pt) -- (2pt,2pt) (-2pt,2pt) -- (2pt,-2pt) node[right] {$u$};
		
	\end{tikzpicture}
	\quad
	\begin{tikzpicture}
	    \fill[color=gray!20] (0,0) -- (0,3) -- (3/4,0) -- (0,0);
	    \fill[color=gray!20] (3,3) -- (0,3) -- (3,0) -- (3,3);
		
		\draw[-] (0,0) -- (3,0) node[right] {$\beta$};
		\draw[-] (0,0) -- (0,3) node[above] {$\gamma$};
		\node[below] (Z) at (0, 0) {$0$};
		\draw[shift={(3/4,0)}] (0pt,2pt) -- (0pt,-2pt) node[, below] {$\frac{\vphantom{(}1}{\vphantom{(}m}$};
		\draw[shift={(1,0)}] (0pt,2pt) -- (0pt,-2pt) node[shift={(1.5em, 0)}, below] {$\frac{2(m - 2)}{m(m - 1)}$};
		\draw[shift={(1,0)}, densely dotted] (0,0) -- (0,3);
		\draw[shift={(0,0)}] (2pt,0pt) -- (-2pt,0pt);
		\draw[shift={(0,0)}] (0pt,2pt) -- (0pt,-2pt);
		\draw[shift={(3,0)}] (0pt,2pt) -- (0pt,-2pt) node[shift={(0pt, 0)}, below] {$1$};
		\draw[shift={(0,3)}] (2pt,0pt) -- (-2pt,0pt) node[left] {$1$};
		\draw[shift={(0,0)}, densely dotted] (0,3) -- (3/4, 0);
		
		\draw[shift={(0,0)}, densely dotted] (0,3) -- (3, 0);

		\draw[shift={(0, 3)}, line width=0.6pt] (-2pt,-2pt) -- (2pt,2pt) (-2pt,2pt) -- (2pt,-2pt) node[above right] {$d$};
		\draw[shift={(1, 0)}, line width=0.6pt] (-2pt,-2pt) -- (2pt,2pt) (-2pt,2pt) -- (2pt,-2pt) node[above right] {$c$};
		\draw[shift={(1, 2.75/10)}, line width=0.6pt] (-2pt,-2pt) -- (2pt,2pt) (-2pt,2pt) -- (2pt,-2pt) node[above right] {$b$};
		\draw[shift={(3/4, 0)}, line width=0.6pt] (-2pt,-2pt) -- (2pt,2pt) (-2pt,2pt) -- (2pt,-2pt) node[above left] {$u$};
		
	\end{tikzpicture}
	\caption{Graphical summary of our results. Points in the figures correspond to SDSs and the horizontal axis indicates in both figures the value of $\beta$ for which the considered SDS is $\beta$-\emph{ex post} efficient. In the left figure, the vertical axis states the $\alpha$ for which the considered SDSs are $\alpha$-Condorcet-consistent, and in the right figure, it shows the $\gamma$ for which SDSs are $\gamma$-randomly dictatorial. \Cref{thm:Cond,thm:alphabeta} show that no strategyproof SDS lies in the grey area of the left figure. \Cref{thm:contrd} shows that no strategyproof SDS lies in the grey area below the diagonal in the right figure. Furthermore, no SDS lies in the grey area above the diagonal since a $\gamma$-randomly dictatorial SDS can put no more than $1 - \gamma$ probability on Pareto-dominated alternatives. Finally, the following SDS are marked in the figures: $d$ corresponds to all random dictatorships, $c$ to the randomized Copeland rule, $b$ to the randomized Borda rule, and $u$ to the uniform lottery.}
	\label{fig:resultGraph}
\end{figure}
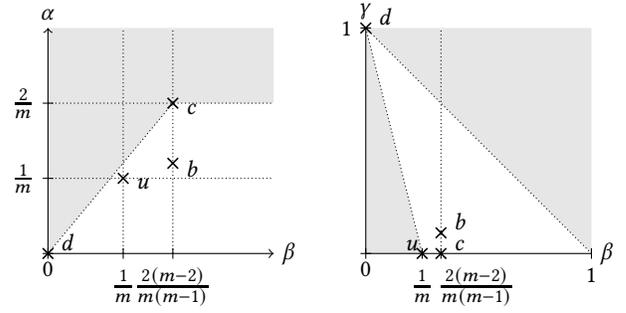

In this paper, we analyzed strategyproof SDSs by considering relaxations of Condorcet-consistency and \emph{ex post} efficiency. Our findings, which are summarized in \Cref{fig:resultGraph}, show that two strategyproof SDSs perform particularly well with respect to these axioms: the uniform random dictatorship (and random dictatorships in general), and the randomized Copeland rule. In more detail, we prove that the randomized Copeland rule is the only strategyproof, anonymous, and neutral SDS which guarantees a probability of $\frac{2}{m}$ to the Condorcet winner. Since no other strategyproof SDS can guarantee more probability to the Condorcet winner (even if we drop anonymity and neutrality), this characterization identifies the randomized Copeland rule as one of the most Condorcet-consistent strategyproof SDSs. On the other hand, Gibbard's random dictatorship theorem shows that random dictatorships are the only \emph{ex post} efficient and strategyproof SDSs. We present a continuous generalization of this result: for every $\epsilon\in[0,1]$, every $\frac{1-\epsilon}{m}$-\emph{ex post} efficient and strategyproof SDS is $\gamma$-randomly dictatorial for $\gamma\geq \epsilon$. This means informally that, even if we allow that Pareto-dominated alternatives can get a small amount of probability, we end up with an SDS similar to a random dictatorship. Finally, we derive a tradeoff between $\alpha$-Condorcet-consistency and $\beta$-\emph{ex post} efficiency for strategyproof SDSs: every strategyproof and $\alpha$-Condorcet-consistent SDS fails $\beta$-\emph{ex post} efficiency for $\beta<\frac{m-2}{m-1} \alpha$. This theorem entails that it is not possible to jointly optimize these two axioms, which highlights the special role of the randomized Copeland rule and random dictatorships again.

\balance

\begin{acks}
This work was supported by the Deutsche Forschungsgemeinschaft under grant \mbox{BR 2312/12-1}. We thank Dominik Peters for stimulating discussions and the anonymous reviewers for their helpful comments.
\end{acks}

\bibliographystyle{ACM-Reference-Format}

\newpage

\section*{Appendix: Omitted Proofs}

Here, we discuss the missing proofs of all lemmas and of \Cref{thm:contrd}. Proof sketches providing intuition for the lemmas can be found in the main body. First, we discuss the proof of \Cref{lem:zerodict}. Recall for this proof that $R^{i:yx}$ is the profile derived from $R$ by letting voter $i$ reinforce $y$ against $x$.

\zerodict*
\begin{proof}
	``$\impliedby$'' Assume that $f$ is a strategyproof SDS for which there are values $\gamma_1, \dots, \gamma_n$ such that $f(R^{i:yx},y)-f(R,y)\geq \gamma_i\geq 0$ for all alternatives $x,y\in A$, voters $i\in N$, and profiles $R$ such that voter $i$ prefers $x$ the most and $y$ the second most in $R$. Furthermore, we assume that for every voter $i\in N$, this inequality is tight for at least one pair of alternatives $x,y\in A$ and one such profile $R$. We show next that $f$ is $\gamma$-randomly dictatorial for $\gamma=\sum_{i\in N} \gamma_i$. 
	
	As first step, note that $f(R,x)\geq \sum_{i\in S} \gamma_i$ for every profile $R$, alternative $x\in A$, and set of voters $S\subseteq N$ such that all voters in $S$ report $x$ as their favorite alternative. This follows by letting the voters $i\in S$ one after another swap $x$ with their second best alternative $y$ (note that $y$ might be a different alternative for every voter $i\in S$). Using our assumption on $f$, the probability of $y$ has to increase by at least $\gamma_i$ during such a step, which means that the probability of $x$ decreases by $\gamma_i$ because of localizedness. Furthermore, it holds that $f(R',x)\geq 0$, where $R'$ is the profile derived by letting all voters in $S$ swap their best two alternatives. Combining these two facts then implies that $f(R,x)\geq \sum_{i\in S} \gamma_i$. Note that this observation implies that $\gamma\leq 1$ because otherwise, $f$ cannot be a valid SDS. Moreover, $f$ is a random dictatorship if $\gamma=1$. This follows from the following reasoning: for all profiles $R$ and alternatives $x\in A$, it holds that $f(R,x)\geq \sum_{i\in S_x} \gamma_i$, where $S_x$ denotes the set of voters who prefer $x$ the most in $R$. Since the sets $S_x$ partition $N$ and $\gamma=1$, this inequality must be tight for every alternative; otherwise, $\sum_{x\in A} f(R,x)>\sum_{x\in A} \sum_{i\in S_x} \gamma_i=1$, contradicting the definition of an SDS. Hence, if $\gamma=1$, $f$ is $1$-randomly dictatorial as $f=\sum_{i\in N}\gamma_i d_i$, where $d_i$ denotes the dictatorial SDS of voter $i$.
	
	As next case, suppose that $\gamma<1$ and define $g=\frac{1}{1 - \gamma}\left(f -\sum_{i\in N}\gamma_i d_i\right)$. Note that $g$ is a well-defined SDS: for all profiles $R$ and alternatives $g$, it holds that $g(R,x)\geq 0$ because $f(R,x)\geq \sum_{i\in S_x} \gamma_i$. Moreover, $\sum_{x\in A}g(R)=\frac{1}{1 - \gamma} \sum_{x\in A}f(R,x) -\sum_{x\in A}\sum_{i\in N}\frac{\gamma_i}{1 - \gamma} d_i=\frac{1}{1 - \gamma}-\frac{\gamma}{1 - \gamma}=1$ for all profiles $R$. Next, we show that $g$ is strategyproof, which implies that $f$ is $\gamma'$-randomly dictatorial for $\gamma'\geq\gamma$ because $f=\sum_{i\in N}\gamma_i d_i + (1-\gamma) g$. It is sufficient to show that $g$ is localized and non-perverse because then \Cref{thm:Gibbard2} implies that $g$ is strategyproof. In more detail, $g$ is localized because the SDS $f$ and all SDSs $d_i$ are localized. Hence, swapping two alternatives in the preferences of a voter only affects these two alternatives. For seeing that $g$ is non-perverse, consider a voter $i$, two alternatives $x,y\in A$ and a profile $R$ such that $x$ is voter $i$'s $k$-th best alternative and $y$ is his $k+1$-th best one. We show that $g(R^{i:yx}, y)\geq g(R,y)$, which entails that $g$ is non-perverse. Note for this that $d_j(R^{i:yx})=d_j(R)$ for all $j\in N\setminus \{i\}$ because the preferences of these voters did not change, and $f(R^{i:yx},y)- f(R,y)\geq 0$ because $f$ is strategyproof. If $x$ and $y$ are not the two best alternatives of voter $i$, then $d_i(R^{i:yx})=d_i(R) = 0$. Hence, it immediately follows that $g(R^{i:yx},y)-g(R,y)=\frac{1}{1-\gamma}\Big(f(R^{i:yx},y)-f(R,y)\Big)\geq 0$ in this case. On the other hand, if $x$ and $y$ are voter $i$'s two best alternative, we have that $d_i(R^{i:yx},y)=1$ and $d_i(R,y)=0$. Moreover, our assumptions imply that $f(R^{i:yx},y)-f(R,y)\geq \gamma_i$ because $x$ and $y$ voter $i$'s two best alternatives. Thus, we calculate that $g(R^{i:yx},y)-g(R,y)=\frac{1}{1-\gamma}\Big(f(R^{i:yx},y)-f(R,y) - \gamma_i (d_i(R^{i:yx},y) - d_i(R,y))\Big)\geq \frac{1}{1-\gamma}\Big(\gamma_i -\gamma_i\Big)=0$, which shows that $g$ is non-perverse.
	
	Finally, we show that $f$ cannot be $\gamma'$-randomly dictatorial for $\gamma'>\gamma$. If this was the case, we can represent $f$ as $f=\sum_{i\in N}\gamma'_i d_i + (1-\gamma') g'$, where $\gamma'_i\geq 0$ are values such that $\sum_{i\in N}\gamma'_i=\gamma'$ and $g'$ is a strategyproof SDS. Since $\gamma'>\gamma$, there is a voter $i$ with $\gamma_i'>\gamma_i$. Furthermore, our assumptions state that there are a profile $R$ and alternatives $x,y$ such that voter $i$ prefers $x$ the most and $y$ the second most in $R$, and $f(R^{i:yx},y)-f(R,y)=\gamma_i$. This means that $\Big(f(R^{i:yx},y)-\sum_{j\in N}\gamma'_j d_j(R^{i:yx},y)\Big)-\Big(f(R,y)-\sum_{j\in N}\gamma'_j d_j(R,y)\Big)=\gamma_i-\gamma_i'<0$ because $d_i(R^{i:yx},y)-d_i(R,y)=1$ and $d_j(R^{i:yx},y)-d_j(R,y)=0$ for all $j\in N\setminus \{i\}$. Consequently, $g'(R^{i:yx},y)-g'(R,y)<0$ which means that $g'$ violates non-perversity and therefore also strategyproofness. Hence, the assumption that $f$ is $\gamma'$-randomly dictatorial for $\gamma'>\gamma$ is wrong and $f$ is therefore $\gamma$-randomly dictatorial.
	
	``$\implies$'' Let $f$ be a strategyproof $\gamma$-randomly dictatorial SDS. We show next that there are values $\gamma_i$ that satisfy the requirements of the lemma. Since $f$ is $\gamma$-randomly dictatorial, it can be represented as $f=\gamma d+(1-\gamma) g$, where $d$ is a random dictatorship and $g$ is another strategyproof SDS. Moreover, as $d$ is a random dictatorship, there are values $\delta_1,\dots,\delta_n$ such that $\delta_i\geq0$ for all $i\in N$, $\sum_{i\in N} \delta_i=1$, and $d=\sum_{i\in N} \delta_i d_i$. In the last equation, $d_i$ denotes the dictatorial SDS of voter $i$. Combining these two equations, we derive that $f=\gamma \sum_{i\in N} \delta_i d_i + (1-\gamma) g$. We show in the sequel that the values $\gamma_i=\gamma\delta_i$ satisfy all requirements of our lemma. First, note that the conditions $\gamma_i\geq 0$ for all $i\in N$ and $\sum_{i\in N}\gamma_i=\gamma$ are obviously true.
	
	Next, consider two alternatives $x,y\in A$, an arbitrary voter $i\in N$, and a profile $R$ in which voter $i$ reports $x$ as his best alternative and $y$ as his second best one. It holds that $g(R^{i:yx},y)-g(R,y)\geq 0$ because $g$ is strategyproof and therefore non-perverse, $d_j(R^{i:yx},y)-d_j(R,y)=0$ for all $j\in N\setminus \{i\}$ because $\succeq^{i:yx}_j\ =\ \succeq_j$, and $d_i(R^{i:yx},y)-d_i(R,y)=1$ as $y$ is voter $i$'s best alternative in $R^{i:yx}$, but not in $R$. Hence, it follows that $f(R^{i:yx},y)-f(R,y)\geq \gamma\delta_i = \gamma_i$ for all voters $i\in N$, alternatives $x,y\in A$, and preference profiles $R$ in which voter $i$ reports $x$ as his best and $y$ as his second best alternative. 
	
	Finally, it remains to show that there is for every voter $i\in N$ a pair of alternatives $x,y\in A$ and a profile $R$ such that voter $i$ prefers $x$ the most and $y$ the second most in $R$ and $f(R^{i:yx},y)-f(R,y)=\gamma_i$. Assume this is not the case for some voter $i$, i.e, that $f(R^{i:yx},y)-f(R,y)>\gamma_i$ for all alternatives $x,y\in A$ and profiles $R$ in which $x$ is voter $i$'s best alternative and $y$ his second best one. Hence, let $\gamma_i'>\gamma_i$ denote the minimal value of $f(R^{i:yx},y)-f(R,y)$ among all alternatives $x,y\in A$ and preference profiles $R$ in which voter $i$ reports $x$ as his best alternative and $y$ as his second best one. Moreover, define $\gamma'=\gamma_i+ \sum_{j\in N\setminus \{i\}}\gamma_j$. We can now apply the arguments for the inverse direction to derive that $f$ is $\gamma''$-randomly dictatorial for some $\gamma''\geq \gamma'>\gamma$. This contradicts our assumption that $f$ is $\gamma$-randomly dictatorial as $\gamma$ must be the maximal value such that $f$ can be represented as $f=\gamma d + (1-\gamma) g$, where $d$ is a random dictatorship and $g$ is another strategyproof SDS. Hence, it follows that there are for every voter $i\in N$ a profile $R$ and two alternatives $x,y\in A$ such that $f(R^{i:yx},y)-f(R,y)=\gamma_i$ and voter $i$ reports $x$ as his best alternative and $y$ as his second best one in $R$. This means that our choice of $\gamma_i$ satisfies all requirements of the lemma. 
\end{proof}

\subsection{Proof of \Cref{thm:Cond}}

Next, we show the lemmas required for the proof of \Cref{thm:Cond}. First, we discuss the averaging construction of \Cref{lem:alphaAnonNeutral} in detail. 

\alphaAnonNeutral*
\begin{proof}
	Let $f$ denote an arbitrary strategyproof SDS that is $\alpha$-Condorcet-consistent for some $\alpha\in [0,1]$. We construct in the sequel an anonymous and neutral SDS $f^*$ that satisfies strategyproofness and $\alpha$-Condorcet-consistency for the same $\alpha$ as $f$. As first step, we define the SDS $f^{\pi\tau}$ for arbitrary permutations $\pi:N\rightarrow N$ and $\tau:A\rightarrow A$ as follows. First, $f^{\pi\tau}$ permutes the voters in the input profile $R$ according to $\pi$ and the alternatives according to $\tau$. Next, we compute $f$ on the resulting profile $\tau(\pi(R))$ and finally, we define $f^{\pi\tau}(R,x)$ as the probability assigned to $\tau(x)$ by $f$ in $\tau(\pi(R))$. More formally, $f^{\pi\tau}$ is defined as $f^{\pi\tau}(R,x)=f(\tau(\pi(R)), \tau(x))$, where the profile $\tau(\pi(R))$ satisfies for all $i\in N$ and $x,y\in A$ that $\tau(x)\succ_{\pi(i)} \tau(y)$ in $\tau(\pi(R))$ if and only if $x\succ_i y$ in $R$. Note that $f^{\pi\tau}$ is strategyproof for all permutations $\pi$ and $\tau$ because every manipulation of $f^{\pi\tau}$ implies a manipulation of $f$. Furthermore, $f^{\pi\tau}$ is $\alpha$-Condorcet-consistent because for every preference profile $R$ with Condorcet winner $x$, $\tau(x)$ is the Condorcet winner in $\tau(\pi(R))$. Hence, if $f^{\pi\tau}$ violates $\alpha$-Condorcet-consistency in some profile $R$, then $f$ violates this axiom in the profile $\tau(\pi(R))$. 
	
	Finally, we define the SDS $f^*$ by averaging over $f^{\pi\tau}$ for all permutations $\pi$ and $\tau$. Hence, let $\Pi$ denote the set of all permutations on $N$ and let $\Tau$ denote the set of all permutations on $A$. Then, $f^*$ is defined as follows. 
	
	\begin{equation*}
		\begin{aligned}
			f^*(R, x) :=& \sum_{\pi \in \Pi} \frac{1}{|\Pi|} \sum_{\tau \in \Tau} \frac{1}{|\Tau|} f^{\pi\tau}(R, x) \\
			=& \sum_{\pi \in \Pi} \sum_{\tau \in \Tau} \frac{1}{n!m!} f(\tau(\pi(R)), \tau(x))
		\end{aligned}
	\end{equation*}
	
	Next, we show that $f^*$ satisfies all axioms required by the lemma. First, $f^*$ is strategyproof since all SDSs $f^{\pi\tau}$ are strategyproof. The $\alpha$-Condorcet-consistency of $f^*$ is shown by the following inequality, where $R$ denotes a profile in which $x$ is the Condorcet winner.
	
	\begin{equation*}
	f^*(R, x) =
	\sum_{\pi \in \Pi} \sum_{\tau \in \Tau} \frac{1}{n!m!} f(\tau(\pi(R)), \tau(x)) \ge
	\sum_{\pi \in \Pi} \sum_{\tau \in \Tau} \frac{1}{n!m!} \alpha =
	\alpha
	\end{equation*}
	
	Furthermore, observe that $f^*$ is anonymous because it averages over all possible permutations of the voters, i.e., for all permutations of the voters $\pi \in \Pi: f^*(R) = f^*(\pi(R))$. It follows from a similar argument that $f^*$ is neutral: since $f^*$ averages over all permutations of the alternatives, it holds that $f^*(R, x) =	f^*(\tau(R), \tau(x))$ for every $\tau\in\Tau$. Hence, $f^*$ is strategyproof, $\alpha$-Condorcet-consistent, anonymous, and neutral.
\end{proof}

Next, we present the proof of \Cref{lem:alphaPoint} which demonstrates that point voting SDSs cannot satisfy $\alpha$-Condorcet-consistency for $\alpha\geq \frac{2}{m}$. Note that we use additional notation for this proof. The \emph{rank} $r(x, \succeq_i)$ of an alternative $x$ in the preferences of a voter $i$ is the number of alternatives that are weakly preferred to $x$ by voter $i$, i.e., $r(x, \succeq_i)=|\{y\in A\colon y\succeq_i x\}|$. Moreover, the \emph{rank vector} $r^*(x,R)$ of an alternative $x$ in a preference profile $R$ is the vector that contains the rank of $x$ with respect to every voter in increasing order. An important observation for point voting SDSs $f$ is that $f(R,x)=f(R',x)$ if $r^*(x,R)=r^*(x,R')$. The reason for this is that a point voting SDSs assign an alternative every time probability $a_i$ when it is ranked $i$-th. Finally, the proof focuses mainly on \emph{Condorcet winner candidates}, which are alternatives that can be made into the Condorcet winner without changing their rank vectors. 

\alphaPoint*
\begin{proof}
	Let $f$ be a point voting SDS for $m\geq 3$ alternatives and $n\geq 3$ voters, and let $a=(a_1, \dots, a_m)$ be the scoring vector that defines $f$. Furthermore, assume for contradiction that $f$ is $\alpha$-Condorcet-consistent for $\alpha\geq \frac{2}{m}$. In the sequel, we show that there can be many Condorcet winner candidates in a profile $R$. Since we can turn Condorcet winner candidates into Condorcet winners without changing their rank vector and since $f(R,x)=f(R',x)$ for all profiles $R$ and $R'$ with $r^*(x,R)=r^*(x,R')$, it follows that each Condorcet winner candidate has at least probability $\alpha$ in $R$. This observation is in conflict with $\sum_{x\in A}f(R,x)=1$ if $\alpha> \frac{2}{m}$ because there can be $\lceil\frac{m}{2}\rceil$ Condorcet winner candidates. By investigating our profiles in more detail, we also deduce that $\alpha=\frac{2}{m}$ is not possible. 

	We use a case distinction with respect to the parity of $n$ and $m$ to construct profiles with $\lceil\frac{m}{2}\rceil$ Condorcet winner candidates. Moreover, we first focus on cases with fixed $n$, and provide in the end an argument for generalizing the impossibility to all $n\geq 3$. \Cref{fig:expointvoting} illustrates our construction for all four base cases with $m\in \{3,4\}$. \medskip
			
	\begin{figure}[tbp]
		\begin{tabular}{c c}
			2 & 1 \\
			\midrule
			$x_1$ & $x_2$ \\
			$x_2$ & $x_3$ \\
			$x_3$ & $x_1$ \\
			\\[1em]
			\multicolumn{2}{c}{$R^1$}
		\end{tabular}
		\quad\quad
		\begin{tabular}{c c}
			2 & 1 \\
			\midrule
			$x_1$ & $x_4$ \\
			$x_2$ & $x_2$ \\
			$x_3$ & $x_3$ \\
			$x_4$ & $x_1$ \\[1em]
			\multicolumn{2}{c}{$R^2$}
		\end{tabular}
		\quad\quad
		\begin{tabular}{c c}
			2 & 2 \\
			\midrule
			$x_1$ & $x_2$ \\
			$x_2$ & $x_1$ \\
			$x_3$ & $x_3$ \\
			\\[1em]
			\multicolumn{2}{c}{$R^3$}
		\end{tabular}
		\quad\quad
		\begin{tabular}{c c c}
			2 & 1 & 1\\
			\midrule
			$x_1$ & $x_2$ & $x_4$ \\
			$x_2$ & $x_1$ & $x_2$ \\
			$x_3$ & $x_3$ & $x_1$\\
			$x_4$ & $x_4$ & $x_3$\\[1em]
			\multicolumn{3}{c}{$R^4$}
		\end{tabular}
		\centering
		\caption{Profiles used in the base cases of the proof of \Cref{lem:alphaPoint} if $m\in {3,4}$. The profile $R^k$ shows the profile corresponding to case $k$.}
		\label{fig:expointvoting}
		\Description{The figure shows the following four preference profiles. The first profile R1 has 3 voters and 3 alternatives. Voter 1 and 2 prefer x1, x2, x3, while voter 2 prefers x2, x3, x1. The second profile R2 has 3 voters and 4 alternatives. Voter 1 and 2 prefer x1, x2, x3, x4, while voter 3 prefers x4, x2, x1, x3. The third profile R3 has 4 voters and 3 alternatives. Voter 1 and 2 prefer x1, x2, x3, while voter 3 and 4 prefer x2, x1, x3. The fourth profile R4 has 4 voters and 4 alternatives. Voter 1 and 2 prefer x1, x2, x3, x4, voter 3 prefers x2, x1, x3, x4, and voter 4 prefers x4, x2, x1, x3.}
	\end{figure}
	
	\textbf{Case 1: $n=3$ and $m$ is odd}
	
	In this case, we choose $k=\frac{m + 1}{2}$ alternatives which are denoted by $x_1, \dots, x_{k}$. 
	We construct the profile $R^1$ with $k$ Condorcet winner candidates as follows. 
	For every $i \in \{1, \dots, k\}$, voters $1$ and $2$ rank alternative $x_i$ at position $i$, and voter $3$ ranks it at position $m + 2 - 2i$. 
	The sum of ranks of $x_i$ is then equal to $2i + m + 2 - 2i = m + 2$, which means that only $m - 1$ alternatives can be ranked above $x_i$. 
	Note for this that the sum of ranks of an alternative $x$ is the number of voters $n$ plus the number of alternatives that are ranked above $x$. 
	Hence, for every $i\in \{1,\dots, k\}$, we can reorder the alternatives in $A\setminus \{x_i\}$ such that each alternative $y\in A\setminus \{x_i\}$ is preferred to $x_i$ by a single voter. 
	Consequently, $x_i$ is a Condorcet winner candidate in $R^1$, and thus $f(R^1,x_i)\geq \alpha$ for all $i\in \{1,\dots, k\}$. 
	Since there are $k=\frac{m + 1}{2}$ Condorcet winner candidates and $\sum_{i=1}^k f(R^1,x_i)\leq 1$, we derive that $\alpha \frac{m + 1}{2}\leq 1$. 
	This is equivalent to $\alpha\leq \frac{2}{m+1}<\frac{2}{m}$, which shows that $f$ fails $\alpha$-Condorcet-consistency for $\alpha\geq \frac{2}{m}$ in this case.
	\medskip
	
	\textbf{Case 2: $n=3$ and $m$ is even}
	
	If $n=3$ and $m$ is even, we construct a preference profile $R^2$ with $\frac{m}{2}$ Condorcet winner candidates similar to the last case. More precisely, we first choose an alternative $z$, and apply the construction of the last case to the alternatives $A\setminus \{z\}$. Then, we add $z$ as the last-ranked alternative of voters 1 and 2 and as first-ranked alternative of voter 3. Note that adding $z$ does not affect whether an alternative is a Condorcet winner candidate because it is last-ranked by two out of three voters. Thus, there are $\frac{m}{2}$ Condorcet winner candidates in $R^2$ and it follows analogously to the last case that $\alpha\leq \frac{2}{m}$. Finally, we show that $\alpha=\frac{2}{m}$ is also impossible. Otherwise, each of the $\frac{m}{2}$ Condorcet winner candidates has a probability of $\frac{2}{m}$, which means that the other alternatives have a probability of $0$. Thus, $f(R^2,z)=0$ even though  voter $3$ reports $z$ as his best alternative. This implies for the scoring vector $a=(a_1, \dots, a_m)$ of $f$ that $a_1=0$. However, this is not possible because the scoring vector $a$ needs to satisfy $\sum_{i=1}^m a_i=\frac{1}{n}$ and $a_i\geq a_j$ if $i\leq j$. Hence, we deduce also for this case that $\alpha<\frac{2}{m}$ holds. 
	\medskip
	
	\textbf{Case 3: $n=4$ and $m$ is odd}
	
	Just as in the first case, we choose $k=\frac{m + 1}{2}$ alternatives which are denoted by $x_1, \dots, x_{k}$. Next, we construct a profile $R^3$ with $k$ Condorcet winner candidates as follows. For every $i \in \{1, \dots, k\}$, voters 1 and 2 rank alternative $x_i$ at position $i$, and voters 3 and 4 rank it at position $\frac{m + 1}{2} + 1 - i$. The sum of ranks of $x_i$ is then equal to $2i + 2\left(\frac{m + 1}{2} + 1 - i\right) = m + 3$. Since the sum of ranks of an alternative $x$ is the number of voters plus the number of alternatives ranked above $x$, we derive that only $m - 1$ alternatives can be ranked above $x_i$. Hence, for every $i\in \{1,\dots, k\}$, we can reorder the alternatives such that each alternative $y\in A\setminus \{x_i\}$ is ranked above $x_i$ once without changing the rank vector of $x_i$. This entails that each alternative $x_i$ is a Condorcet winner candidate and thus, we derive that $\alpha\leq\frac{2}{m+1}<\frac{2}{m}$ analogously to Case 1.\medskip
	
	\textbf{Case 4: $n=4$ and $m$ is even}
	
	Finally, consider the case that $n=4$ and $m$ is even. In this situation, we construct the profile $R^4$ with $\frac{m}{2}$ Condorcet winner candidates as follows: we choose an alternative $z$, and apply the construction of Case 3 to the alternatives in $A\setminus \{z\}$. Then, voters 1 to 3 add $z$ as their least preferred alternative and voter 4 adds it as his best alternative. Just as in Case 2, every alternative that is a Condorcet winner candidate before adding $z$ is also a Condorcet winner candidate after adding this alternative because $z$ is the least preferred alternative of a majority of the voters. Hence, there are $\frac{m}{2}$ Condorcet winner candidates in $R^4$, which implies that $\alpha\leq \frac{m}{2}$. Finally, an analogous argument as in Case 2 shows that $\alpha=\frac{2}{m}$ is not possible either. In particular, if $\alpha=\frac{2}{m}$, then $f(R^4,z)=0$ because only Condorcet winner candidates can have positive probability. However, $f(R^4,z)=0$ conflicts with the definition of point voting SDSs since voter $4$ reports $z$ as his favorite choice. Therefore, it follows that $f$ fails $\alpha$-Condorcet-consistency for $\alpha\geq\frac{2}{m}$.\medskip
	
	\textbf{Case 5: Generalizing the impossibility to larger $n$}
	
	Finally, we explain how to generalize the last four cases to an arbitrary number of voters $n\geq 3$. In this case, we also construct a profile with $\lceil\frac{m}{2}\rceil$ Condorcet winner candidates. In more detail, we choose the suitable base case and add repeatedly pairs of voters with inverse preferences until there are $n$ voters. Note that voters with inverse preferences do not change the majority margins, and therefore they do not change whether an alternative is a Condorcet winner candidate. Hence, every alternative that is a Condorcet winner candidate in the base case is also a Condorcet winner candidate in the extended profile, which means that the arguments in the base cases also apply for larger numbers of voters. Therefore, no point voting SDS satisfies $\alpha$-Condorcet-consistency for $\alpha\geq \frac{2}{m}$
\end{proof}

Next, we prove \Cref{lem:alphaSup}, which bounds the probability that can be guaranteed to Condorcet winners by supporting size SDSs. 

\alphaSup*
\begin{proof}
	Let $f$ be a supporting size SDS and let $b=(b_n, \dots, b_0)$ be the scoring vector that defines $f$. Recall that the definition of a supporting size SDS requires that $b_n \ge \dots \ge b_0 \ge 0$ and $b_i + b_{n-i} = \frac{2}{m(m-1)}$ for all $i\in \{0,\dots, n\}$. In particular, this implies that $b_i \le \frac{2}{m(m-1)}$ for all $i\in \{0,\dots, n\}$. Moreover, the probability that the SDS $f$ assigns to alternative $x$ in a profile $R$ is $f(R, x) = \sum_{y \in A \setminus \{x\}} b_{n_{xy}(R)}$. Since $b_i \le \frac{2}{m(m-1)}$ for all $i\in \{0,\dots, n\}$, we derive therefore that
	$f(R, x) \le (m - 1)\frac{2}{m(m-1)} = \frac{2}{m}$ for all preference profiles $R$ and alternatives $x\in A$. 
\end{proof}

\subsection{Proofs of \Cref{lem:expost} and \Cref{thm:alphabeta}}

We focus next on the proofs of the lemmas that are required for \Cref{lem:expost}. Hence, our goal is to derive a lower bound for the $\beta$-\emph{ex post} efficiency of strategyproof $0$-randomly dictatorial SDSs. Since \Cref{thm:Gibbard1} allows us to represent strategyproof SDSs as a mixtures of duples and unilaterals, we focus next on these two classes.

First, we investigate the $\beta$-\emph{ex post} efficiency of duples. Recall therefore that a duple is a strategyproof SDS $f_{xy}$ such that $f_{xy}(R,z)=0$ for all alternatives $z \in A \setminus \{x,y\}$. Moreover, a mixture of duples $f$ is defined as $f(R,x)=\sum_{y \in A \setminus \{x\}}\lambda_{xy}f_{xy}(R,x)$, where $\lambda_{xy}=\lambda_{yx}$ denote non-negative weights that sum up to $1$. Moreover, we use in this definition that $f_{xy}=f_{yx}$. Finally, note that one duple for every pair is sufficient to represent every mixture of duples because two duples $f_{xy}$ and $f_{xy}'$ can be merged into one. 

\begin{restatable}{lemma}{dupleExPost}\label{lem:dupleExPost}
	No SDS that can be represented as a convex combination of duples satisfies $\beta$-\emph{ex post} efficiency for $\beta < \frac{1}{m}$ if $m \ge 3$.
\end{restatable}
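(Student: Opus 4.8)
The plan is to bound $\beta$ from below by evaluating a $\beta$-\emph{ex post} efficient mixture of duples on a family of $m(m-1)$ profiles, one for each ordered pair of alternatives, and adding up the resulting constraints. Write $f=\sum_{\{x,y\}}\lambda_{xy}f_{xy}$ with non-negative weights $\lambda_{xy}=\lambda_{yx}$ summing to $1$, each $f_{xy}$ a duple on $\{x,y\}$. Since $f_{xy}$ is strategyproof, \Cref{thm:Gibbard2} makes it non-perverse and localized, so reinforcing $x$ against $y$ in every voter shows that $f_{xy}(R,x)$ is maximised --- call this value $M_{xy}$ --- on any profile in which all voters rank $x$ above $y$; moreover $f_{xy}(R,x)+f_{xy}(R,y)=1$ for every $R$. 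Evaluating the latter identity on a profile where all voters prefer $x$ to $y$ gives $1-M_{xy}=f_{xy}(R,y)\le M_{yx}$, hence the key inequality $M_{xy}+M_{yx}\ge 1$.

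For each ordered pair of distinct alternatives $(x,y)$, let $R^{xy}$ be a profile in which every voter ranks $x$ first and $y$ second (the order of the remaining alternatives does not matter, as $f$ is built from localized duples). In $R^{xy}$ alternative $x$ Pareto-dominates $y$, so $\beta$-\emph{ex post} efficiency gives $f(R^{xy},y)\le\beta$. Since in $R^{xy}$ every voter ranks $x$ above $y$ and ranks $y$ above every other alternative, we have $f_{xy}(R^{xy},y)=1-M_{xy}$ and $f_{yz}(R^{xy},y)=M_{yz}$ for $z\notin\{x,y\}$, whence
\begin{equation*}
	f(R^{xy},y)=\lambda_{xy}(1-M_{xy})+\sum_{z\in A\setminus\{x,y\}}\lambda_{yz}M_{yz}.
\end{equation*}

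Summing this over all $m(m-1)$ ordered pairs and reorganising the double sum --- each term $\lambda_{yz}M_{yz}$ is counted once for each of the $m-2$ choices of $x\notin\{y,z\}$, and $\sum_{(x,y)}\lambda_{xy}=2\sum_{\{u,v\}}\lambda_{uv}=2$ --- I expect to obtain
\begin{equation*}
	\sum_{(x,y)}f(R^{xy},y)=2+(m-3)\sum_{\{u,v\}}\lambda_{uv}\bigl(M_{uv}+M_{vu}\bigr).
\end{equation*}
Because $m\ge 3$, $M_{uv}+M_{vu}\ge 1$, and $\sum_{\{u,v\}}\lambda_{uv}=1$, the right-hand side is at least $2+(m-3)=m-1$; combined with $f(R^{xy},y)\le\beta$ for each pair this yields $m(m-1)\beta\ge m-1$, i.e.\ $\beta\ge\nicefrac 1m$. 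Hence no mixture of duples is $\beta$-\emph{ex post} efficient for $\beta<\nicefrac 1m$, and the bound is tight, since the uniform lottery is the uniform mixture of the fair-coin duples.

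The only conceptual ingredient is the inequality $M_{uv}+M_{vu}\ge 1$, which is exactly where strategyproofness enters (via non-perversity together with the two-alternative support of duples). The step I expect to require the most care is the second displayed identity: one must check that the negative contributions $-\lambda_{xy}M_{xy}$ from the first summand cancel against $m-2$ copies of $\lambda_{xy}M_{xy}$ arising from the $\sum_z$-summand, leaving the coefficient $m-3$, and the hypothesis $m\ge 3$ is used precisely to make that coefficient non-negative so that $M_{uv}+M_{vu}\ge 1$ may be applied in the correct direction.
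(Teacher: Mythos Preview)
Your argument is correct. The identity
\[
\sum_{(x,y)} f(R^{xy},y)=2+(m-3)\sum_{\{u,v\}}\lambda_{uv}\bigl(M_{uv}+M_{vu}\bigr)
\]
checks out: the first summand contributes $2-\sum_{\{u,v\}}\lambda_{uv}(M_{uv}+M_{vu})$ and the second contributes $(m-2)\sum_{\{u,v\}}\lambda_{uv}(M_{uv}+M_{vu})$, exactly as you anticipated. Together with $M_{uv}+M_{vu}\ge 1$ and $m\ge 3$ this gives $\beta\ge\nicefrac{1}{m}$.

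Your route is genuinely different from the paper's. The paper works with profiles $R^{x,y}$ in which $x$ is unanimously \emph{first} and $y$ unanimously \emph{last}, and proves the exact identity $\sum_{(x,y)}\bigl(f(R^{x,y},x)+f(R^{x,y},y)\bigr)=2(m-1)$ using only $f_{xy}(R,x)+f_{xy}(R,y)=1$ and $\sum\lambda_{xy}=2$. An averaging step then yields a pair with $f(R^{x,y},x)+f(R^{x,y},y)\le\nicefrac{2}{m}$, so the remaining $m-2$ alternatives---all Pareto-dominated by $x$---share at least $\nicefrac{(m-2)}{m}$, and pigeonhole finishes. By contrast, you put $y$ in \emph{second} place so that $y$ itself is the Pareto-dominated witness, introduce the extremal values $M_{xy}$, and use the monotonicity inequality $M_{xy}+M_{yx}\ge 1$ (this is where non-perversity enters) rather than an exact count. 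Your approach is more direct---it bounds the probability of a single Pareto-dominated alternative without a pigeonhole step---while the paper's avoids introducing the $M_{xy}$ and gets by with an equality rather than an inequality.
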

\begin{proof}
	Let $f(R, x) = \sum_{y \in A \setminus \{x\}}\lambda_{xy}f_{xy}(R,x)$ be an SDS represented as a convex combination of duples, where $f_{xy} = f_{yx}$ is the duple SDS for the pair $x$ and $y$ and $\lambda_{xy} = \lambda_{yx}$ is the weight of $f_{xy}$. Furthermore, let $R^{x,y}$ denote a profile where all voters report $x$ as best alternative and $y$ as worst one. First, note that $f(R^{x,y},x)=f(R^{x,z},x)$ and $f(R^{y,x},x)=f(R^{z,x},x)$ for all distinct $x,y,z\in A$. Thus, we also write $R^{x,\cdot}$ and $R^{\cdot, x}$ to indicate that alternative $x$ is unanimously top-ranked or bottom-ranked. 
	
	As first step, we want to bound the average probability $f(R^{x,y},x)+f(R^{x,y},y)$ over all $x,y\in A$. In more detail, the subsequent equation shows that $\sum_{x \in A}\sum_{y \in A \setminus \{x\}} \Big(f(R^{xy}, x) + f(R^{xy}, y)\Big)= 2(m - 1)$. 
	
	\begin{equation*}
		\begin{aligned}
			&\sum_{x \in A}\sum_{y \in A \setminus \{x\}} f(R^{x,y}, x) + f(R^{x,y}, y) \\
			&=(m-1)\sum_{x \in A} f(R^{x,\cdot}, x) + (m-1)\sum_{y \in A} f(R^{\cdot, y}, y) \\
			&= (m-1)\sum_{x\in A}\sum_{y\in A\setminus \{x\}} \lambda_{xy} f_{xy}(R^{x,y},x) + \lambda_{xy}f_{xy}(R^{x,y},y)\\
			&= (m-1)\sum_{x\in A}\sum_{y\in A\setminus \{x\}} \lambda_{xy}\\
			&= 2(m - 1)
		\end{aligned}
	\end{equation*}
	
	The first equality follows from $f(R^{x,y},x)=f(R^{x,\cdot},x)$, $f(R^{x, y},y)=f(R^{\cdot,y},y)$ for all alternatives $x,y\in A$, and the observation that every alternative $x$ is both unanimously top-ranked and unanimously bottom-ranked in exactly $(m-1)$ of the considered preferences profiles. For the second equality, we replace $f(R^{x,\cdot}, x)$ with $\sum_{y \in A \setminus \{x\}}\lambda_{xy}f_{xy}(R^{x, y},x)$ and $f(R^{\cdot, y}, y)$ with $\sum_{x \in A \setminus \{y\}}\lambda_{xy}f_{xy}(R^{x, y},y)$ according to the definition of $f$. Furthermore, we swap the order of the sum for the second term.
	We derive the third equality from the fact that $f_{xy}(R, x) + f_{xy}(R, y) = 1$ for all profiles $R$. Finally, the last equality uses that $ \sum_{x \in A} \sum_{y \in A \setminus \{x\}} \lambda_{xy} = 2$, which follows from $\sum_{x\in A}f(R, x) = \sum_{x\in A}\sum_{y \in A \setminus \{x\}}\lambda_{xy}f_{xy}(R, x) = 1$ and $f_{xy}(R, x) + f_{xy}(R, y) = 1$ for all profiles $R$. 
	
	As a consequence of this observation, it follows that there is a pair of alternatives $x,y\in A$ such that $f(R^{x,y},x)+f(R^{x,y},y)\leq \frac{2}{m}$. Otherwise, it holds that $\sum_{x \in A}\sum_{y \in A \setminus \{x\}} f(R^{x,y}, x) + f(R^{x,y}, y)>\sum_{x \in A}\sum_{y \in A \setminus \{x\}} \frac{2}{m}=2(m-1)$, contradicting our previous equation. Hence, $\sum_{z\in A\setminus \{x,y\}} f(R^{x,y},z)\geq \frac{m-2}{m}$. Since all alternatives $z\in A\setminus \{x,y\}$ are Pareto-dominated by $z$, this entails that one of these alternative receives a probability of at least $\frac{m - 2}{m(m - 2)} = \frac{1}{m}$. We conclude therefore that the SDS $f$ fails $\beta$-ex post efficient for $\beta < \frac{1}{m}$.
\end{proof}

Next, we aim to show that no $0$-randomly dictatorial SDS that can be represented as a mixture of unilaterals satisfies $\beta$-\emph{ex post} efficiency for $\beta<\frac{1}{m}$. Ideally, we would like to use $f$ to construct a $0$-randomly dictatorial SDS $f^*$ that satisfies $\beta$-\emph{ex post} efficiency for the same $\beta$ as $f$, and that is additionally neutral and anonymous. Unfortunately, we cannot use \Cref{lem:alphaAnonNeutral} here as this lemma does not preserve that $f^*$ is $0$-randomly dictatorial. For demonstrating this point, let $A=\{x_1, \dots, x_m\}$ denote the alternatives and assume that $m=n\geq 3$. Furthermore, consider the unilateral $f^i$ which assigns probability $1$ to voter $i$'s favorite alternative in $A\setminus \{x_i\}$. Finally, consider the SDS $f^+$ which chooses a voter $i\in N$ uniformly at random and returns the outcome of $f^i$. \Cref{lem:zerodict} shows that this SDS is $0$-randomly dictatorial because for all $i\in N$, the probability of $x_i$ does not increase if voter $i$ reinforces it to his best alternative. Moreover, since $f^+$ is a mixture of unilaterals, it is strategyproof, and its definition implies that it not anonymous. However, applying the construction of \Cref{lem:alphaAnonNeutral} to $f^+$ results in the point voting SDS defined by the scoring vector $(\frac{m-1}{nm}, \frac{1}{nm}, 0, \dots, 0)$. It follows immediately from \Cref{lem:zerodict} that this SDS is not $0$-randomly dictatorial as pushing an alternative from second place to first place increases its probability always by $\frac{m-2}{nm}>0$. 

Therefore, we propose another construction in the next lemma that, given an arbitrary strategyproof and $0$-randomly mixture of unilaterals, constructs a strategyproof $0$-randomly dictatorial SDS that is $\beta$-\emph{ex post} efficient for the same $\beta$ as the original SDS and that has a lot of symmetries. Unfortunately, this construction does not result in a anonymous SDS. Nevertheless, the resulting SDS is significantly easier to work with and its properties are crucial for the proof of \Cref{lem:expostUnilateral}. Note that we require some additional terminology for the next lemma. In the sequel, we say that voter $i$ or his unilateral SDS $f_i$ is $0$-randomly dictatorial for alternatives $x,y$ if $f(R)=f(R^{i:yx})$ for \emph{all} preference profiles $R$ in which $x$ is voter $i$'s best alternative and $y$ is his second best alternative.

\begin{lemma}\label{lem:expostconst}
	Let $f$ be a strategyproof $0$-randomly dictatorial SDS that satisfies $\beta$-\emph{ex post} efficiency for some $\beta\in [0,1]$ and that can be represented as a mixture of unilaterals. Then, there is a strategyproof $0$-randomly dictatorial SDS $f^*$ for $\binom{m}{2}$ voters that can be represented as a mixture of unilaterals and that is $\beta$-\emph{ex post} efficient for the same $\beta$ as $f$. Moreover, $f^*$ satisfies the following conditions:
	\begin{enumerate}[label=(\roman*)]
		\item For every voter $i\in N$, there is a set $\{x_i, y_i\}$ such that voter $i$ is $0$-randomly dictatorial for $x_i$, $y_i$ and $\{x_i,y_i\}\neq \{x_j,y_j\}$ if $i\neq j$.
		\item There is a constant $\delta$ such that $f^*(R^{i:cb},c)-f^*(R,c)=\delta$ for all voters $i\in N$, alternatives $\{a,b\}=\{x_i, y_i\}$, $c\in A\setminus\{x_i,y_i\}$, and preference profiles $R$ such that voter $i$ reports $a$ as his best alternative, $b$ as his second best one, and $c$ as his third best one. 
		\item If every voter $i\in N$ reports $x_i$ and $y_i$ as their two best alternatives, then there exists a scoring vector $a=(a_1, \dots, a_m)$ such that $a_1 = a_2 \ge 0$, $a_3 \ge \dots \ge a_m \ge 0$, and $f^*(R, x) = \sum_{i \in N} a_{|\{y \in A: y \succeq_i x\}|}$.
	\end{enumerate}
\end{lemma}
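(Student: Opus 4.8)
The plan is to produce $f^*$ from $f$ by a \emph{targeted} symmetrization over relabelings of the alternatives: each relabeled copy of an original sub-unilateral is assigned to the unique one of the $\binom{m}{2}$ new voters corresponding to the pair it is $0$-randomly dictatorial for. This routing is what lets the symmetrization preserve $0$-random dictatoriality, unlike the cruder averaging of \Cref{lem:alphaAnonNeutral}. First I would analyze $f$. Writing it as a mixture of unilaterals and grouping the unilaterals by the voter they depend on gives a decomposition $f=\sum_{i\in N}f_i$ into sub-unilaterals; since $f$ is strategyproof it is localized and non-perverse, hence so is each $f_i$, and localizedness forces $f_i(R,z)$ to depend only on the upper contour set $U^R_{i,z}=\{y\in A:y\succ_i z\}$, say $f_i(R,z)=f_{i,z}(U^R_{i,z})$ with $f_{i,z}$ non-increasing under inclusion, and $\sum_{z\in A}f_{i,z}(U^R_{i,z})=\mu_i$ for constants $\mu_i\ge 0$ with $\sum_i\mu_i=1$. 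Because $f$ is localized, $f(R^{i:yx},y)=f(R,y)$ combined with $\sum_z f(R^{i:yx},z)=\sum_z f(R,z)$ also yields $f(R^{i:yx},x)=f(R,x)$; applying \Cref{lem:zerodict} to the $0$-randomly dictatorial SDS $f$ (so that all $\gamma_i$ vanish) therefore gives, for every $i$, a pair $\{p_i,q_i\}$ with $f_{i,p_i}(\emptyset)=f_{i,p_i}(\{q_i\})$ and $f_{i,q_i}(\emptyset)=f_{i,q_i}(\{p_i\})$, i.e., $f_i$ is $0$-randomly dictatorial for $\{p_i,q_i\}$.

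For the construction, given a permutation $\sigma\colon A\to A$ let $f_i^\sigma$ be the sub-unilateral obtained from $f_i$ by relabeling the alternatives through $\sigma$, so that $f_i^\sigma(R,z)=f_{i,\sigma(z)}(\sigma(U^R_{i,z}))$; it is again a strategyproof sub-unilateral. Index the $\binom{m}{2}$ voters of $f^*$ by the two-element subsets $P\subseteq A$, write $v_P$ for the voter of $P$, and put
\[
h_P\ :=\ \frac{1}{m!}\sum_{i\in N}\ \sum_{\sigma\,:\,\sigma(P)=\{p_i,q_i\}} f_i^\sigma,\qquad f^*\ :=\ \sum_{P} h_P,
\]
reading $h_P$ as a sub-unilateral of voter $v_P$. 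There are $2(m-2)!$ permutations with $\sigma(P)=\{p_i,q_i\}$, so $\sum_{z}h_P(R,z)=\frac{2(m-2)!}{m!}\sum_i\mu_i=\frac{1}{\binom{m}{2}}$; hence $f^*$ is an SDS, it is a mixture of the $\binom{m}{2}$ unilaterals $\binom{m}{2}h_P$, and as a nonnegative combination of strategyproof unilaterals it is strategyproof.

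Next I would check the remaining properties. For $\beta$-\emph{ex post} efficiency, reindexing the double sum (running over $i$ and then over all $m!$ permutations $\sigma$, with $P=\sigma^{-1}(\{p_i,q_i\})$ then determined) gives $f^*(R,x)=\frac1{m!}\sum_\sigma f(R^\sigma,\sigma(x))$, where $R^\sigma$ is the $n$-voter profile in which voter $i$ holds $\sigma$ applied to the preference of voter $v_{\sigma^{-1}(\{p_i,q_i\})}$ in $R$. If $x$ is Pareto-dominated by $w$ in $R$ then $\sigma(x)$ is Pareto-dominated by $\sigma(w)$ in every $R^\sigma$, so each term is at most $\beta$ and hence $f^*(R,x)\le\beta$. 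For (i), fix $P=\{x,y\}$ and any profile $R$ in which $v_P$ ranks $x$ first and $y$ second; every contributing $\sigma$ maps $\{x,y\}$ to $\{p_i,q_i\}$, so the $0$-randomly-dictatorial identities of $f_i$ give $f_i^\sigma(R)=f_i^\sigma(R^{v_P:yx})$ term by term, whence $h_P(R)=h_P(R^{v_P:yx})$. Thus $v_P$ is $0$-randomly dictatorial for $P$, the assigned pairs are pairwise distinct, and since this profile $R$ witnesses for each voter that reinforcing an alternative to the top can leave its probability unchanged, all $\gamma_i$ in \Cref{lem:zerodict} are forced to $0$ and $f^*$ is $0$-randomly dictatorial.

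Finally, (ii) and (iii), where the real work lies. The key computation is the value of $h_P(R,z)$ whenever $v_P$ ranks the two alternatives of $P$ in the top two positions: averaging $f_{i,\sigma(z)}(\sigma(U^R_{v_P,z}))$ over all admissible $\sigma$ and doing the counting, everything recording \emph{which} pair was chosen and which concrete alternatives lie above $z$ cancels, leaving a quantity $a_{r(z,\succeq_{v_P})}$ depending only on the rank of $z$; for $z\in P$ the $0$-randomly-dictatorial identities make the rank-$1$ and rank-$2$ values equal, and $a_3\ge a_4\ge\cdots\ge a_m\ge0$ follows from non-perversity of $h_P$ applied to reinforcements keeping $P$ on top --- this is (iii). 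For (ii), the same bookkeeping shows that for $\{a,b\}=P$, $c$ ranked third, and $R$ with $v_P$ ranking $a,b,c$ on top, the difference $h_P(R^{v_P:cb},c)-h_P(R,c)$ equals $\frac{(m-3)!}{m!}\sum_{i\in N}\sum_{d\notin\{p_i,q_i\}}\bigl(f_{i,d}(\{p_i\})+f_{i,d}(\{q_i\})-2f_{i,d}(\{p_i,q_i\})\bigr)$, an expression that manifestly does not depend on $P$, $a$, $b$, or $c$; this common value is the constant $\delta$. The main obstacle is exactly this last step: one must push the set-valued functions $f_{i,\cdot}(\cdot)$ through the permutation averaging and see that all pair- and alternative-dependence drops out. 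Getting the routing right (to $\sigma^{-1}(\{p_i,q_i\})$, not $\{p_i,q_i\}$) is what makes this cancellation --- and the preservation of $0$-random dictatoriality --- work.
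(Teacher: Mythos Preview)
Your construction is exactly the one in the paper: route each permuted copy $f_i^\sigma$ of an original unilateral to the new voter indexed by the pair $\sigma^{-1}(\{p_i,q_i\})$, average, and then verify (i)--(iii) and $\beta$-\emph{ex post} efficiency from the resulting symmetry. Your explicit upper-contour-set bookkeeping for (ii)--(iii) and your reindexing $f^*(R,x)=\tfrac{1}{m!}\sum_\sigma f(R^\sigma,\sigma(x))$ for efficiency are a bit more direct than the paper's auxiliary neutrality identity $f_{xy}(R,x)=f_{\tau(x)\tau(y)}(\tau(R),\tau(x))$ and its intermediate $nm!$-voter SDS $f^+$, but the approach is the same.
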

\begin{proof}
	Let $\beta \in [0,1]$ and let $f$ denote a strategyproof $0$-randomly dictatorial SDS that is $\beta$-\emph{ex post} efficient and that can be represented as a mixture of unilaterals. In the sequel, we use $f$ to construct the SDS $f^*$ that satisfies all requirements of the lemma. Note that this proof is quite involved and therefore, we use some auxiliary observations that are proven in the end.
	
	We start by representing $f$ as $f(R) = \sum_{i \in N} \lambda_i f_i(\succeq_i)$, where $f_i$ denotes the unilateral SDS of voter $i$ and $\lambda_i\geq 0$ is its weight. Note that we interpret unilaterals in this proof as SDSs that take a single preference relation as input. This is possible as unilaterals only rely on the preferences of a single voter. Observation $1$ states that for every voter $i\in N$ there are alternatives $x_i$, $y_i$ such that $f_i$ is $0$-randomly dictatorial for $x$ and $y$. Even though a voter can be $0$-randomly dictatorial for multiple pairs of alternatives, we associate from now on every voter $i$ with exactly one such pair $x_i,y_i$. This pair can be chosen arbitrarily as it will not affect the rest of the proof.
	
	Next, we define the unilaterals $f_i^\tau$ as $f^\tau_i(R,x)=f_i(\tau(R), \tau(x))$ for all voters $i\in N$ and permutations $\tau:A\rightarrow A$. Observation $2$ states that every SDS $f_i^\tau$ is strategyproof and $0$-randomly dictatorial for $\tau^{-1}(x_i)$, $\tau^{-1}(y_i)$, where $\tau^{-1}$ is the inverse permutation of $\tau$ and $x_i$ and $y_i$ are the alternatives associated with $f_i$. Just as the SDSs $f_i$, each $f_i^\tau$ can be $0$-randomly dictatorial for multiple pairs of alternatives, but we associate $f^\tau_i$ from now on only with the pair $\tau^{-1}(x_i)$, $\tau^{-1}(y_i)$. Then, we partition the SDSs $f^\tau_i$ with respect to the alternatives $\tau^{-1}(x_i)$, $\tau^{-1}(y_i)$. In more detail, let $F_{xy}=\{f^\tau_i\colon i\in N, \tau\in \Tau, \{\tau^{-1}(x_i), \tau^{-1}(y_i)\}=\{x,y\}\}$ denote the multi-set of SDSs $f^{\tau}_i$ that are associated with $x$ and $y$. Note that all unilaterals in $F_{xy}$ are $0$-randomly dictatorial for $x,y$. Furthermore, these multi-sets indeed partition the SDSs $f^\tau_i$ as each $f^\tau_i$ is only associated with a single pair of alternatives. Even more, there are for every $f_i$ exactly $2(m-2)!$ permutations $\tau$ such that $\{\tau^{-1}(x_i), \tau^{-1}(y_i)\}=\{x,y\}$. Hence, we derive that each set $F_{xy}$ contains $2n(m-2)!$ SDSs. 
	
	In the next step, we merge all unilaterals in a multi-set $F_{xy}$ into a single unilateral. Thus, we define the unilateral $f_{xy}(\succeq_j)$ as $f_{xy}(\succeq_j)=\sum_{f^\tau_i\in F_{xy}} \frac{\lambda_i}{2(m-2)!} f^\tau_i(\succeq_j)$, i.e., $f_{xy}$ chooses each SDS $f^\tau_i\in F_{xy}$ with a probability proportional to $\lambda_i$. Observe that $f_{xy}$ is strategyproof because it is a mixture of strategyproof SDSs and it is $0$-randomly dictatorial for $x,y$ because all unilaterals in $F_{xy}$ are $0$-randomly dictatorial for these alternatives. Based on the SDS $f_{xy}$, we can finally define the SDS $f^*$ for $n^*=\binom{m}{2}$ voters. To this end, let $N^*$ denote the electorate of $f^*$. We associate each voter $j\in N^*$ with a different pair of alternatives $x,y\in A$ and set $f^*_j=f_{xy}$. Then, the SDS $f^*$ chooses one of the voters $j\in N^*$ uniformly at random and returns $f^*_j(\succeq_j)=f_{xy}(\succeq_j)$, i.e., $f^*(R)=\frac{1}{n^*}\sum_{j=1}^{n^*}f_j^*(\succeq_j)$. Clearly, $f^*$ is strategyproof because it is a mixture of strategyproof SDSs. Moreover, it is $0$-randomly dictatorial because every voter $j\in N^*$ is $0$-randomly dictatorial for the pair of alternatives $x,y$ with which he is associated. Furthermore, Observation 3 shows that $f^*$ is $\beta$-\emph{ex post} efficient for the same $\beta$ as $f$.
	
	It remains to show that the SDS $f^*$ satisfies the properties \emph{(i)}, \emph{(ii)}, and \emph{(iii)}. First, note that it satisfies \emph{(i)} by construction as every voter is $0$-randomly dictatorial for a different pair of alternatives. For \emph{(ii)} and \emph{(iii)}, we show first the auxiliary claim that $f_{xy}(R,x)=f_{\tau(x)\tau(y)}(\tau(R),\tau(x))$ for all permutations $\tau:A\rightarrow A$, preference profiles $R$, and alternatives $x\in A$. Note that if this claim holds then the SDS $f^*$ satisfies neutrality, since then for all permutations $\tau \in \Tau$ and alternatives $x \in A: f^*(\tau(R), \tau(x)) = \frac{1}{n^*}\sum_{j=1}^{n^*}f_{\tau(x_j)\tau(y_j)}(\tau(\succeq_j), \tau(x)) = \frac{1}{n^*}\sum_{j=1}^{n^*}f_{x_jy_j}(\succeq_j, x) = f^*(R, x)$. Hence, consider an arbitrary SDS $f^{\tau'}_i\in F_{xy}$ and note that $f_i^{\tau'}(R,x)=f_i(\tau'(R), \tau'(x)) = f_i^{\tau'\circ \tau^{-1}}(\tau(R), \tau(x))$. Next, observe that $f^{\tau'\circ\tau^{-1}}\in F_{\tau(x),\tau(y)}$. This is true because $f^{\tau'}_i\in F_{xy}$ implies that $\{\tau'(x), \tau'(y)\}=\{x_i, y_i\}$. Therefore, $\{\tau'(\tau^{-1}(\tau(x))), \tau'(\tau^{-1}(\tau(y)))\}=\{x_i, y_i\}$ which shows that $f^{\tau'\circ\tau^{-1}}\in F_{\tau(x)\tau(y)}$. Finally, we derive the following equality for all profiles $R$ and alternatives $x\in A$.
	\begin{align*}
		f_{xy}(R,x)&=\sum_{f_i^\tau\in F_{xy}} \frac{\lambda_i}{2(m-1)!} f_i^\tau(R,x)\\
		&=\sum_{f_i^\tau\in F_{xy}} \frac{\lambda_i}{2(m-1)!} f_i^{\tau'\circ \tau^{-1}}(\tau(R),\tau(x))\\
		&=\sum_{f_i^{\hat \tau}\in F_{\tau(x)\tau(y)}} \frac{\lambda_i}{2(m-1)!} f_i^{\hat \tau}(\tau(R),\tau(x))\\
		&=f_{\tau(x)\tau(y)}(\tau(R),\tau(x)). 
	\end{align*}
	
	In the second step of this equation, we define $\hat \tau = \tau'\circ \tau^{-1}$. The third step uses the fact that $\tau'\circ \tau^{-1}\neq \tau''\circ \tau^{-1}$ if $\tau'\neq\tau''$, which implies that every SDS $f^\tau_i\in F_{xy}$ is mapped to a unique element in $F_{\tau(x)\tau(y)}$. This proves the auxiliary claim.
	
	Subsequently, we show \emph{(ii)} and consider therefore an arbitrary voter $i\in N^*$. Moreover, let $x_i$, $y_i$ denote the alternatives associated with $f^*_i$, i.e., $f^*_i=f_{x_iy_i}$. Finally, let $z_i\in A\setminus \{x\}$ denote a third alternative and consider a profile $R$ in which voter $i$ prefers $x_i$ the most, $y_i$ the second most, and some arbitrary alternative $z_i\in A\setminus \{x_i, y_i\}$ the third most. We define $\delta=f^*(R^{i:z_iy_i},z)-f^*(R,z_i)$. First, note that $R^{i:z_iy_i}$ and $R$ only differ in the preferences of voter $i$ and thus, $f^*(R^{i:z_iy_i},z_i)-f^*(R,z_i)=f^*_i(\succeq^{i:z_iy_i}_i,z_i)-f^*_i(\succeq_i,z_i)$. Next, consider a second voter $j\in N^*$ ($j=i$ is possible), let $x_j$ and $y_j$ denote the alternatives which are associated with $f^*_j$, and let $z_j\in A\setminus \{x_j,y_j\}$ denote another alternative. Finally, consider a profile $R'$ such that voter $j$ ranks $x_j$ first, $y_j$ second, and $z_j$ third in $R'$, and define $R^+=(R')^{j:z_jy_j}$. We show in the sequel that $f^*(R^+,z_j)-f^*(R',z_j)=\delta$, which proves claim \emph{(ii)}. Thus, note first that $f^*(R^+,z_j)-f^*(R',z_j)=f^*_j(\succeq^+_j,z_j)-f^*_j(\succeq_j',z_j)$ because $f^*$ is a mixture of unilaterals and only voter $j$ changes his preference relation. Next, let $\tau$ denote a permutation such that $\tau(\succeq_j')=\tau(\succeq_i)$, which means in particular that $\tau(x_j)=x_i$, $\tau(y_j)=y_i$, and $\tau(z_j)=z_i$. Now, our auxiliary claim proves \emph{(ii)} since
	\begin{align*}\allowdisplaybreaks
		&f^*_j(\succeq^+_j,z_j)-f^*_j(\succeq'_j,z_j)\\
		=&f_{x_jy_j}((\succeq')^{j:z_jy_j}_j,z_j)-f_{x_jy_j}(\succeq'_j,z_j)\\
		=&f_{\tau(x_j)\tau(y_j)}(\tau((\succeq')^{j:z_jy_j}_j), \tau(z_j)) -  f_{\tau(x_j)\tau(y_j)}(\tau(\succeq_j'), \tau(z_j))\\
		=&f_{x_iy_i}(\succeq^{i:z_iy_i}_i, z_i) -  f_{x_iy_i}(\succeq_i, z_i)\\
		=&\delta.
	\end{align*}
	
	Finally, we discuss why property \emph{(iii)} is true. Thus, consider two voters $i,j\in N^*$ and let $x_i, y_i$ and $x_j$, $y_j$ denote the alternatives associated with $f^*_i$ and $f^*_j$, respectively. We explicitly allow in the subsequent analysis that $i=j$. Furthermore, consider two preference relations $\succeq_i$ and $\succeq_j$ such that $x_i$ and $y_i$ are top-ranked in $\succeq_i$ and $x_j$ and $y_j$ are top-ranked in $\succeq_j$. Finally, let $\tau$ denote a permutation such that $\succeq_i=\tau(\succeq_j)$ and let $z^k_i$ and $z^k_j$ denote the $k$-th ranked alternative of voter $i$ and $j$, respectively. Our auxiliary claim shows immediately that $f^*_i(\succeq_i, z^k_i)=f^*_j(\succeq_j, z^k_j)$. This means that, for every $k\in \{1,\dots,m\}$, the $k$-th ranked alternative receives the same probability from every voter if they report the alternatives $x_i$, $y_i$ as their favorite choice. Hence, there is a scoring vector $a=(a_1,\dots, a_m)$ such that $f^*(R,x)=\sum_{i \in N} a_{|\{y \in A: y \succeq_i x\}|}$ for such profiles. Moreover, it follows from strategyproofness that $a_3\geq a_4\geq \dots a_m$ and from the definition of an SDS that $a_i\geq 0$ for all $i\in \{1,\dots, m\}$. Finally, $a_1=a_2$ since for all $i\in N^*$, the unilateral $f^*_i$ is $0$-randomly dictatorial for $x_i$ and $y_i$. Hence, there is a scoring vector that meets all requirements of \emph{(iii)}.
	\medskip

	\textbf{Observation 1: For every voter $i$, there exists a pair of alternatives $x_i$, $y_i$ such that $f(R)=f(R^{i:y_ix_i})$ for all preference profiles $R$ in which voter $i$ reports $x$ as best alternative and $y$ as second best one.}

	Since $f_i$ is a strategyproof $0$-randomly dictatorial SDS, it follows from \Cref{lem:zerodict} that for every voter $i\in N$, there exists a pair of alternatives $x_i, y_i$ and a preference profile $R$ such that $f_i(R,y)=f_i(R^{i:y_ix_i},y)$, voter $i$ top-ranks $x_i$ $R$, and second-ranks $y_i$. First, note that localizedness immediately generalizes this observation to $f_i(R)=f_i(R^{i:y_ix_i})$. We show in the sequel that $f(R)=f(R^*)$ for all preference profiles $R$, $R^*$ such that voter $i$ reports $x_i$ and $y_i$ as his best and second best alternative in $R$ and $R^*=R^{i:yx}$. Since $f$ is a mixture of strategyproof unilaterals, it follows that $f(R)=f(R^*)$ if $f_i(\succeq_i)=f(\succeq^*_i)$ because $\succeq_j=\succeq_j^*$ for all $j\in N\setminus \{i\}$. Moreover, it follows from strategyproofness, which entails localizedness, that $f_i(\succeq'_i,z)=f_i(\succeq_i,z)=f_i(\succeq^*_i,z)=f_i(\succeq^+_i,z)$ for $z\in \{x_i,y_i\}$ and all preferences profiles $R'$ and $R^+$ such $R'=(R^+)^{i:y_ix_i}$ and $\succeq_i'$ only differs from $\succeq_i$ in the order of the alternatives $A\setminus \{x_i,y_i\}$. Because $R'$ and $R^+$ differ by definition only in voter $i$'s preference over $x_i$ and $y_i$, another application of localizedness implies that $f_i(R')=f_i(R^+)$. Hence, it holds indeed that $f(R)=f(R^{i:y_ix_i})$ for all preference profiles in which voter $i$ reports $x_i$ and $y_i$ as his two best alternatives.\medskip
	
	\textbf{Observation 2: The SDS $f^\tau_i(R,x)=f_i(\tau(R), \tau(x))$ is strategyproof and $0$-randomly dictatorial for $\tau^{-1}(x_i)$, $\tau^{-1}(y_i)$.}
	
	First, note that $f^\tau_i$ is strategyproof as every manipulation of this SDS could be mapped to a manipulation of $f_i$. In more detail, if voter $i$ can manipulate $f^\tau_i$ by switching from $R$ to $R'$, he can also manipulate $f_i$ by switching from $\tau(R)$ to $\tau(R')$. This is true because a manipulation requires an alternative $x$ such that $\sum_{y\succ_i x} f^\tau_i(R',y)>\sum_{y\succ_i x} f^\tau_i(R,y)$, which entails by definition of $f^\tau_i$ that $\sum_{y\succ_i x} f_i(\tau(R'),\tau(y))>\sum_{y\succ_i x} f_i(\tau(R),\tau(y))$. Finally, since $y\succ_i x$ in $R$ if and only if $\tau(y)\succ_i \tau(x)$ in $\tau(R)$, we derive that voter $i$ could manipulate $f_i$ by switching from $\tau(R)$ to $\tau(R')$. 
	
	Furthermore, $f^\tau_i$ is a $0$-randomly dictatorial SDS because $f_i$ is one: Observation 1 shows that for every voter $i$, there exists a pair of alternatives $x_i$, $y_i$ such that $f(R)=f(R^{i:y_ix_i})$ for all preference profiles $R$ in which voter $i$ prefers $x_i$ the most and $y_i$ the second most. It follows from this observation that $f^\tau_i(\tau^{-1}(R),\tau^{-1}(x))=f_i(R,x)=f_i(R^{i:y_ix_i},x)=f^\tau_i(\tau^{-1}(R^{i:y_ix_i}), \tau^{-1}(x))$ for all $x\in A$, where $\tau^{-1}$ is the inverse permutation of $\tau$, i.e., $\tau^{-1}(\tau(x))=x$ for all $x\in A$. Therefore, $f^\tau_i(\tau^{-1}(R), \tau^{-1}(x_i)) = f^\tau_i(\tau^{-1}(R^{i:y_ix_i}), \tau^{-1}(x_i))$ and $f^\tau_i(\tau^{-1}(R), \tau^{-1}(y_i)) = f^\tau_i(\tau^{-1}(R^{i:y_ix_i}), \tau^{-1}(y_i))$. Moreover, the preference profiles $\tau^{-1}(R)$ and $\tau^{-1}(R^{i:y_ix_i})$ only differ in the order of the two best alternatives $\tau^{-1}(x)$ and $\tau^{-1}(y)$ of voter $i$ and the proof of Observation 1 entails thus that $f^\tau_i$ is $0$-randomly dictatorial for these two alternatives.\medskip
	
	\textbf{Observation 3: The SDS $f^*=\frac{1}{n^*} \sum_{i=1}^{n^*} f^*_i$ is $\beta$-\emph{ex post} efficient for the same $\beta$ as $f$.}
	
	For proving this observation, we construct first another SDS $f^+$ and show that this SDS is $\beta$-\emph{ex post} efficient for the same $\beta$ as $f$. As second step, we show that $f^*$ can also be derived from $f^+$ by merging voters, and thus $f^*$ inherits the $\beta$-\emph{ex post} efficiency of $f^+$. Before defining $f^+$, we introduce the SDS $f^\tau$: just as the SDSs $f^\tau_i$, it is defined as $f^\tau(R,x)=f(\tau(R), \tau(x))$. In particular, $f^\tau$ is $\beta$-\emph{ex post} efficient for the same $\beta$ as $f$. This follows by considering an arbitrary profile $R$ in which an alternative $x$ is Pareto-dominated. It is easy to see that $\tau(x)$ is then Pareto-dominated in $\tau(R)$, and we derive therefore that $f^\tau(R,x)=f(\tau(R),\tau(x))\leq \beta$ because $f$ is $\beta$-\emph{ex post} efficient. Next, we define the SDS $f^+$ for $nm!$ voters as follows: we partition the voters $\{1,\dots, nm!\}$ into $m!$ sets $N_1,\dots, N_{m!}$ with $|N_i|=n$ and associate with every set a different permutation $\tau_i:A\rightarrow A$. Then, $f^+(R)=\frac{1}{m!}\sum_{i=1}^{m!} f^{\tau_i}(\succeq_{N_i})$, where $\succeq_{N_i}$ denotes the restriction of $R$ to the voters in $N_i$. Observe that $f^+$ is $\beta$-\emph{ex post} efficient for the same $\beta$ as $f$ because an alternative $x$ that is Pareto-dominated in $R$ is also Pareto-dominated in all $\succeq_{N_i}$ and all $f^{\tau_i}$ are $\beta$-\emph{ex post} efficient. Hence, it follows that $f^+(R,x)=\frac{1}{m!}\sum_{i=1}^{m!} f^{\tau_i}(\succeq_{N_i},x)\leq \frac{1}{m!}\sum_{i+1}^{m!}\beta=\beta$.
	
		Next, we show that $f^+$ and $f^*$ satisfy $\beta$-\emph{ex post} efficiency for the same $\beta$. Therefore, we change the representation of $f^+$. The central observation here is that $f^\tau=\sum_{i\in N} \lambda_i f^\tau_i$. Hence, we can also associate every voter $j\in \{1,\dots, nm!\}$ with an index $i\in N$ and a permutation $\tau$ such that each index-permutation pair is assigned exactly once. Thus, define $f_j^+=f^\tau_i$ and $\lambda_j^+=\frac{\lambda_i}{m!}$ (i.e., the weight of $f^\tau_i$ is the proportional to the weight of $f_i$ in the original SDS $f$). Then, we can write $f^+$ as $f^+(R)=\sum_{j=1}^{nm!} \lambda_j^+ f_j^+(\succeq_j)$. Next, note that every $f^\tau_i$ appears once in $f^+(R,x)$ and once in the union of all $F_{xy}$. Therefore, we derive that $f^+(R)=\frac{1}{n^*}\sum_{\{x,y\}\subset {A\choose 2}} \sum_{f^\tau_i\in F_{xy}} \frac{\lambda_i}{2(m-2)!} f^\tau_i(\succeq_i)$, where $n^*={m\choose2}$. Next, we restrict our attention to profiles $R$ such that for all $\{x,y\}\subset {A\choose 2}$, all voters $j$ with $f_j\in F_{xy}$ submit the same preference relation. In this case, we may replace the preferences of all voters $j$ with $f_j\in F_{xy}$ with a single preference relation. Then, there are exactly $m\choose 2$ voters left, each of which is associated with a different pair of alternatives. In particular, we can use the definition of $f_{xy}(\succeq_i)=\sum_{f^\tau_i\in F_{xy}}\frac{\lambda_i}{2(m-2)!} f^\tau_i(\succeq_i)$ now as we apply all unilateral SDSs in $F_{xy}$ to the same preference relation $\succeq_i$. Hence, $f^+$ returns the same outcomes as $f^*$ if for each $\{x,y\}\subset {A\choose 2}$, all voters $j$ with $f_j\in F_{xy}$ report the same preferences. Since $f^+$ is $\beta$-\emph{ex post} efficient, it follows therefore also that $f^*$ is $\beta$-\emph{ex post} efficient.
\end{proof}

Finally, we use \Cref{lem:expostconst} to prove that no $0$-randomly dictatorial SDS that can be represented as a mixture of unilaterals is $\beta$-\emph{ex post} efficient for $\beta<\frac{1}{m}$. 

\begin{restatable}{lemma}{expostUnilateral}\label{lem:expostUnilateral}
	No $0$-randomly dictatorial SDS that can be represented as a convex combination of unilaterals satisfies $\beta$-\emph{ex post} efficiency for $\beta < \frac{1}{m}$ if $m \ge 3$.
\end{restatable}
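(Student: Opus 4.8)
The plan is to invoke \Cref{lem:expostconst} to pass from an arbitrary $0$-randomly dictatorial mixture of unilaterals to the structured SDS $f^*$ it produces: since $f^*$ is again a $0$-randomly dictatorial mixture of unilaterals and is $\beta$-\emph{ex post} efficient for the same $\beta$, it suffices to show that $f^*$ is not $\beta$-\emph{ex post} efficient for any $\beta<\frac1m$. From now on I work with the $n^*=\binom m2$ voters, the associated pairs $\{x_i,y_i\}$ from property (i), the constant $\delta$ from property (ii), and the scoring vector $a=(a_1,\dots,a_m)$ with $a_1=a_2$ and $a_3\ge\dots\ge a_m\ge 0$ from property (iii).

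First I would record the facts that actually drive the argument. Writing $f^*=\frac1{n^*}\sum_j f^*_j$ with $f^*_j$ the unilateral of voter $j$, property (iii) localizes to the individual voters: whenever the top two alternatives of $\succeq_j$ are exactly $\{x_j,y_j\}$, one has $f^*_j(\succeq_j,z)=n^*a_{r(z,\succeq_j)}$, where $r$ is the rank. Property (ii), read at the level of $f^*_j$, says that pushing the third-ranked alternative up past the weaker pair element raises $f^*_j(\,\cdot\,,c)$ by exactly $n^*\delta$; combined with non-perversity and localizedness (which $f^*$ inherits from strategyproofness via \Cref{thm:Gibbard2}) and with $\sum_k a_k=\frac1{n^*}$, a conservation-of-probability computation at such a single reinforcement yields $0\le\delta\le a_1$. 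These, together with monotonicity/localizedness of each $f^*_j$, are the only handles available on $f^*$ away from pair-on-top profiles.

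The main step is to construct a profile $R$ with a Pareto-dominated alternative $z$ for which $f^*(R,z)\ge\frac1m$. I would fix a dominator $v$ and keep $v$ above $z$ in every report. Voters whose associated pair is $\{v,u\}$ can keep their pair on top with $z$ just beneath it, so by property (ii) each contributes $a_3+\delta$ to $f^*(R,z)$, while the voter with pair $\{v,z\}$ contributes $a_2=a_1$; voters whose pair avoids both $v$ and $z$ can still be made ``nice'' with $v,z$ in positions $3,4$, contributing $a_4$; for the remaining voters---those whose pair contains $z$ but not $v$---no pair-on-top report is compatible with $v\succ z$, so I would lower-bound their contribution using only non-perversity and localizedness, relating their reports to nice reports by adjacent swaps. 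Summing these contributions and simplifying with the identity $2a_1+a_3+\dots+a_m=\frac1{n^*}$ and $n^*=\binom m2$ should give $f^*(R,z)\ge\frac1m$, after a short case distinction on the scoring vector---in particular on whether $a_3+\dots+a_m=0$, which is exactly the regime in which $f^*$ degenerates towards the uniform lottery and the bound $\frac1m$ is attained with equality.

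The hard part is precisely that last group of voters, those whose pair meets the dominated alternative $z$: Pareto-domination forces a single alternative $v$ to sit above $z$ everywhere, which is incompatible with keeping those pairs on top, so neither property (ii) nor (iii) applies to them, and the naive bound of $0$ on their contribution is not enough, since the target $\frac1m$ is tight (the uniform lottery attains it). Making the argument go through will require extracting a sharper lower bound on their contribution from monotonicity and the relabeling symmetry of $f^*$ established in the proof of \Cref{lem:expostconst}, rather than treating them as a lossy remainder; alternatively, replacing the single-profile computation by an averaging identity over a family of profiles---in the spirit of the proof of \Cref{lem:dupleExPost}---may be the cleanest way to recover the exact constant $\frac1m$.
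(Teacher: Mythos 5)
Your proposal correctly reduces to the structured SDS $f^*$ from \Cref{lem:expostconst}, but the core of the argument is missing, and you say so yourself: the ``hard part'' concerning the $m-2$ voters whose associated pair contains the dominated alternative $z$ but not the dominator $v$ is left unresolved, with only speculative remedies (a sharper monotonicity bound, or some averaging identity). This is not a minor loose end. Since Pareto-domination forces $v$ above $z$ in every report, those voters can never have their pair on top, so neither property (ii) nor (iii) says anything about their unilaterals at the profile you construct, and because the target $\frac{1}{m}$ is tight (the uniform lottery attains it exactly), bounding their contribution by $0$ provably cannot close the gap. In short, the strategy of directly exhibiting a single profile with a Pareto-dominated alternative receiving probability at least $\frac{1}{m}$ runs into exactly the obstruction you identify, and no way around it is given. (As a smaller point, the claimed per-voter contribution $a_3+\delta$ for voters with pair $\{v,u\}$, pair on top and $z$ third, is not what property (iii) gives; it gives $a_3$.)

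The paper's proof avoids this obstruction by reversing the direction of the estimate: it never lower-bounds the probability of a Pareto-dominated alternative at one profile, but instead upper-bounds the probability of \emph{every} alternative at a pair-on-top profile. Concretely, it builds a profile $R$ in which all voters rank their associated pair first and alternative $a$ has rank vector $(2,\dots,2,3,\dots,3,4,\dots,4)$, and uses property (ii) to show $f^*(R,a)=f^*(R',a)\le\beta$, where $R'$ is obtained by letting $m-2$ voters push $a$ from second to third place and $m-2$ other voters push it from third to second place (the $\pm\delta$ changes cancel by localizedness), and in $R'$ alternative $a$ is Pareto-dominated by $b$. Then, using property (iii) and $a_1=a_2$, it constructs a second pair-on-top profile $R^*$ in which no alternative is ranked third by more than $m-2$ voters, so every alternative's rank vector is weakly dominated by that of $a$ in $R$ and hence $f^*(R^*,x)\le f^*(R,a)\le\beta$ for all $x\in A$; summing over the $m$ alternatives forces $\beta\ge\frac{1}{m}$. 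To repair your attempt you would need precisely this kind of reversal (or a genuinely new bound for the problematic voters), so as it stands the proposal does not constitute a proof.
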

\begin{proof}
	Let the SDS $f$ denote a mixture of unilaterals. First, we apply \Cref{lem:expostconst} to construct the SDS $f^*$ as specified by this lemma. In the sequel, we show that $f^*$ is $\beta$-\emph{ex post} efficient for $\beta \ge \frac{1}{m}$ and therefore $f$ is also $\beta$-\emph{ex post} efficient for $\beta \geq \frac{1}{m}$. In our proof, we construct a profile $R^*$ in which every alternative must receive a probability of at most $\beta$ which leads to a contradiction if $\beta < \frac{1}{m}$. Let $N$ with $|N| = \binom{m}{2}$ be the set of voters of $f^*$. Furthermore, \Cref{lem:expostconst} \emph{(i)} states that every voter $j \in N$ is associated with a different pair of alternatives $\{x_j, y_j\}$ for which he is $0$-randomly dictatorial.
	
	First, we explain the construction of an auxiliary profile $R$. For this profile, we choose an arbitrary pair of alternatives $a$, $b$ and assume without loss of generality that voter $1$ is $0$-randomly dictatorial for $a,b$, i.e, $\{a,b\}=\{x_1, y_1\}$. Voter $1$ submits the preference relation $\succeq_1=b \succ_1 a \succ_1 \dots$ in $R$. Furthermore, there are $m - 2$ other voters $j\in N$ with $a \in \{x_j, y_j\}$ and $b \notin \{x_j, y_j\}$. We assume without loss of generality that these are the voters in $\{2, \dots, m-1\}$ and that $a = x_j$. The preferences of the voters $j\in \{2,\dots, m-2\}$ in $R$ is $\succeq_j= y_j \succ_j a \succ_j b \succ_j \dots$. Also, there are $m - 2$ voters $j$ with $a \notin \{x_j, y_j\}$ and $b \in \{x_j, y_j\}$. We assume that these voters are the ones in $\{m,\dots,2m-3 \}$ and that $b=y_j$. The preferences of these voters is $\succeq_j= b \succ_j x_j \succ_j a \succ_j \dots$. Finally, the remaining voters $j\in \{2m-2, \dots, \binom{m}{2}\}$ have $a, b \notin \{x_j, y_j\}$. These voters report $\succeq_j=x_j \succ_j y_j \succ_j b \succ_j a$ in $R$. Note that if $m = 3$, there are no voters of the fourth type. Furthermore, every voter $j\in N$ ranks the alternatives $x_j, y_j$ for which he is $0$-randomly dictatorial at the top. The full profile for $m = 4$ is shown in \Cref{fig:expostUnilateralProfile1}.
	
\begin{figure}[tbp]
	\begin{tabular}{c c c c c c}
		1 & 1 & 1 & 1 & 1 & 1 \\
		\midrule
		$b$ & $c$ & $d$ & $b$ & $b$ & $c$ \\
		$a$ & $a$ & $a$ & $c$ & $d$ & $d$ \\
		$c$ & $b$ & $b$ & $a$ & $a$ & $b$ \\
		$d$ & $d$ & $c$ & $d$ & $c$ & $a$ 
	\end{tabular}
	\centering
	\caption{The preference profile $R$ that in the proof of \Cref{lem:expostUnilateral} for $m = 4$. There are four groups of voters. The first group contains only the first voter who is $0$-randomly dictatorial for $a$ and $b$. The next two groups have both $m-2$ voters and are $0$-randomly dictatorial for one of $a$ and $b$. The last group contains the remaining $\binom{m - 2}{2}$ voters that are not $0$-randomly dictatorial $a$ or $b$. All voters have the pair for which they are $0$-randomly dictatorial ranked at the top.}
	\label{fig:expostUnilateralProfile1}
	\Description{Preference Profile with 6 voters and 4 alternatives. Every voter has the pair of alternatives for which his unilateral is 0-dictatorial ranked at the top.}
\end{figure}
	
	We show next that $f^*(R,a) \le \beta$ by constructing a new preference profile $R'$ such that $f^*(R, a) = f^*(R', a) \le \beta$. For the construction of $R'$, let all voters in the second group $j \in \{2,\dots, m-1\}$ swap $a$ and $b$, and all voters in the third group $j \in \{m,\dots, 2m-3\}$ swap $a$ and $x_j$. The resulting preference profile is shown in \Cref{fig:expostUnilateralProfile2} for the case that $m=4$. It is easy to see that $b$ Pareto-dominates $a$ in $R'$ and, as $f^*$ is $\beta$-\emph{ex post} efficient, $f^*(R',a) \le \beta$. Alternative $a$ was moved from third to second and from second to third place by $m-2$ voters. It follows therefore from \Cref{lem:expostconst} \emph{(ii)} and localizedness that the probability that alternative $a$ gains when $m-2$ voters swap it from third to second place is the same as the probability that $a$ looses when $m-2$ voters swap it from second to third place. Thus, we derive that $f^*(R, a) = f^*(R',a) \le \beta$.
	
	\begin{figure}[tbp]
		\begin{tabular}{c c c c c c}
			1 & 1 & 1 & 1 & 1 & 1 \\
			\midrule
			$b$ & $c$ & $d$ & $b$ & $b$ & $c$ \\
			$a$ & $b$ & $b$ & $a$ & $a$ & $d$ \\
			$c$ & $a$ & $a$ & $c$ & $d$ & $b$ \\
			$d$ & $d$ & $c$ & $d$ & $c$ & $a$
		\end{tabular}
		\centering
		\caption{The preference profile $R'$ for $m=4$ alternatives that results from $R$ by swapping the second and third alternative of voters $j \in \{2,\dots, 2m-3\}$. Alternative $a$ is Pareto-dominated by alternative $b$.}
		\label{fig:expostUnilateralProfile2}
		\Description{Preference Profile with 6 voters and 4 alternatives. This profile is similar to the last profile except voters 2, 3, 4 and 5 swapped their second and third most preferred alternative.}
	\end{figure}
	
	Finally, note that in $R$, all voters $j\in N$ report the pair $x_j, y_j$ for which they are $0$-randomly dictatorial as their two best alternatives. Hence, \Cref{lem:expostconst} \emph{(iii)} entails the existence of a scoring vector $(a_1, \dots, a_m)$ such that $a_1 = a_2 \ge 0$, $a_3 \ge \dots \ge a_m \ge 0$, and $f^*(R, x) = \sum_{j \in N}a_{|\{y \in A: y \succeq_j x \}|}$ for all $x\in A$. In particular, observe that the probability of an alternative only depends on its rank vector $r^*(x,R)$. Recall that the rank vector $r^*(x,R)$ of an alternative $x$ in a preference profile $R$ is the vector that contains the rank of $x$ with respect to every voter in increasing order. The rank vector of alternative $a$ in $R$ is
	\begin{equation*}
	r^*(a, R) = (\overbrace{2, \dots, 2}^{m - 1}, \overbrace{3, \dots, 3}^{m-2}, \overbrace{4, \dots, 4}^{\binom{m - 2}{2}}).
	\end{equation*}
	
	Furthermore, observe that $f^*(\bar{R},x)\leq f^*(R,a)$ in every profile $\bar{R}$ in which (i) each voter $j\in N$ reports the alternatives $x_j$, $y_j$ as his two best alternatives and (ii) $r^*(x,\bar{R})_k\geq r^*(a,R)_k$ for all $k\in \{m, \dots, \binom{m}{2}\}$. Condition (i) implies that $f^*$ can be computed based on the scoring vector $(a_1, \dots, a_m)$. Furthermore, it implies that every alternative $x\in A$ is among the two best alternatives of exactly $m-1$ voters, and since $a_1=a_2$, it follows that we can ignore these entries when comparing the probability of $a$ in $R$ with the probability of $x$ in $\bar{R}$. Finally, the claim follows as $a_3\geq \dots\geq a_m$ and $r^*(x,\bar{R})_k\geq r^*(a,R)_k$ for all $k\in \{m, \dots, \binom{m}{2}\}$ entails thus that $f^*(R,a)\geq f^*(\bar{R},x)$. We use this fact to construct a new profile $R^*$ where $f^*(R^*, x) \leq f^*(R,a)\le \beta$ for every $x\in A$. Let every voter $j\in N$ report the alternatives $x_j$, $y_j$ for which he is $0$-randomly dictatorial as his two best alternatives. Furthermore, distribute all other alternatives such that no alternative is ranked third by more than $m-2$ voters. This is possible as there are $m\geq 3$ alternatives and $\frac{m(m-1)}{2}$ voters. It follows from the construction that $r^*(x,R^*)_{k}\geq r^*(a,R)_k$ for every $k\in \{m, \dots, \binom{m}{2}\}$ and every $x\in A$. Hence, we derive that $f^*(R^*,x)\leq f^*(R,a)\leq \beta$ for every $x\in A$. If $\beta<\frac{1}{m}$, this entails that $\sum_{x\in A} f^*(R^*,x) <1$, a contradiction. Thus, $f^*$ cannot satisfy $\beta$-\emph{ex post} efficiency for $\beta<\frac{1}{m}$, and thus, $f$ violates this axiom, too. This show that there exists no $0$-randomly dictatorial SDS that can be represented as a mixture of unilaterals and that satisfies $\beta$-\emph{ex post} efficiency for $\beta < \frac{1}{m}$ when $m \ge 3$.
\end{proof}

Finally, we use \Cref{lem:dupleExPost} and \Cref{lem:expostUnilateral} to prove that there are no $0$-randomly dictatorial SDSs that satisfy $\beta$-\emph{ex post} efficiency for $\beta<\frac{1}{m}$. 

\expost*
\begin{proof}
	Let $f$ denote a strategyproof SDS for $n$ voters and $m\geq 3$ alternatives that is $0$-randomly dictatorial. Our argument focuses mainly on the profiles $R^{x,y}$, in which all voters report $x$ as their best choice and $y$ as their second best choice. The reason for this is that if $f(R,y)> \beta$ for some profile $R$ in which $y$ is Pareto-dominated by $x$, then $f(R^{x,y},y)>\beta$. This is a direct consequence of strategyproofness as we can transform $R$ into $R^{x,y}$ by reinforcing $x$ and $y$. Hence, non-perversity implies that $f(R^{x,y},y)\geq f(R,y)>\beta$. Moreover, localizedness entails that the order of the alternatives $z\in A\setminus \{x,y\}$ in $R^{x,y}$ is not important as it does not affect the probabilities of $x$ and $y$.
	
	Next, we use \Cref{thm:Gibbard1} to represent $f$ as mixture of duples and unilaterals, i.e, $f=\lambda f_\mathit{uni} + (1-\lambda) f_\mathit{duple}$, where $\lambda\in [0,1]$, $f_\mathit{uni}$ is a mixture of unilaterals, and $f_\mathit{duple}$ is a mixture of duples. While \Cref{lem:dupleExPost} and \Cref{lem:expostUnilateral} imply that $f_\mathit{uni}$ and $f_\mathit{duple}$ are not $\beta$-\emph{ex post} efficient for $\beta<\frac{1}{m}$, this does not imply that $f$ violates $\beta$-efficiency for $\beta<\frac{1}{m}$, too. The reason for this is that $f_\mathit{uni}$ and $f_\mathit{duple}$ may violate $\beta$-\emph{ex post} efficiency for different profiles or alternatives. We solve this problem by constructing a strategyproof SDS $f^*=\lambda f_\mathit{uni}^*+(1-\lambda) f_\mathit{duple}^*$ that is $0$-randomly dictatorial and $\beta$-\emph{ex post} efficient for the same $\beta$ as $f$, and for which $f_\mathit{uni}^*$ and $f_\mathit{duple}^*$ denote mixtures of unilaterals and duples such that $f^{*}_\mathit{uni}(R^{x,y},y)=f^*_\mathit{uni}(R^{\tau(x), \tau(y)}, \tau(y))$ and $f^{*}_\mathit{duple}(R^{x,y},y)=f^*_\mathit{duple}(R^{\tau(x), \tau(y)}, \tau(y))$ for all permutations $\tau:A\rightarrow A$. 
	
	For this construction, we define $f^{\tau}$ as $f^\tau(R,x)=f(\tau(R),\tau(x))$ for every permutation $\tau:A\rightarrow A$. We construct the SDS $f^*$ for $m!n$ voters as follows: we partition the electorate in $m!$ sets $N_k$ with $|N_k|=n$ and associate each of these sets with a different permutation $\tau_k:A\rightarrow A$. Then, we choose one of these sets $N_k$ uniformly at random and consider from now on only the preference profile $R_{N_k}$ defined by the voters in $N_k$. Finally, return $f^{\tau_k}(R_{N_k})$, where $\tau_k$ denotes the permutation associated with $N_k$. More formally, $f^*(R)=\frac{1}{m!}\sum_{k=1}^{m!}f^{\tau_k}(R_{N_k})$.
	
	First, note that $f^*$ is $0$-randomly dictatorial because of \Cref{lem:zerodict}. Since $f$ is a $0$-randomly dictatorial, there is for every voter $i$ a profile $R$ and alternatives $x,y$ such that voter $i$ prefers $x$ the most in $R$ and $y$ the second most, and $f(R,y)=f(R^{i:yx},y)$. Consequently, there are such profiles and alternatives for every voter in each SDS $f^\tau$. Finally, we derive that such profiles and alternatives exist also for $f^*$. For a voter $i\in N_k$, the corresponding alternatives $x,y$ and the preferences of the voters in $N_k$ are the same as for $f^{\tau_k}$. The preferences of the remaining voters do not matter. If $f^*$ does not choose $N_k$ in the first step, the preferences of voter $i$ do not matter, and if $f^*$ chooses $N_k$, it only computes $f^{\tau_k}(R_{N_k})$. Hence, if voter $i$ now swaps $x$ and $y$, the outcome of $f^*$ does not change as the outcome of $f^{\tau_k}$ does not change. Consequently, \Cref{lem:zerodict} implies that $f^*$ is $0$-randomly dictatorial. 
	
	Next, observe that $f^*(R)=\frac{1}{m!}\sum_{k=1}^{m!}f^{\tau_k}(R_{N_k})$ is strategyproof as it is a mixture of strategyproof SDSs. In particular, we can interpret each term $f^{\tau_k}(R_{N_k})$ as SDS for $m!n$ voters that ignores the preferences of the voters in $N\setminus N_k$. It follows immediately from this interpretation that $f^*$ is strategyproof because all $f^{\tau_k}$ are strategyproof. Hence, we can use \Cref{thm:Gibbard1} to represent $f^*$ as $f^*=\lambda f^*_\mathit{uni} + (1-\lambda) f^*_\mathit{duple}$, where $f^*_\mathit{uni}$ is a mixture of unilaterals and $f^*_\mathit{duple}$ is a mixture of duples. In more detail, the following equation shows that $f^*_\mathit{uni}(R)=\frac{1}{m!}\sum_{k=1}^{m!} f^{\tau_k}_\mathit{uni}(R_{N_k})$ and $f^*_\mathit{duple}(R)=\frac{1}{m!}\sum_{k=1}^{m!} f^{\tau_k}_\mathit{duple}(R_{N_k})$, where $f^{\tau}_\mathit{uni}$ and $f^{\tau}_\mathit{duple}$ are defined analogously to $f^{\tau}$. 
	\begin{align*}
	f^*(R)&=\frac{1}{m!}\sum_{k=1}^{m!}f^{\tau_k}(R_{N_k})\\
	&= \frac{1}{m!}\sum_{k=1}^{m!}\lambda f^{\tau_k}_\mathit{uni}(R_{N_k}) + (1-\lambda )f^{\tau_k}_\mathit{duple}(R_{N_k})
	\\&=\lambda \frac{1}{m!} \sum_{k=1}^{m!} f^{\tau_k}_\mathit{uni}(R_{N_k}) + (1-\lambda)\frac{1}{m!} \sum_{k=1}^{m!} f^{\tau_k}_\mathit{duple}(R_{N_k})\\
	&=\lambda f^*_\mathit{uni}(R_{N_k}) + (1-\lambda ) f^*_\mathit{duple}(R_{N_k})
	\end{align*}
	
	Note that the definitions of $f^*_\mathit{uni}$ and $f^*_\mathit{duple}$ entail that $f^*_\mathit{uni}(R^{x,y},y)=f^*_\mathit{uni}(R^{\tau(x),\tau(y)},\tau(y))$ and $f^*_\mathit{duple}(R^{x,y},y)=f^*_\mathit{duple}(R^{\tau(x),\tau(y)},\tau(y))$ for every permutation $\tau:A\rightarrow A$. For $f^*_\mathit{uni}$, this follows from the following equations and a symmetric argument holds for $f^*_\mathit{duple}$. 
	\begin{align*}
	f^*_\mathit{uni}(R^{x,y},y)&=\frac{1}{m!}\sum_{k=1}^{m!} f^{\tau_k}_\mathit{uni}(R_{N_k}^{x,y},y) \\
	&=\frac{1}{m!}\sum_{k=1}^{m!}f_\mathit{uni}(\tau_k(R_{N_k}^{x,y}),\tau_k(y)) \\
	&=\frac{1}{m!}\sum_{k=1}^{m!} f_\mathit{uni}(\tau_k(\rho(R_{N_k}^{x,y})),\tau_k(\rho(y))) \\
	&=f^*_\mathit{uni}(R^{\rho(x), \rho(y)}, \rho(y)))
	\end{align*} 
	
	The first two equations rely only on our definitions. The third equation follows because $\{\tau \circ \rho\colon \tau\in\Tau\}=\Tau=\{\tau_k\colon k \in \{1, \dots, m!\}\}$ for every permutation $\rho:A\rightarrow A$, where $\Tau$ is the set of all permutations on $A$. 
	
	Finally, we show that $f^*$ violates $\beta$-\emph{ex post} efficiency for every $\beta<\frac{1}{m}$, which entails that $f$ also violates this axiom. We use \Cref{lem:dupleExPost} and \Cref{lem:expostUnilateral} for this as these lemmas imply that $f^*_\mathit{duple}$ and $f^*_\mathit{uni}$ violate $\beta$-\emph{ex post} efficiency. Note for this that $f^*_\mathit{uni}$ is $0$-randomly dictatorial as otherwise, $f^*$ cannot be $0$-randomly dictatorial. Hence, there are profiles $R^{1}$ and $R^{2}$, and alternatives $x_1$, $y_1$, $x_2$, and $y_2$ such that $x_i$ Pareto-dominates $y_i$ in $R^i$ for $i\in \{1,2\}$, $f^*_\mathit{uni}(R^1,y_1)\geq\frac{1}{m}$, and $f^*_\mathit{duple}(R^2,y_2)\geq\frac{1}{m}$. Hence, we derive from strategyproofness that $f^*_\mathit{uni}(R^{x_1,y_1},y_1)\geq\frac{1}{m}$ and $f^*_\mathit{duple}(R^{x_2,y_2},y_2)\geq\frac{1}{m}$. Finally, it follows from the symmetry of $f^{*}_\mathit{uni}$ and $f^*_\mathit{duple}$ with respect to the profiles $R^{x,y}$ that $f^*_\mathit{uni}(R^{x,y},y)\geq\frac{1}{m}$ and $f^*_\mathit{duple}(R^{x,y},y)\geq\frac{1}{m}$ for all alternatives $x,y\in A$. Consequently, we conclude that $f^*(R^{x,y},y)=\lambda f^*_\mathit{uni}(R^{x,y},y)+(1-\lambda)f^*_\mathit{duple}(R^{x,y},y)\geq\frac{1}{m}$ for all $x,y\in A$. This means that $f^*$ and therefore also $f$ violate $\beta$-\emph{ex post} efficiency for every $\beta<\frac{1}{m}$.
\end{proof}

As last result, we discuss the proof of \Cref{thm:contrd}. 

\contrd*
\begin{proof}
	
	Just as for \Cref{thm:Cond}, we need to show two claims: on the one hand, there is for every $\epsilon\in [0,1]$ no strategyproof and $\frac{1 - \epsilon}{m}$-\emph{ex post} efficient SDS that is $\gamma$-randomly dictatorial for $\gamma<\epsilon$. On the other hand, we need to prove that every strategyproof and $\epsilon$-randomly dictatorial SDS that satisfies anonymity, neutrality, and $\frac{1 - \epsilon}{m}$-\emph{ex post} efficiency is a mixture of the uniform random dictatorship and the uniform lottery.\bigskip
	
	\textbf{Claim 1: For all $\epsilon\in[0,1]$, every strategyproof and $\frac{1 - \epsilon}{m}$-\emph{ex post} efficient SDS is $\gamma$-randomly dictatorial for $\gamma \ge \epsilon$.}
	
	 Consider an arbitrary SDS $f$ that is strategyproof and $\frac{1 - \epsilon}{m}$-\emph{ex post} efficient for some $\epsilon\in [0,1]$. By the defintion of $\gamma$-randomly dictatorial SDSs, there is a maximal $\gamma\in[0,1]$ such that $f$ can be represented as $f = \gamma d + (1 - \gamma) g$, where $d$ is a random dictatorship and $g$ is another strategyproof SDS. We need to show that $\gamma\geq \epsilon$. First, note that if $\gamma=1$, this is trivially the case since $\epsilon\in[0,1]$. On the other hand, if $\gamma<1$, the maximality of $\gamma$ entails that the SDS $g$ is $0$-randomly dictatorial. Hence, \Cref{lem:expost} shows that $g$ is at best $\frac{1}{m}$-\emph{ex post} efficient, i.e, there is a profile $R$ with a Pareto-dominated alternative $x$ such that $g(R,x)\geq \frac{1}{m}$. Since $f$ is $\frac{1-\epsilon}{m}$-\emph{ex post} efficient, we derive therefore the following inequality.
	\begin{equation*}
	\frac{1-\epsilon}{m}\geq f(R, x) = \gamma d(R, x) + (1 - \gamma) g(R, x) \ge \frac{1 - \gamma}{m}
	\end{equation*}
	
	This inequality is equivalent to $\epsilon\leq\gamma$ and therefore proves the claim.\bigskip
	
	\textbf{Claim 2: For all $\epsilon\in[0,1]$, every strategyproof and $\epsilon$-randomly dictatorial SDS that satisfies anonymity, neutrality, and $\frac{1-\epsilon}{m}$-\emph{ex post} efficiency is a mixture of the uniform random dictatorship and the uniform lottery.}
	
	Consider an arbitrary $\epsilon\in[0,1]$ and let $f$ denote an SDS for $m\geq 4$ alternatives that satisfies all axioms listed above. In particular, $f$ is $\epsilon$-randomly dictatorial and therefore, it can be represented a $f=\epsilon d + (1-\epsilon) g$, where $d$ is a random dictatorship and $g$ another strategyproof SDS. As first step, we show that $d$ needs to be the uniform random dictatorship. Note for this that anonymity implies that the values $\gamma_1, \dots, \gamma_n$ introduced in \Cref{lem:zerodict} are for all voters equal, i.e., $\gamma_i=\gamma_j$ for all $i,j\in N$. A close inspection of the proof of \Cref{lem:zerodict} reveals therefore that $d$ needs to be the uniform random dictatorship because we show for this lemma that, given the values $\gamma_i$, $f$ can be represented as $f=\sum_{i\in N} \gamma_i + (1-\sum_{i\in N}\gamma_i)g$. Here, $d_i$ denotes the dictatorial SDS of voter $i$. In particular, this means that $f$ is the uniform random dictatorship if $\epsilon=1$, which shows that our claim holds in this case.
	 
	Next, assume that $\epsilon<1$. In this case, the maximality of $\epsilon$ implies that the SDS $g$ is $0$-randomly dictatorial. Furthermore, $g$ needs to satisfy $\frac{1}{m}$-\emph{ex post} efficiency as otherwise, there is a profile $R$ and an alternative $x$ such that $f(R,x)=\epsilon d(R,x)+(1-\epsilon)g(R,x)>\frac{1-\epsilon}{m}$. This contradicts, however, the assumption that $f$ is $\frac{1-\epsilon}{m}$-\emph{ex post} efficient. As last point on $g$, observe that it is also anonymous and neutral as both $d$ and $f$ satisfy these axioms. 
	 
	Since $g$ is an anonymous, neutral, and strategyproof SDS, we can use \Cref{thm:Barbera} to represent $g$ as mixture of a point voting SDS $g_\mathit{point}$ and a supporting size SDS $g_\mathit{sup}$. These two SDSs are $0$-randomly dictatorial because $g$ satisfies this axiom. Furthermore, neither $g_\mathit{point}$ nor $g_\mathit{sup}$ can satisfy $\beta$-\emph{ex post} efficiency for $\beta<\frac{1}{m}$ because of \Cref{lem:expost}. Since $g$ is $\frac{1}{m}$-\emph{ex post} efficient, it follows from this observation that both $g_\mathit{point}$ and $g_\mathit{sup}$ need to satisfy this axiom, too. We show next that this implies that both $g_\mathit{point}$ and $g_\mathit{sup}$ always return the uniform lottery. 
	 
	First, consider $g_\mathit{point}$ and let $(a_1, \dots, a_m)$ denote its scoring vector. Our goal is to show that $a_1\leq \frac{1}{mn}$ because this implies that $g_\mathit{point}$ is the uniform lottery. This claim follows from the definition of scoring rules which requires that $a_1\geq a_2\geq\dots\geq a_m\geq 0$ and $\sum_{i=1}^m a_i=\frac{1}{n}$. If $a_1\leq \frac{1}{mn}$, this is only possible when $a_i=\frac{1}{mn}$ for all $i\in \{1,\dots, m\}$, which is the scoring vector of the uniform lottery. For showing that $a_1\leq \frac{1}{mn}$, note first that $a_1=a_2$ follows from \Cref{lem:zerodict} since $g_\mathit{point}$ is $0$-randomly dictatorial. Next, consider the profile $R$ in which all voters rank $a$ first and $b$ second. It follows from the definition of point voting SDSs that $g_\mathit{point}(R,b)=na_2$ and from $\frac{1}{m}$-\emph{ex post} efficiency that $g_\mathit{point}(R,b)\leq \frac{1}{m}$. Hence, we conclude that $a_1=a_2\leq \frac{1}{mn}$, which proves that $g_\mathit{point}$ is the uniform lottery. 
	 
	As last step, consider the supporting size SDS $g_\mathit{sup}$ and let $(b_n, \dots, b_0)$ denote its defining vector. Our goal is to show that $b_n\leq \frac{1}{m(m-1)}$ because the definition of supporting size SDSs entails then that $g_\mathit{sup}$ is the uniform lottery. In more detail, it must hold that $b_n\geq b_{n-1}\geq \dots\geq b_0$ and $b_{n-i}+b_{i}=\frac{2}{m(m-1)}$ for all $i\in \{0,\dots, n\}$. If $b_n\leq \frac{1}{m(m-1)}$, the conditions can only be satisfied when $b_i=\frac{1}{m(m-1)}$ for all $i\in \{0, \dots, n\}$, which is the scoring vector of the uniform lottery. It remains to show that $b_n\leq \frac{1}{m(m-1)}$. Consider for this the profile $R$ in which all voters agree that $a$ is the best and $b$ the second best alternative. It follows from the definition of supporting size SDSs that $g_\mathit{sup}(R,b)=(m-2)b_n + b_0=(m-3)b_n + \frac{2}{m(m-1)}$ and $\frac{1}{m}$-\emph{ex post} efficiency requires that $g_\mathit{sup}(R,b)\leq \frac{1}{m}$. Combining these two observations and solving for $b_n$ shows that $b_n\leq \frac{1}{m(m-1)}$ if $m \ge 4$, which proves that $g_{\mathit{sup}}$ is also the uniform lottery.
	 
	Since both $g_\mathit{point}$ and $g_\mathit{sup}$ need to be the uniform lottery, it follows also that $g$ itself is the uniform lottery. Thus, $f$ is indeed a mixture of the uniform random dictatorship and the uniform lottery.
\end{proof}


\begin{thebibliography}{30}

\ifx \showCODEN    \undefined \def \showCODEN     #1{\unskip}     \fi
\ifx \showDOI      \undefined \def \showDOI       #1{#1}\fi
\ifx \showISBNx    \undefined \def \showISBNx     #1{\unskip}     \fi
\ifx \showISBNxiii \undefined \def \showISBNxiii  #1{\unskip}     \fi
\ifx \showISSN     \undefined \def \showISSN      #1{\unskip}     \fi
\ifx \showLCCN     \undefined \def \showLCCN      #1{\unskip}     \fi
\ifx \shownote     \undefined \def \shownote      #1{#1}          \fi
\ifx \showarticletitle \undefined \def \showarticletitle #1{#1}   \fi
\ifx \showURL      \undefined \def \showURL       {\relax}        \fi

\providecommand\bibfield[2]{#2}
\providecommand\bibinfo[2]{#2}
\providecommand\natexlab[1]{#1}
\providecommand\showeprint[2][]{arXiv:#2}

\bibitem[\protect\citeauthoryear{Abdulkadiro{\u{g}}lu and
  S{\"o}nmez}{Abdulkadiro{\u{g}}lu and S{\"o}nmez}{1998}]%
        {AbSo98a}
\bibfield{author}{\bibinfo{person}{A. Abdulkadiro{\u{g}}lu} {and}
  \bibinfo{person}{T. S{\"o}nmez}.} \bibinfo{year}{1998}\natexlab{}.
\newblock \showarticletitle{Random Serial Dictatorship and the Core from Random
  Endowments in House Allocation Problems}.
\newblock \bibinfo{journal}{\emph{Econometrica}} \bibinfo{volume}{66},
  \bibinfo{number}{3} (\bibinfo{year}{1998}), \bibinfo{pages}{689--701}.
\newblock


\bibitem[\protect\citeauthoryear{Aziz, Brandl, Brandt, and Brill}{Aziz
  et~al\mbox{.}}{2018}]%
        {ABBB15a}
\bibfield{author}{\bibinfo{person}{H. Aziz}, \bibinfo{person}{F. Brandl},
  \bibinfo{person}{F. Brandt}, {and} \bibinfo{person}{M. Brill}.}
  \bibinfo{year}{2018}\natexlab{}.
\newblock \showarticletitle{On the Tradeoff between Efficiency and
  Strategyproofness}.
\newblock \bibinfo{journal}{\emph{Games and Economic Behavior}}
  \bibinfo{volume}{110} (\bibinfo{year}{2018}), \bibinfo{pages}{1--18}.
\newblock


\bibitem[\protect\citeauthoryear{Barber{\`a}}{Barber{\`a}}{1979a}]%
        {Barb79b}
\bibfield{author}{\bibinfo{person}{S. Barber{\`a}}.}
  \bibinfo{year}{1979}\natexlab{a}.
\newblock \showarticletitle{Majority and Positional Voting in a Probabilistic
  Framework}.
\newblock \bibinfo{journal}{\emph{Review of Economic Studies}}
  \bibinfo{volume}{46}, \bibinfo{number}{2} (\bibinfo{year}{1979}),
  \bibinfo{pages}{379--389}.
\newblock


\bibitem[\protect\citeauthoryear{Barber{\`a}}{Barber{\`a}}{1979b}]%
        {Barb79a}
\bibfield{author}{\bibinfo{person}{S. Barber{\`a}}.}
  \bibinfo{year}{1979}\natexlab{b}.
\newblock \showarticletitle{A Note on Group Strategy-Proof Decision Schemes}.
\newblock \bibinfo{journal}{\emph{Econometrica}} \bibinfo{volume}{47},
  \bibinfo{number}{3} (\bibinfo{year}{1979}), \bibinfo{pages}{637--640}.
\newblock


\bibitem[\protect\citeauthoryear{Beno{\^\i}t}{Beno{\^\i}t}{2002}]%
        {Beno02a}
\bibfield{author}{\bibinfo{person}{J.-P. Beno{\^\i}t}.}
  \bibinfo{year}{2002}\natexlab{}.
\newblock \showarticletitle{Strategic Manipulation in Voting Games When
  Lotteries and Ties Are Permitted}.
\newblock \bibinfo{journal}{\emph{Journal of Economic Theory}}
  \bibinfo{volume}{102}, \bibinfo{number}{2} (\bibinfo{year}{2002}),
  \bibinfo{pages}{421--436}.
\newblock


\bibitem[\protect\citeauthoryear{Bogomolnaia and Moulin}{Bogomolnaia and
  Moulin}{2001}]%
        {BoMo01a}
\bibfield{author}{\bibinfo{person}{A. Bogomolnaia} {and} \bibinfo{person}{H.
  Moulin}.} \bibinfo{year}{2001}\natexlab{}.
\newblock \showarticletitle{A New Solution to the Random Assignment Problem}.
\newblock \bibinfo{journal}{\emph{Journal of Economic Theory}}
  \bibinfo{volume}{100}, \bibinfo{number}{2} (\bibinfo{year}{2001}),
  \bibinfo{pages}{295--328}.
\newblock


\bibitem[\protect\citeauthoryear{Brandl, Brandt, Eberl, and Geist}{Brandl
  et~al\mbox{.}}{2018}]%
        {BBEG16a}
\bibfield{author}{\bibinfo{person}{F. Brandl}, \bibinfo{person}{F. Brandt},
  \bibinfo{person}{M. Eberl}, {and} \bibinfo{person}{C. Geist}.}
  \bibinfo{year}{2018}\natexlab{}.
\newblock \showarticletitle{Proving the Incompatibility of Efficiency and
  Strategyproofness via {SMT} Solving}.
\newblock \bibinfo{journal}{\emph{J. ACM}} \bibinfo{volume}{65},
  \bibinfo{number}{2} (\bibinfo{year}{2018}), \bibinfo{pages}{1--28}.
\newblock


\bibitem[\protect\citeauthoryear{Brandt, Conitzer, and Endriss}{Brandt
  et~al\mbox{.}}{2013}]%
        {BCE12a}
\bibfield{author}{\bibinfo{person}{F. Brandt}, \bibinfo{person}{V. Conitzer},
  {and} \bibinfo{person}{U. Endriss}.} \bibinfo{year}{2013}\natexlab{}.
\newblock \showarticletitle{Computational Social Choice}.
\newblock In \bibinfo{booktitle}{\emph{Multiagent Systems}
  (\bibinfo{edition}{2nd} ed.)},
  \bibfield{editor}{\bibinfo{person}{G.~Wei{\ss}}} (Ed.).
  \bibinfo{publisher}{MIT Press}, Chapter~6, \bibinfo{pages}{213--283}.
\newblock


\bibitem[\protect\citeauthoryear{Brandt, Conitzer, Endriss, Lang, and
  Procaccia}{Brandt et~al\mbox{.}}{2016}]%
        {BCE+14a}
\bibfield{editor}{\bibinfo{person}{F. Brandt}, \bibinfo{person}{V. Conitzer},
  \bibinfo{person}{U. Endriss}, \bibinfo{person}{J. Lang}, {and}
  \bibinfo{person}{A. Procaccia}} (Eds.). \bibinfo{year}{2016}\natexlab{}.
\newblock \bibinfo{booktitle}{\emph{Handbook of Computational Social Choice}}.
\newblock \bibinfo{publisher}{Cambridge University Press}.
\newblock


\bibitem[\protect\citeauthoryear{Chatterji, Sen, and Zeng}{Chatterji
  et~al\mbox{.}}{2014}]%
        {CSZ14a}
\bibfield{author}{\bibinfo{person}{S. Chatterji}, \bibinfo{person}{A. Sen},
  {and} \bibinfo{person}{H. Zeng}.} \bibinfo{year}{2014}\natexlab{}.
\newblock \showarticletitle{Random dictatorship domains}.
\newblock \bibinfo{journal}{\emph{Games and Economic Behavior}}
  \bibinfo{volume}{86} (\bibinfo{year}{2014}), \bibinfo{pages}{212--236}.
\newblock


\bibitem[\protect\citeauthoryear{Che and Kojima}{Che and Kojima}{2010}]%
        {ChKo10a}
\bibfield{author}{\bibinfo{person}{Y.-K. Che} {and} \bibinfo{person}{F.
  Kojima}.} \bibinfo{year}{2010}\natexlab{}.
\newblock \showarticletitle{Asymptotic Equivalence of Probabilistic Serial and
  Random Priority Mechanisms}.
\newblock \bibinfo{journal}{\emph{Econometrica}} \bibinfo{volume}{78},
  \bibinfo{number}{5} (\bibinfo{year}{2010}), \bibinfo{pages}{1625--1672}.
\newblock


\bibitem[\protect\citeauthoryear{Conitzer and Sandholm}{Conitzer and
  Sandholm}{2006}]%
        {CoSa06c}
\bibfield{author}{\bibinfo{person}{V. Conitzer} {and} \bibinfo{person}{T.
  Sandholm}.} \bibinfo{year}{2006}\natexlab{}.
\newblock \showarticletitle{Nonexistence of voting rules that are usually hard
  to manipulate}. In \bibinfo{booktitle}{\emph{Proceedings of the 21st National
  Conference on Artificial Intelligence (AAAI)}}. \bibinfo{pages}{627--634}.
\newblock


\bibitem[\protect\citeauthoryear{Duggan}{Duggan}{1996}]%
        {Dugg96a}
\bibfield{author}{\bibinfo{person}{J. Duggan}.}
  \bibinfo{year}{1996}\natexlab{}.
\newblock \showarticletitle{A geometric proof of {G}ibbard's random
  dictatorship theorem}.
\newblock \bibinfo{journal}{\emph{Economic Theory}} \bibinfo{volume}{7},
  \bibinfo{number}{2} (\bibinfo{year}{1996}), \bibinfo{pages}{365--369}.
\newblock


\bibitem[\protect\citeauthoryear{Dutta, Peters, and Sen}{Dutta
  et~al\mbox{.}}{2002}]%
        {DPS02a}
\bibfield{author}{\bibinfo{person}{B. Dutta}, \bibinfo{person}{H. Peters},
  {and} \bibinfo{person}{A. Sen}.} \bibinfo{year}{2002}\natexlab{}.
\newblock \showarticletitle{Strategy-Proof Probabilistic Mechanisms in
  Economies with Pure Public Goods}.
\newblock \bibinfo{journal}{\emph{Journal of Economic Theory}}
  \bibinfo{volume}{106}, \bibinfo{number}{2} (\bibinfo{year}{2002}),
  \bibinfo{pages}{392--416}.
\newblock


\bibitem[\protect\citeauthoryear{Dutta, Peters, and Sen}{Dutta
  et~al\mbox{.}}{2007}]%
        {DPS07a}
\bibfield{author}{\bibinfo{person}{B. Dutta}, \bibinfo{person}{H. Peters},
  {and} \bibinfo{person}{A. Sen}.} \bibinfo{year}{2007}\natexlab{}.
\newblock \showarticletitle{Strategy-proof cardinal decision schemes}.
\newblock \bibinfo{journal}{\emph{Social Choice and Welfare}}
  \bibinfo{volume}{28}, \bibinfo{number}{1} (\bibinfo{year}{2007}),
  \bibinfo{pages}{163--179}.
\newblock


\bibitem[\protect\citeauthoryear{Ehlers, Peters, and Storcken}{Ehlers
  et~al\mbox{.}}{2002}]%
        {EPS02a}
\bibfield{author}{\bibinfo{person}{L. Ehlers}, \bibinfo{person}{H. Peters},
  {and} \bibinfo{person}{T. Storcken}.} \bibinfo{year}{2002}\natexlab{}.
\newblock \showarticletitle{Strategy-Proof Probabilistic Decision Schemes for
  One-Dimensional Single-Peaked Preferences}.
\newblock \bibinfo{journal}{\emph{Journal of Economic Theory}}
  \bibinfo{volume}{105}, \bibinfo{number}{2} (\bibinfo{year}{2002}),
  \bibinfo{pages}{408--434}.
\newblock


\bibitem[\protect\citeauthoryear{Fishburn}{Fishburn}{1977}]%
        {Fish77a}
\bibfield{author}{\bibinfo{person}{P.~C. Fishburn}.}
  \bibinfo{year}{1977}\natexlab{}.
\newblock \showarticletitle{Condorcet Social Choice Functions}.
\newblock \bibinfo{journal}{\emph{SIAM J. Appl. Math.}} \bibinfo{volume}{33},
  \bibinfo{number}{3} (\bibinfo{year}{1977}), \bibinfo{pages}{469--489}.
\newblock


\bibitem[\protect\citeauthoryear{Gibbard}{Gibbard}{1973}]%
        {Gibb73a}
\bibfield{author}{\bibinfo{person}{A. Gibbard}.}
  \bibinfo{year}{1973}\natexlab{}.
\newblock \showarticletitle{Manipulation of Voting Schemes: {A} General
  Result}.
\newblock \bibinfo{journal}{\emph{Econometrica}} \bibinfo{volume}{41},
  \bibinfo{number}{4} (\bibinfo{year}{1973}), \bibinfo{pages}{587--601}.
\newblock


\bibitem[\protect\citeauthoryear{Gibbard}{Gibbard}{1977}]%
        {Gibb77a}
\bibfield{author}{\bibinfo{person}{A. Gibbard}.}
  \bibinfo{year}{1977}\natexlab{}.
\newblock \showarticletitle{Manipulation of schemes that mix voting with
  chance}.
\newblock \bibinfo{journal}{\emph{Econometrica}} \bibinfo{volume}{45},
  \bibinfo{number}{3} (\bibinfo{year}{1977}), \bibinfo{pages}{665--681}.
\newblock


\bibitem[\protect\citeauthoryear{Heckelman}{Heckelman}{2003}]%
        {Heck03a}
\bibfield{author}{\bibinfo{person}{J.~C. Heckelman}.}
  \bibinfo{year}{2003}\natexlab{}.
\newblock \showarticletitle{Probabilistic {B}orda rule voting}.
\newblock \bibinfo{journal}{\emph{Social Choice and Welfare}}
  \bibinfo{volume}{21} (\bibinfo{year}{2003}), \bibinfo{pages}{455--468}.
\newblock


\bibitem[\protect\citeauthoryear{Heckelman and Chen}{Heckelman and
  Chen}{2013}]%
        {HeCh13a}
\bibfield{author}{\bibinfo{person}{J.~C. Heckelman} {and}
  \bibinfo{person}{F.~H. Chen}.} \bibinfo{year}{2013}\natexlab{}.
\newblock \showarticletitle{Strategy Proof Scoring Rule Lotteries for Multiple
  Winners}.
\newblock \bibinfo{journal}{\emph{Journal of Public Economic Theory}}
  \bibinfo{volume}{15}, \bibinfo{number}{1} (\bibinfo{year}{2013}),
  \bibinfo{pages}{103--123}.
\newblock


\bibitem[\protect\citeauthoryear{Hylland}{Hylland}{1980}]%
        {Hyll80a}
\bibfield{author}{\bibinfo{person}{A. Hylland}.}
  \bibinfo{year}{1980}\natexlab{}.
\newblock \bibinfo{title}{Strategyproofness of Voting Procedures with Lotteries
  as Outcomes and Infinite Sets of Strategies}.  (\bibinfo{year}{1980}).
\newblock
\newblock
\shownote{{M}imeo.}


\bibitem[\protect\citeauthoryear{Nandeibam}{Nandeibam}{1997}]%
        {Nand97a}
\bibfield{author}{\bibinfo{person}{S. Nandeibam}.}
  \bibinfo{year}{1997}\natexlab{}.
\newblock \showarticletitle{An alternative proof of {G}ibbard's random
  dictatorship result}.
\newblock \bibinfo{journal}{\emph{Social Choice and Welfare}}
  \bibinfo{volume}{15}, \bibinfo{number}{4} (\bibinfo{year}{1997}),
  \bibinfo{pages}{509--519}.
\newblock


\bibitem[\protect\citeauthoryear{Nandeibam}{Nandeibam}{2013}]%
        {Nand12a}
\bibfield{author}{\bibinfo{person}{S. Nandeibam}.}
  \bibinfo{year}{2013}\natexlab{}.
\newblock \showarticletitle{The structure of decision schemes with cardinal
  preferences}.
\newblock \bibinfo{journal}{\emph{Review of Economic Design}}
  \bibinfo{volume}{17}, \bibinfo{number}{3} (\bibinfo{year}{2013}),
  \bibinfo{pages}{205--238}.
\newblock


\bibitem[\protect\citeauthoryear{Nisan, Roughgarden, Tardos, and
  Vazirani}{Nisan et~al\mbox{.}}{2007}]%
        {NRTV07a}
\bibfield{author}{\bibinfo{person}{N. Nisan}, \bibinfo{person}{T. Roughgarden},
  \bibinfo{person}{{\'E}. Tardos}, {and} \bibinfo{person}{V. Vazirani}.}
  \bibinfo{year}{2007}\natexlab{}.
\newblock \bibinfo{booktitle}{\emph{Algorithmic Game Theory}}.
\newblock \bibinfo{publisher}{Cambridge University Press}.
\newblock


\bibitem[\protect\citeauthoryear{Procaccia}{Procaccia}{2010}]%
        {Proc10a}
\bibfield{author}{\bibinfo{person}{A.~D. Procaccia}.}
  \bibinfo{year}{2010}\natexlab{}.
\newblock \showarticletitle{Can approximation circumvent
  {G}ibbard-{S}atterthwaite?}. In \bibinfo{booktitle}{\emph{Proceedings of the
  24th AAAI Conference on Artificial Intelligence (AAAI)}}.
  \bibinfo{pages}{836--841}.
\newblock


\bibitem[\protect\citeauthoryear{Satterthwaite}{Satterthwaite}{1975}]%
        {Satt75a}
\bibfield{author}{\bibinfo{person}{M.~A. Satterthwaite}.}
  \bibinfo{year}{1975}\natexlab{}.
\newblock \showarticletitle{Strategy-Proofness and {A}rrow's Conditions:
  {E}xistence and Correspondence Theorems for Voting Procedures and Social
  Welfare Functions}.
\newblock \bibinfo{journal}{\emph{Journal of Economic Theory}}
  \bibinfo{volume}{10}, \bibinfo{number}{2} (\bibinfo{year}{1975}),
  \bibinfo{pages}{187--217}.
\newblock


\bibitem[\protect\citeauthoryear{Sen}{Sen}{2011}]%
        {Sen11a}
\bibfield{author}{\bibinfo{person}{A. Sen}.} \bibinfo{year}{2011}\natexlab{}.
\newblock \showarticletitle{The {G}ibbard random dictatorship theorem: a
  generalization and a new proof}.
\newblock \bibinfo{journal}{\emph{SERIEs}} \bibinfo{volume}{2},
  \bibinfo{number}{4} (\bibinfo{year}{2011}), \bibinfo{pages}{515--527}.
\newblock


\bibitem[\protect\citeauthoryear{Shoham and Leyton-Brown}{Shoham and
  Leyton-Brown}{2009}]%
        {ShLe08a}
\bibfield{author}{\bibinfo{person}{Y. Shoham} {and} \bibinfo{person}{K.
  Leyton-Brown}.} \bibinfo{year}{2009}\natexlab{}.
\newblock \bibinfo{booktitle}{\emph{Multiagent Systems: {A}lgorithmic,
  Game-Theoretic, and Logical Foundations}}.
\newblock \bibinfo{publisher}{Cambridge University Press}.
\newblock


\bibitem[\protect\citeauthoryear{Tanaka}{Tanaka}{2003}]%
        {Tana03a}
\bibfield{author}{\bibinfo{person}{Y. Tanaka}.}
  \bibinfo{year}{2003}\natexlab{}.
\newblock \showarticletitle{An alternative proof of {G}ibbard's random
  dictatorship theorem}.
\newblock \bibinfo{journal}{\emph{Review of Economic Design}}
  \bibinfo{volume}{8} (\bibinfo{year}{2003}), \bibinfo{pages}{319--328}.
\newblock


\end{thebibliography}
\end{document}